\pgfplotsset{compat=1.17}
\theoremstyle{plain}
\newtheorem{theorem}{Theorem}[section]
\newtheorem{proposition}[theorem]{Proposition}
\newtheorem{corollary}[theorem]{Corollary}
\theoremstyle{definition}
\newtheorem{definition}[theorem]{Definition}
\newtheorem{remark}[theorem]{Remark}
\newcommand{\palma}{\textsc{Palma}}
\newcommand{\Rmax}{\mathbb{R}_{\max}}
\newcommand{\Rmin}{\mathbb{R}_{\min}}
\newcommand{\bigO}{\mathcal{O}}
\newcommand{\nnz}{\mathrm{nnz}}
\newcommand{\mat}[1]{\mathbf{#1}}
\newcommand{\vect}[1]{\mathbf{#1}}
\definecolor{codegreen}{rgb}{0,0.6,0}
\definecolor{codegray}{rgb}{0.5,0.5,0.5}
\definecolor{codepurple}{rgb}{0.58,0,0.82}
\definecolor{backcolour}{rgb}{0.97,0.97,0.97}
\lstdefinestyle{cstyle}{
    backgroundcolor=\color{backcolour},
    commentstyle=\color{codegreen},
    keywordstyle=\color{blue},
    numberstyle=\tiny\color{codegray},
    stringstyle=\color{codepurple},
    basicstyle=\ttfamily\small,
    breakatwhitespace=false,
    breaklines=true,
    captionpos=b,
    keepspaces=true,
    numbers=left,
    numbersep=5pt,
    showspaces=false,
    showstringspaces=false,
    showtabs=false,
    tabsize=2,
    language=C
}
\title{%
    \textbf{PALMA: A Lightweight Tropical Algebra Library for ARM-Based Embedded Systems}\\[0.5em]
    \large SIMD-Accelerated Semiring Linear Algebra with Spectral Analysis and Embedded Case Studies
}
\author[1,2,3]{Gnankan Landry Regis N'guessan}
\affil[1]{Axiom Research Group}
\affil[2]{Department of Applied Mathematics and Computational Science,
The Nelson Mandela African Institution of Science and Technology (NM-AIST),
Arusha, Tanzania}
\affil[3]{African Institute for Mathematical Sciences (AIMS),
Research and Innovation Centre (RIC), Kigali, Rwanda}
\affil[ ]{\textit{Email: rnguessan@aimsric.org}}
\date{}
\begin{document}

\maketitle

\begin{abstract}
Tropical algebra, including max-plus, min-plus, and related idempotent semirings, provides a unifying framework in which many optimization problems that are nonlinear in classical algebra become linear. This property makes tropical methods particularly well suited for shortest paths, scheduling, throughput analysis, and discrete event systems. Despite their theoretical maturity and practical relevance, existing tropical algebra implementations primarily target desktop or server environments and remain largely inaccessible on resource-constrained embedded platforms, where such optimization problems are most acute.
We present \palma{} (\textbf{P}arallel \textbf{A}lgebra \textbf{L}ibrary for \textbf{M}ax-plus \textbf{A}pplications), a lightweight, dependency-free C library that brings tropical linear algebra to ARM-based embedded systems. \palma{} implements a generic semiring abstraction with SIMD-accelerated kernels, enabling a single computational framework to support shortest paths, bottleneck paths, reachability, scheduling, and throughput analysis. The library supports five tropical semirings, dense and sparse (CSR) representations, tropical closure, and spectral analysis via maximum cycle mean computation.
We evaluate \palma{} on a Raspberry Pi 4 and demonstrate peak performance of 2,274 MOPS, speedups of up to 11.9 times over classical Bellman-Ford for single-source shortest paths, and sub-10 microsecond scheduling solves for real-time control workloads. Case studies in UAV control, IoT routing, and manufacturing systems show that tropical algebra enables efficient, predictable, and unified optimization directly on embedded hardware. \palma{} is released as open-source software under the MIT license.

\medskip
\noindent\textbf{Keywords:} Tropical algebra, max-plus algebra, idempotent semiring, ARM NEON, Raspberry Pi, embedded systems, real-time systems, shortest path algorithms, graph algorithms, sparse matrices, discrete event systems, open-source software
\end{abstract}

\section{Introduction}
\label{sec:introduction}

\subsection{Motivation and Background}

The field of tropical mathematics has emerged as a powerful framework for solving optimization problems by transforming them into algebraic operations over idempotent semirings. In tropical algebra, conventional addition is replaced by taking the maximum (or minimum), and conventional multiplication is replaced by addition. This seemingly simple redefinition has profound consequences: many problems that are inherently nonlinear in classical mathematics, such as finding shortest paths, optimizing schedules, and computing system throughput, become \emph{linear over the tropical semiring}, expressible as matrix products and fixed-point iterations \cite{baccelli1992synchronization, butkovic2010max}.

Consider the classical shortest path problem: given a weighted graph, find the path of minimum total weight between two vertices. In standard algebra, this requires iterative comparison and selection operations that defy matrix formulation. However, in the min-plus tropical semiring, the shortest path computation reduces to matrix multiplication:
\begin{equation}
    d_{ij}^{(k)} = \min_{\ell}\left(d_{i\ell}^{(k-1)} + w_{\ell j}\right)
\end{equation}
where $\min$ plays the role of addition and $+$ plays the role of multiplication. This ``linearization'' enables the direct application of powerful linear algebraic techniques to optimization problems \cite{gondran2008graphs, mohri2002semiring}. In \palma{}, this recurrence is expressed as a single API call: \texttt{palma\_matvec(A, d, d\_new, PALMA\_MINPLUS)}, and the all-pairs solution is simply \texttt{palma\_matrix\_closure(A, PALMA\_MINPLUS)}.

Despite four decades of theoretical development since the foundational work of Cuninghame-Green \cite{cuninghame1979minimax} and the comprehensive treatise by Baccelli et al. \cite{baccelli1992synchronization}, practical implementations of tropical algebra remain surprisingly limited. Existing tools such as the ScicosLab MaxPlus toolbox \cite{gaubert2009max} and the polymake system \cite{gawrilow2000polymake, hampe2018tropical} target desktop computing environments and are unsuitable for resource-constrained embedded systems. This gap is particularly problematic because embedded systems (robotics, autonomous vehicles, industrial controllers, IoT networks) are precisely where real-time scheduling and optimal routing are most critical.

\subsection{The Embedded Systems Challenge}

Modern embedded platforms, exemplified by the Raspberry Pi family and similar ARM-based systems, present unique constraints:

\begin{enumerate}[leftmargin=*]
    \item \textbf{Limited computational resources}: While modern embedded processors like the ARM Cortex-A72 are capable, they operate at significantly lower frequencies than desktop CPUs and have smaller caches.
    
    \item \textbf{Memory constraints}: Embedded systems often have limited RAM, making memory-efficient data structures essential.
    
    \item \textbf{Real-time requirements}: Many embedded applications require deterministic execution times, often with deadlines in the microsecond to millisecond range.
    
    \item \textbf{Power efficiency}: Battery-powered and thermally-constrained systems demand operations that minimize energy consumption.
    
    \item \textbf{Deployment simplicity}: Embedded software should minimize external dependencies and be easily portable across platforms.
\end{enumerate}

These constraints motivate the development of \palma{}, a tropical algebra library designed from the ground up for embedded systems. By leveraging ARM NEON SIMD instructions for parallelism, sparse matrix representations for memory efficiency, and pure integer arithmetic for predictability, \palma{} brings the power of tropical mathematics to platforms where it can have the greatest practical impact.

\subsection{Contributions}

This paper makes the following contributions:

\begin{enumerate}[leftmargin=*]
    \item \textbf{Mathematical Foundations}: We provide a comprehensive, self-contained treatment of tropical algebra, including rigorous definitions, key theorems with proofs, and the fundamental algorithms. This serves both as a tutorial and as a reference for the implementation.
    
    \item \textbf{Library Design}: We present \palma{}, a complete tropical algebra library featuring:
    \begin{itemize}
        \item Five tropical semirings: max-plus ($\Rmax$), min-plus ($\Rmin$), max-min, min-max, and Boolean
        \item Dense and sparse (CSR format) matrix representations
        \item ARM NEON SIMD-optimized operations
        \item Tropical eigenvalue computation via Karp's algorithm
        \item High-level APIs for scheduling and graph algorithms
    \end{itemize}
    
    \item \textbf{Implementation}: We describe the implementation in approximately 2,000 lines of portable C99 code, including detailed discussion of NEON optimization strategies and sparse matrix techniques.
    
    \item \textbf{Experimental Evaluation}: We provide comprehensive benchmarks on Raspberry Pi 4, measuring:
    \begin{itemize}
        \item NEON vs. scalar performance across matrix sizes
        \item Scalability analysis from $16 \times 16$ to $1024 \times 1024$ matrices
        \item Comparison of all five semirings
        \item Dense vs. sparse crossover analysis
        \item Performance comparison with classical algorithms (Floyd-Warshall, Bellman-Ford)
    \end{itemize}
    
    \item \textbf{Case Studies}: We demonstrate practical applications through three detailed case studies:
    \begin{itemize}
        \item Real-time drone control system scheduling
        \item IoT sensor network routing optimization
        \item Manufacturing production line throughput analysis
    \end{itemize}
\end{enumerate}

\paragraph{Platform and Terminology Context.}
Throughout this paper, \emph{ARM-based embedded systems} refer to energy-efficient computing platforms built around ARM architectures commonly used in robotics, IoT, and cyber-physical systems, such as system-on-chip devices in the ARM Cortex-A series deployed in autonomous platforms, industrial controllers, and edge computing nodes. These processors emphasize predictable execution, moderate parallelism, and constrained memory resources rather than high clock frequencies or large caches. \emph{SIMD} (Single Instruction, Multiple Data) denotes a parallel execution model in which one instruction operates simultaneously on multiple data elements. On ARM platforms, SIMD functionality is provided by the \emph{NEON} extension, which offers 128-bit vector registers supporting parallel operations on four 32-bit integers. In \palma{}, NEON is used exclusively as a low-level optimization mechanism to accelerate core tropical semiring operations, such as max, min, and addition, while all algebraic definitions and algorithmic semantics remain independent of the underlying hardware and numeric representation.

\paragraph{Extensibility Beyond ARM.}
While the current implementation of \palma{} targets ARM-based embedded platforms, the library is designed around a hardware-agnostic semiring abstraction. This architectural choice enables future support for additional embedded architectures, such as RISC-V, DSP-based systems, and microcontroller-class platforms with SIMD or vector extensions, without altering the mathematical or algorithmic core. The long-term goal of \palma{} is to evolve as a portable computational foundation for real-world embedded optimization problems across heterogeneous hardware environments.

\subsection{Paper Organization}

The remainder of this paper is organized as follows. Section~\ref{sec:background} establishes the mathematical foundations of tropical algebra, including semiring theory, tropical linear algebra, and eigenvalue computation. Section~\ref{sec:design} describes the \palma{} library architecture and design decisions. Section~\ref{sec:implementation} details the implementation, with emphasis on NEON optimization and sparse matrix techniques. Section~\ref{sec:algorithms} presents the core algorithms implemented in \palma{}. Section~\ref{sec:experiments} provides comprehensive experimental evaluation on Raspberry Pi 4. Section~\ref{sec:casestudies} presents three detailed case studies demonstrating practical applications. Section~\ref{sec:related} discusses related work. Finally, Section~\ref{sec:conclusion} concludes the paper and outlines future directions.

\section{Mathematical Foundations}
\label{sec:background}

This section provides a rigorous treatment of tropical algebra, establishing the theoretical foundation for \palma{}. We begin with the general theory of semirings, then specialize to tropical semirings, develop tropical linear algebra, and conclude with spectral theory and the crucial connection to graph algorithms.

\paragraph{Notation Convention.}
Throughout this paper, we use $(S, \oplus, \otimes, \mathbf{0}, \mathbf{1})$ to denote a generic idempotent semiring, with $\oplus$ as ``addition'' and $\otimes$ as ``multiplication.'' We then specialize to five concrete semirings: \emph{max-plus} $(\Rmax)$, \emph{min-plus} $(\Rmin)$, \emph{max-min}, \emph{min-max}, and \emph{Boolean}. Matrices are denoted by bold uppercase letters ($\mat{A}, \mat{B}$), vectors by bold lowercase ($\vect{x}, \vect{y}$), and scalars by italic ($a, \lambda$). The tropical matrix product uses the same $\otimes$ symbol as scalar multiplication when context is clear.

\subsection{Semiring Theory}
\label{subsec:semirings}

\begin{definition}[Semiring]
\label{def:semiring}
A \emph{semiring} is an algebraic structure $(S, \oplus, \otimes, \mathbf{0}, \mathbf{1})$ where $S$ is a set, $\oplus$ and $\otimes$ are binary operations on $S$, and $\mathbf{0}, \mathbf{1} \in S$, satisfying:
\begin{enumerate}[label=(\roman*)]
    \item $(S, \oplus, \mathbf{0})$ is a commutative monoid:
    \begin{itemize}
        \item Associativity: $(a \oplus b) \oplus c = a \oplus (b \oplus c)$
        \item Commutativity: $a \oplus b = b \oplus a$
        \item Identity: $a \oplus \mathbf{0} = a$
    \end{itemize}
    \item $(S, \otimes, \mathbf{1})$ is a monoid:
    \begin{itemize}
        \item Associativity: $(a \otimes b) \otimes c = a \otimes (b \otimes c)$
        \item Identity: $a \otimes \mathbf{1} = \mathbf{1} \otimes a = a$
    \end{itemize}
    \item Multiplication distributes over addition:
    \begin{itemize}
        \item Left: $a \otimes (b \oplus c) = (a \otimes b) \oplus (a \otimes c)$
        \item Right: $(a \oplus b) \otimes c = (a \otimes c) \oplus (b \otimes c)$
    \end{itemize}
    \item The zero element annihilates:
    \begin{itemize}
        \item $a \otimes \mathbf{0} = \mathbf{0} \otimes a = \mathbf{0}$
    \end{itemize}
\end{enumerate}
\end{definition}

\begin{definition}[Idempotent Semiring]
A semiring is \emph{idempotent} if $a \oplus a = a$ for all $a \in S$.
\end{definition}

Idempotent semirings are central to tropical mathematics because they induce a natural partial order.

\begin{proposition}[Natural Order]
\label{prop:natural-order}
In an idempotent semiring, the relation $a \leq b \iff a \oplus b = b$ defines a partial order on $S$.
\end{proposition}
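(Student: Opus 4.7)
The plan is to verify the three defining axioms of a partial order, namely reflexivity, antisymmetry, and transitivity, directly from the semiring axioms and the idempotence hypothesis. Since the relation is defined purely in terms of $\oplus$, all three verifications will reduce to short algebraic manipulations involving only associativity, commutativity, and the idempotence law $a \oplus a = a$.

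First, I would establish reflexivity: by idempotence, $a \oplus a = a$, which is exactly the statement $a \leq a$. Next, for antisymmetry, I would assume $a \leq b$ and $b \leq a$, i.e.\ $a \oplus b = b$ and $b \oplus a = a$; applying commutativity of $\oplus$ to the first equation gives $b \oplus a = b$, so combined with the second equation we get $a = b \oplus a = b$. Finally, for transitivity, I would assume $a \oplus b = b$ and $b \oplus c = c$, and compute
\[
a \oplus c \;=\; a \oplus (b \oplus c) \;=\; (a \oplus b) \oplus c \;=\; b \oplus c \;=\; c,
\]
using associativity in the middle step and substituting the two hypotheses at either end. This yields $a \leq c$.

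There is no substantive obstacle here; the proof is entirely mechanical once one notices that the definition $a \leq b \iff a \oplus b = b$ interacts cleanly with the monoid structure on $(S, \oplus)$. The only subtlety worth flagging is that idempotence is used in exactly one place (reflexivity); antisymmetry relies on commutativity, and transitivity relies on associativity, so all parts of the commutative monoid axiom set are genuinely needed. I would keep the write-up to a single short paragraph per axiom, with the transitivity chain displayed as above for readability.
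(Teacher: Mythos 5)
Your proof is correct and follows essentially the same route as the paper's: reflexivity from idempotence, antisymmetry from commutativity of $\oplus$, and transitivity via the associativity chain $a \oplus c = a \oplus (b \oplus c) = (a \oplus b) \oplus c = b \oplus c = c$. No gaps; your added remark on exactly which axiom each step uses is a nice touch but does not change the substance.
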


\begin{proof}
We verify the order axioms:
\begin{itemize}
    \item \textbf{Reflexivity}: $a \oplus a = a$ by idempotency, so $a \leq a$.
    \item \textbf{Antisymmetry}: If $a \leq b$ and $b \leq a$, then $a \oplus b = b$ and $b \oplus a = a$. By commutativity, $a = b \oplus a = a \oplus b = b$.
    \item \textbf{Transitivity}: If $a \leq b$ and $b \leq c$, then $a \oplus b = b$ and $b \oplus c = c$. We have $a \oplus c = a \oplus (b \oplus c) = (a \oplus b) \oplus c = b \oplus c = c$, so $a \leq c$. \qedhere
\end{itemize}
\end{proof}

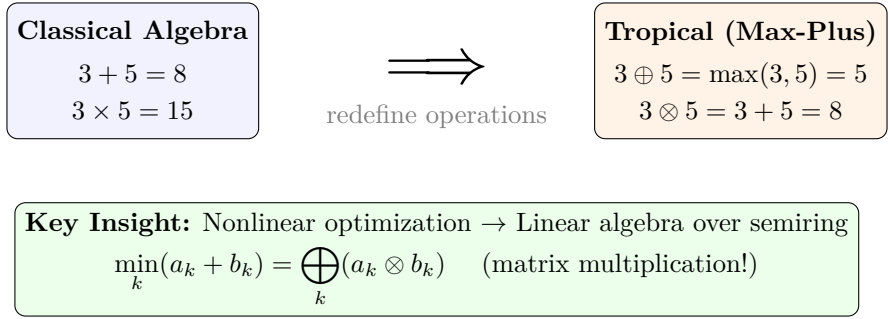
\begin{figure}[htbp]
\centering
\begin{tikzpicture}[
    box/.style={draw, rounded corners, minimum width=3.2cm, minimum height=1.8cm, align=center, font=\small},
    op/.style={font=\Large\bfseries, text=blue!70!black},
    result/.style={font=\large\bfseries, text=red!70!black},
    label/.style={font=\small\bfseries, text=gray}
]
\node[box, fill=blue!5] (classical) at (-4, 0) {
    \textbf{Classical Algebra}\\[4pt]
    $3 + 5 = 8$\\[2pt]
    $3 \times 5 = 15$
};

\node[font=\Huge] at (0, 0) {$\Longrightarrow$};
\node[font=\small, text=gray] at (0, -0.6) {redefine operations};

\node[box, fill=orange!10] (tropical) at (4, 0) {
    \textbf{Tropical (Max-Plus)}\\[4pt]
    $3 \oplus 5 = \max(3,5) = 5$\\[2pt]
    $3 \otimes 5 = 3 + 5 = 8$
};

\node[box, fill=green!8, minimum width=10cm, minimum height=1.4cm] at (0, -2.5) {
    \textbf{Key Insight:} Nonlinear optimization $\to$ Linear algebra over semiring\\[2pt]
    $\displaystyle\min_{k}(a_k + b_k) = \bigoplus_k (a_k \otimes b_k)$ \quad (matrix multiplication!)
};

\end{tikzpicture}
\caption{Classical algebra vs.\ tropical (max-plus) algebra. The redefinition of addition as $\max$ and multiplication as $+$ transforms optimization problems into linear algebraic operations.}
\label{fig:classical-vs-tropical}
\end{figure}

\subsection{Tropical Semirings}
\label{subsec:tropical-semirings}

We now define the five tropical semirings implemented in \palma{}.

\begin{definition}[Max-Plus Semiring $\Rmax$]
\label{def:max-plus}
The \emph{max-plus semiring} is $\Rmax = (\mathbb{R} \cup \{-\infty\}, \max, +, -\infty, 0)$ where:
\begin{itemize}
    \item $a \oplus b = \max(a, b)$
    \item $a \otimes b = a + b$
    \item $\mathbf{0} = \varepsilon = -\infty$ (absorbing element)
    \item $\mathbf{1} = e = 0$ (identity element)
\end{itemize}
\end{definition}

\begin{definition}[Min-Plus Semiring $\Rmin$]
\label{def:min-plus}
The \emph{min-plus semiring} is $\Rmin = (\mathbb{R} \cup \{+\infty\}, \min, +, +\infty, 0)$ where:
\begin{itemize}
    \item $a \oplus b = \min(a, b)$
    \item $a \otimes b = a + b$
    \item $\mathbf{0} = +\infty$
    \item $\mathbf{1} = 0$
\end{itemize}
\end{definition}

\begin{definition}[Max-Min Semiring]
\label{def:max-min}
The \emph{max-min semiring} (also called the \emph{bottleneck semiring}) is $(\mathbb{R} \cup \{-\infty, +\infty\}, \max, \min, -\infty, +\infty)$ where:
\begin{itemize}
    \item $a \oplus b = \max(a, b)$
    \item $a \otimes b = \min(a, b)$
    \item $\mathbf{0} = -\infty$
    \item $\mathbf{1} = +\infty$
\end{itemize}
\end{definition}

\begin{definition}[Min-Max Semiring]
\label{def:min-max}
The \emph{min-max semiring} is $(\mathbb{R} \cup \{-\infty, +\infty\}, \min, \max, +\infty, -\infty)$ where:
\begin{itemize}
    \item $a \oplus b = \min(a, b)$
    \item $a \otimes b = \max(a, b)$
    \item $\mathbf{0} = +\infty$
    \item $\mathbf{1} = -\infty$
\end{itemize}
\end{definition}

\begin{definition}[Boolean Semiring]
\label{def:boolean}
The \emph{Boolean semiring} is $(\{0, 1\}, \vee, \wedge, 0, 1)$ where:
\begin{itemize}
    \item $a \oplus b = a \vee b$ (logical OR)
    \item $a \otimes b = a \wedge b$ (logical AND)
    \item $\mathbf{0} = 0$ (false)
    \item $\mathbf{1} = 1$ (true)
\end{itemize}
\end{definition}

\begin{theorem}[Tropical Semiring Properties]
\label{thm:tropical-properties}
All five semirings defined above are valid semirings. Moreover, all are idempotent.
\end{theorem}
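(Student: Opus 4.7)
The plan is to reduce the five verifications to two essentially distinct cases by exploiting duality, and then dispatch each remaining axiom by a short monotonicity or lattice argument. First I would observe that $\Rmin$ is obtained from $\Rmax$ by the order-reversing involution $a \mapsto -a$ (with $+\infty \leftrightarrow -\infty$), and that the min-max semiring is obtained from the max-min semiring by the same kind of order reversal. Since all semiring axioms are preserved under such an isomorphism, it suffices to verify the axioms for $\Rmax$, for the max-min semiring, and for the Boolean semiring.

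For $\Rmax$, the plan is to check each clause of Definition~\ref{def:semiring} in turn. Associativity, commutativity, and idempotency of $\max$ on $\mathbb{R}\cup\{-\infty\}$ follow from the total order of $\mathbb{R}$; $\max(a,-\infty)=a$ handles the additive identity. Associativity and commutativity of $+$ are inherited from classical arithmetic, with $0$ as the multiplicative identity. Annihilation, $a+(-\infty)=-\infty$, is built into the extended arithmetic convention. The only substantive step is distributivity, $a+\max(b,c)=\max(a+b,a+c)$, which I would justify by the monotonicity of translation: the map $x\mapsto a+x$ preserves the order and therefore commutes with $\max$ (treating cases with $-\infty$ separately, where both sides collapse to $-\infty$ whenever $a=-\infty$ via the annihilation clause).

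For the max-min semiring, the same outline applies, but distributivity now reads $\min(a,\max(b,c))=\max(\min(a,b),\min(a,c))$. I would prove this using the fact that $(\mathbb{R}\cup\{-\infty,+\infty\},\max,\min)$ is a totally ordered set and hence a distributive lattice; a clean case split on the relative order of $a,b,c$ (or equivalently on which of $b,c$ attains the maximum) reduces each side to the same element. The identities $\max(a,-\infty)=a$ and $\min(a,+\infty)=a$ give the additive and multiplicative units, and $\min(a,-\infty)=-\infty$ supplies annihilation. Idempotency is immediate since $\max(a,a)=a$. The Boolean semiring is then a special case: it is the restriction of the max-min semiring to $\{0,1\}$ under the identifications $\max\leftrightarrow\vee$, $\min\leftrightarrow\wedge$, $-\infty\leftrightarrow 0$, $+\infty\leftrightarrow 1$, so its axioms follow from those already established (or, equivalently, from the well-known axioms of Boolean algebra).

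The main obstacle I anticipate is purely bookkeeping, namely handling the infinite elements uniformly in the distributivity and annihilation arguments for $\Rmax$ and for the max-min semiring. I would address this by adopting the convention $a+(-\infty)=-\infty$ for all $a$ at the outset and verifying once that every identity degenerates correctly when any argument equals an absorbing element, so that the finite-case verification covers the rest.
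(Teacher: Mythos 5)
Your proof is correct, and it is organized quite differently from the paper's. The paper verifies the axioms only for $\Rmax$ and dismisses the remaining four semirings with ``the other cases are analogous,'' whereas you give an explicit reduction: the negation map $a \mapsto -a$ (extended by $-\infty \leftrightarrow +\infty$) is a semiring isomorphism carrying $\Rmax$ to $\Rmin$ and the max-min semiring to the min-max semiring, so only three structures need direct verification, and the Boolean semiring is then recognized as (isomorphic to) the two-element subsemiring $\{-\infty,+\infty\}$ of the max-min semiring. Your treatment of the two genuinely distinct distributivity arguments is also sharper than the paper's: for $\Rmax$ you invoke monotonicity of translation, exactly as the paper does, but for max-min you correctly identify that the relevant identity $\min(a,\max(b,c)) = \max(\min(a,b),\min(a,c))$ is the distributive-lattice law of a totally ordered set, which is a different mechanism than ``addition preserves order'' and is precisely the step the paper's ``analogous'' glosses over. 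What your approach buys is a complete, non-redundant verification of all five cases with the structural reasons made explicit; what the paper's buys is brevity. The only point worth tightening in a final write-up is the degenerate-case bookkeeping you already flag: when checking distributivity in $\Rmax$ with $b = -\infty$ or $c = -\infty$ (rather than $a = -\infty$), both sides reduce via $\max(x,-\infty)=x$ rather than via annihilation, so the case analysis should be stated for each argument position, as you propose to do once and for all.
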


\begin{proof}
We prove the result for $\Rmax$; the other cases are analogous.

\textbf{Additive monoid}: $(\mathbb{R} \cup \{-\infty\}, \max, -\infty)$ is clearly a commutative monoid since $\max$ is associative and commutative, and $\max(a, -\infty) = a$.

\textbf{Multiplicative monoid}: $(\mathbb{R} \cup \{-\infty\}, +, 0)$ is a monoid with the convention $a + (-\infty) = -\infty$ for all $a$.

\textbf{Distributivity}: For all $a, b, c \in \Rmax$:
\begin{align*}
    a \otimes (b \oplus c) &= a + \max(b, c) = \max(a + b, a + c) = (a \otimes b) \oplus (a \otimes c)
\end{align*}
The equality $a + \max(b, c) = \max(a + b, a + c)$ holds because addition preserves order.

\textbf{Annihilation}: $a \otimes (-\infty) = a + (-\infty) = -\infty = \mathbf{0}$.

\textbf{Idempotency}: $\max(a, a) = a$, so $a \oplus a = a$.
\end{proof}

\begin{table}[htbp]
\centering
\caption{Summary of tropical semirings implemented in \palma{}}
\label{tab:semirings}
\begin{tabular}{lccccl}
\toprule
\textbf{Semiring} & $\oplus$ & $\otimes$ & $\mathbf{0}$ & $\mathbf{1}$ & \textbf{Primary Application} \\
\midrule
Max-Plus & $\max$ & $+$ & $-\infty$ & $0$ & Scheduling, longest paths \\
Min-Plus & $\min$ & $+$ & $+\infty$ & $0$ & Shortest paths, Dijkstra \\
Max-Min & $\max$ & $\min$ & $-\infty$ & $+\infty$ & Bottleneck/bandwidth paths \\
Min-Max & $\min$ & $\max$ & $+\infty$ & $-\infty$ & Reliability paths \\
Boolean & $\vee$ & $\wedge$ & $0$ & $1$ & Reachability, connectivity \\
\bottomrule
\end{tabular}
\end{table}

\subsection{Tropical Linear Algebra}
\label{subsec:tropical-linear-algebra}

Having established the semiring structure, we now develop tropical linear algebra over an arbitrary idempotent semiring $(S, \oplus, \otimes, \mathbf{0}, \mathbf{1})$.

\begin{definition}[Tropical Matrix Operations]
Let $\mat{A} \in S^{m \times n}$ and $\mat{B} \in S^{n \times p}$ be matrices over $S$. The \emph{tropical matrix product} $\mat{C} = \mat{A} \otimes \mat{B} \in S^{m \times p}$ is defined by:
\begin{equation}
    C_{ij} = \bigoplus_{k=1}^{n} (A_{ik} \otimes B_{kj}) = \bigoplus_{k=1}^{n} A_{ik} \otimes B_{kj}
\end{equation}
\end{definition}

For the max-plus semiring, this becomes:
\begin{equation}
    C_{ij} = \max_{k=1}^{n} (A_{ik} + B_{kj})
\end{equation}

For the min-plus semiring:
\begin{equation}
    C_{ij} = \min_{k=1}^{n} (A_{ik} + B_{kj})
\end{equation}

\begin{figure}[htbp]
\centering
\begin{tikzpicture}[
    mat/.style={matrix of math nodes, nodes in empty cells, left delimiter={[}, right delimiter={]}, 
                nodes={minimum width=1.1cm, minimum height=0.7cm, font=\small}},
    highlight/.style={fill=yellow!40},
    result/.style={fill=green!30},
    label/.style={font=\small\bfseries}
]

\matrix[mat] (A) at (0,0) {
    2 & 3 & |[highlight]| 1 \\
    5 & 0 & 4 \\
};
\node[label, above=0.3cm of A] {$\mat{A}$};

\node at (2.2, 0) {\Large$\otimes$};

\matrix[mat] (B) at (4.2, 0) {
    |[highlight]| 1 & 2 \\
    |[highlight]| 4 & 0 \\
    |[highlight]| 2 & 3 \\
};
\node[label, above=0.3cm of B] {$\mat{B}$};

\node at (6.2, 0) {\Large$=$};

\matrix[mat] (C) at (8.2, 0) {
    |[result]| ? & \cdot \\
    \cdot & \cdot \\
};
\node[label, above=0.3cm of C] {$\mat{C}$};

\node[draw, rounded corners, fill=blue!5, text width=12cm, align=left, font=\small] at (4, -2.2) {
    \textbf{Max-Plus:} $C_{11} = \max(A_{11}+B_{11},\, A_{12}+B_{21},\, A_{13}+B_{31}) = \max(2+1,\, 3+4,\, 1+2) = \max(3, 7, 3) = \mathbf{7}$\\[4pt]
    \textbf{Min-Plus:} $C_{11} = \min(A_{11}+B_{11},\, A_{12}+B_{21},\, A_{13}+B_{31}) = \min(2+1,\, 3+4,\, 1+2) = \min(3, 7, 3) = \mathbf{3}$
};

\draw[->, thick, red!70!black, dashed] (A-1-1.east) to[out=0, in=180] node[above, font=\tiny] {$2+1$} (B-1-1.west);
\draw[->, thick, blue!70!black, dashed] (A-1-2.east) to[out=0, in=180] node[above, font=\tiny, yshift=2pt] {$3+4$} (B-2-1.west);
\draw[->, thick, green!60!black, dashed] (A-1-3.east) to[out=0, in=180] node[below, font=\tiny] {$1+2$} (B-3-1.west);

\end{tikzpicture}
\caption{Tropical matrix multiplication illustrated. To compute $C_{ij}$, we take the tropical sum ($\max$ or $\min$) over all paths from row $i$ of $\mat{A}$ to column $j$ of $\mat{B}$, where each path contributes the tropical product ($+$) of its elements.}
\label{fig:tropical-matmul}
\end{figure}
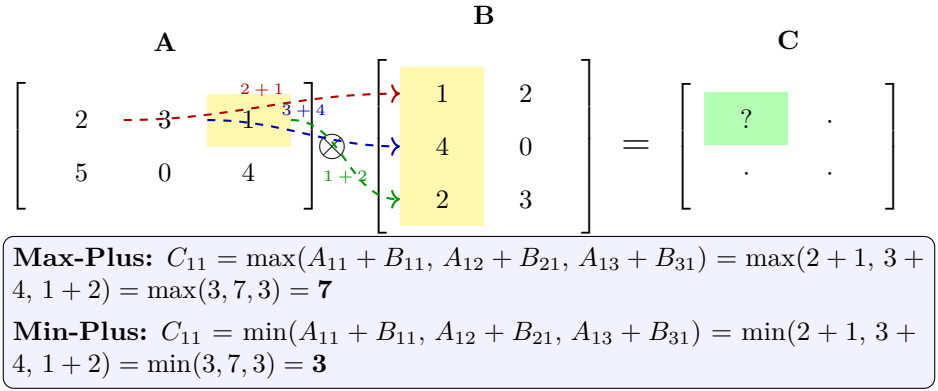

\begin{definition}[Tropical Matrix-Vector Product]
For $\mat{A} \in S^{m \times n}$ and $\vect{x} \in S^n$, the product $\vect{y} = \mat{A} \otimes \vect{x} \in S^m$ is:
\begin{equation}
    y_i = \bigoplus_{j=1}^{n} A_{ij} \otimes x_j
\end{equation}
\end{definition}

\begin{definition}[Identity Matrix]
The \emph{tropical identity matrix} $\mat{E} \in S^{n \times n}$ is:
\begin{equation}
    E_{ij} = \begin{cases}
        \mathbf{1} & \text{if } i = j \\
        \mathbf{0} & \text{if } i \neq j
    \end{cases}
\end{equation}
\end{definition}

\begin{proposition}[Matrix Multiplication Properties]
\label{prop:matrix-mult}
Tropical matrix multiplication satisfies:
\begin{enumerate}[label=(\roman*)]
    \item \textbf{Associativity}: $(\mat{A} \otimes \mat{B}) \otimes \mat{C} = \mat{A} \otimes (\mat{B} \otimes \mat{C})$
    \item \textbf{Identity}: $\mat{A} \otimes \mat{E} = \mat{E} \otimes \mat{A} = \mat{A}$
    \item \textbf{Distributivity}: $\mat{A} \otimes (\mat{B} \oplus \mat{C}) = (\mat{A} \otimes \mat{B}) \oplus (\mat{A} \otimes \mat{C})$
\end{enumerate}
However, in general:
\begin{enumerate}[label=(\roman*)]
    \setcounter{enumi}{3}
    \item Matrix multiplication is \textbf{not commutative}: $\mat{A} \otimes \mat{B} \neq \mat{B} \otimes \mat{A}$
\end{enumerate}
\end{proposition}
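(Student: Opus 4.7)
The plan is to prove parts (i)--(iii) by reducing each matrix identity to an entry-wise statement and then invoking the scalar semiring axioms from Definition~\ref{def:semiring}, and to dispose of part (iv) with a small explicit counterexample in $\Rmax$.

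For \textbf{associativity} (i), I would fix $\mat{A} \in S^{m \times n}$, $\mat{B} \in S^{n \times p}$, $\mat{C} \in S^{p \times q}$, and compute the $(i,j)$ entry of each side. Expanding the left-hand side gives
\[
    \bigl((\mat{A} \otimes \mat{B}) \otimes \mat{C}\bigr)_{ij} = \bigoplus_{\ell=1}^{p} \Bigl(\bigoplus_{k=1}^{n} A_{ik} \otimes B_{k\ell}\Bigr) \otimes C_{\ell j},
\]
while the right-hand side gives a double tropical sum over $k$ first, then $\ell$. The key step is pushing $\otimes C_{\ell j}$ inside the inner $\oplus$ via right-distributivity, then using associativity of the scalar $\otimes$ and the commutativity/associativity of $\oplus$ to freely swap the order of summation. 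Both sides collapse to $\bigoplus_{k,\ell} A_{ik} \otimes B_{k\ell} \otimes C_{\ell j}$. I would be a bit careful to justify the swap of $\oplus$-indices as an induction on the number of summands, since $\oplus$ is only given as a binary operation in Definition~\ref{def:semiring}.

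For \textbf{identity} (ii), I would compute $(\mat{A} \otimes \mat{E})_{ij} = \bigoplus_{k} A_{ik} \otimes E_{kj}$. Using the definition of $\mat{E}$, every term with $k \neq j$ becomes $A_{ik} \otimes \mathbf{0} = \mathbf{0}$ by annihilation, and the $k = j$ term is $A_{ij} \otimes \mathbf{1} = A_{ij}$. Since $\mathbf{0}$ is the additive identity, the entire sum reduces to $A_{ij}$. The left-identity case is symmetric. For \textbf{distributivity} (iii), I would similarly expand $(\mat{A} \otimes (\mat{B} \oplus \mat{C}))_{ij} = \bigoplus_k A_{ik} \otimes (B_{kj} \oplus C_{kj})$, apply scalar left-distributivity inside the $\oplus$, and then split the single $\oplus$ into two using associativity and commutativity.

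For \textbf{non-commutativity} (iv), the cleanest approach is an explicit counterexample. In $\Rmax$ take $\mat{A} = \begin{pmatrix} 0 & 1 \\ 2 & 0 \end{pmatrix}$ and $\mat{B} = \begin{pmatrix} 1 & 0 \\ 0 & 3 \end{pmatrix}$; a direct computation shows $\mat{A} \otimes \mat{B}$ and $\mat{B} \otimes \mat{A}$ differ in at least one entry, which suffices to disprove the universal identity.

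The main obstacle, if any, is the bookkeeping in part (i): one must justify reordering and re-parenthesizing iterated $\oplus$-sums, which formally rests on a straightforward induction using the commutative-monoid axioms on $(S,\oplus,\mathbf{0})$. The remaining parts are direct entry-wise verifications.
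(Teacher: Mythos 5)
Your proposal is correct and follows essentially the same route as the paper: reduce each identity to its $(i,j)$ entry and invoke the scalar semiring axioms, with associativity handled by distributing $C_{\ell j}$ into the inner tropical sum and exchanging the order of summation. Your additions --- the explicit annihilation argument for the identity matrix, the remark that reordering iterated $\oplus$-sums formally needs an induction on the commutative-monoid axioms, and a concrete (and correct) max-plus counterexample for non-commutativity where the $(1,1)$ entries are $1$ versus $2$ --- are details the paper leaves implicit, not a different method.
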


\begin{proof}
These properties follow directly from the semiring axioms applied element-wise. For associativity:
\begin{align*}
    ((\mat{A} \otimes \mat{B}) \otimes \mat{C})_{ij} &= \bigoplus_k \left(\bigoplus_\ell A_{i\ell} \otimes B_{\ell k}\right) \otimes C_{kj} \\
    &= \bigoplus_k \bigoplus_\ell (A_{i\ell} \otimes B_{\ell k}) \otimes C_{kj} \\
    &= \bigoplus_\ell \bigoplus_k A_{i\ell} \otimes (B_{\ell k} \otimes C_{kj}) \\
    &= \bigoplus_\ell A_{i\ell} \otimes \left(\bigoplus_k B_{\ell k} \otimes C_{kj}\right) \\
    &= (\mat{A} \otimes (\mat{B} \otimes \mat{C}))_{ij}
\end{align*}
where we used distributivity and the commutativity of $\oplus$.
\end{proof}

\begin{definition}[Matrix Powers]
For a square matrix $\mat{A} \in S^{n \times n}$, we define:
\begin{equation}
    \mat{A}^0 = \mat{E}, \quad \mat{A}^k = \mat{A}^{k-1} \otimes \mat{A} \text{ for } k \geq 1
\end{equation}
\end{definition}

\begin{theorem}[Powers and Paths]
\label{thm:powers-paths}
Let $\mat{A}$ be the adjacency matrix of a weighted directed graph $G$, where $A_{ij}$ represents the weight of the edge from $i$ to $j$ (or $\mathbf{0}$ if no edge exists). Then:
\begin{enumerate}[label=(\roman*)]
    \item In the min-plus semiring: $(\mat{A}^k)_{ij}$ equals the weight of the shortest path from $i$ to $j$ using exactly $k$ edges.
    \item In the max-plus semiring: $(\mat{A}^k)_{ij}$ equals the weight of the longest path from $i$ to $j$ using exactly $k$ edges.
    \item In the max-min semiring: $(\mat{A}^k)_{ij}$ equals the maximum bottleneck (minimum edge weight on path) from $i$ to $j$ using exactly $k$ edges.
    \item In the Boolean semiring: $(\mat{A}^k)_{ij} = 1$ iff there exists a path from $i$ to $j$ using exactly $k$ edges.
\end{enumerate}
\end{theorem}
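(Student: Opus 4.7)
The plan is to prove a single unified statement by induction on $k$ and then read off parts (i)--(iv) by specializing $\oplus$ and $\otimes$. Concretely, I will show that for every $k \geq 1$,
\[
(\mat{A}^k)_{ij} = \bigoplus_{\pi} \bigotimes_{e \in \pi} A_e,
\]
where the $\oplus$-sum ranges over all directed walks $\pi = (v_0, v_1, \ldots, v_k)$ of exactly $k$ edges with $v_0 = i$ and $v_k = j$, and the $\otimes$-product runs over the $k$ consecutive edge weights along $\pi$. Each of (i)--(iv) then follows by interpreting $\oplus$ and $\otimes$ in the appropriate semiring: in $\Rmin$, path weight is the ordinary sum of edge weights and $\oplus = \min$; in $\Rmax$, sum and $\max$; in max-min, $\otimes = \min$ so the path weight collapses to the bottleneck of $\pi$ and $\oplus = \max$; in Boolean, $\otimes$ is logical AND (so a walk contributes $\mathbf{1}$ iff all its edges exist) and $\oplus$ is logical OR.

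The base case $k = 1$ is immediate from $(\mat{A}^1)_{ij} = A_{ij}$ together with the convention that a missing edge carries weight $\mathbf{0}$. For the inductive step, I would expand
\[
(\mat{A}^k)_{ij} = \bigoplus_{\ell=1}^{n} (\mat{A}^{k-1})_{i\ell} \otimes A_{\ell j},
\]
substitute the inductive hypothesis for $(\mat{A}^{k-1})_{i\ell}$, apply right distributivity of $\otimes$ over $\oplus$ (Definition~\ref{def:semiring}) to push $A_{\ell j}$ inside the inner $\oplus$-sum, and then use associativity of $\otimes$ to append the edge $\ell \to j$ onto the existing product $\bigotimes_{e \in \pi'} A_e$. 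Since every length-$k$ walk from $i$ to $j$ factors uniquely as a length-$(k-1)$ prefix from $i$ to some penultimate vertex $\ell$ followed by the edge $\ell \to j$, the nested $\oplus$-sum over $\ell$ and $\pi'$ enumerates each length-$k$ walk exactly once. This is essentially the same algebraic manipulation already used in the proof of Proposition~\ref{prop:matrix-mult}.

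The main obstacle will be handling nonexistent edges uniformly across the four semirings. When no edge $\ell \to j$ is present, $A_{\ell j} = \mathbf{0}$, and the annihilation axiom forces that summand to collapse to $\mathbf{0}$, so walks traversing missing edges contribute nothing to the $\oplus$-sum. I would need to verify carefully that this matches the intended interpretation in each case: e.g., in max-min, $\mathbf{0} = -\infty$ correctly encodes ``no bottleneck path of this length,'' and in Boolean, $\mathbf{0} = 0$ correctly encodes nonreachability. A secondary subtlety, worth flagging but mild, is that in the max-min and min-max semirings, identifying $\bigotimes_{e \in \pi} A_e$ with the bottleneck of $\pi$ relies on associativity of $\min$ (respectively $\max$) to collapse an iterated $\otimes$-product along the walk into a single extremum over its edge weights.
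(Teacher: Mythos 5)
Your proposal is correct and follows essentially the same route as the paper's proof: induction on $k$, expanding $(\mat{A}^k)_{ij} = \bigoplus_{\ell} (\mat{A}^{k-1})_{i\ell} \otimes A_{\ell j}$ and observing that every length-$k$ walk factors uniquely into a length-$(k-1)$ prefix followed by its final edge. The only difference is one of packaging: the paper carries out the induction concretely in the min-plus semiring and declares the other cases analogous, whereas you prove the generic walk-sum identity $(\mat{A}^k)_{ij} = \bigoplus_{\pi}\bigotimes_{e\in\pi}A_e$ once over an arbitrary semiring (with the annihilation axiom handling missing edges) and then specialize, which is a cleaner way of making the paper's ``analogous'' precise but not a different argument.
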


\begin{proof}
We prove (i) by induction; the other cases are analogous.

\textbf{Base case} ($k=1$): $(\mat{A}^1)_{ij} = A_{ij}$ is the weight of the direct edge from $i$ to $j$, which is the only 1-edge path.

\textbf{Inductive step}: Assume the result holds for $k-1$. Then:
\begin{align*}
    (\mat{A}^k)_{ij} &= \bigoplus_{\ell=1}^n (\mat{A}^{k-1})_{i\ell} \otimes A_{\ell j} \\
    &= \min_{\ell=1}^n \left((\mat{A}^{k-1})_{i\ell} + A_{\ell j}\right)
\end{align*}
By the inductive hypothesis, $(\mat{A}^{k-1})_{i\ell}$ is the shortest $(k-1)$-edge path from $i$ to $\ell$. Adding the edge $(\ell, j)$ gives a $k$-edge path from $i$ to $j$ through $\ell$. Taking the minimum over all possible intermediate vertices $\ell$ yields the shortest $k$-edge path.
\end{proof}

\begin{figure}[htbp]
\centering
\begin{tikzpicture}[
    node distance=2cm,
    vertex/.style={circle, draw, thick, minimum size=0.8cm, font=\small\bfseries},
    edge/.style={->, thick, >=stealth},
    gedge/.style={edge, gray!50},
    path1/.style={edge, red!70!black, line width=1.5pt},
    path2/.style={edge, blue!70!black, line width=1.5pt},
    path3/.style={edge, green!60!black, line width=1.5pt},
    mat/.style={matrix of math nodes, nodes in empty cells, left delimiter={[}, right delimiter={]}, 
                nodes={minimum width=0.6cm, minimum height=0.5cm, font=\scriptsize}},
    label/.style={font=\footnotesize}
]

\node[vertex, fill=red!20] (1) at (0, 0) {1};
\node[vertex, fill=yellow!20] (2) at (2, 1) {2};
\node[vertex, fill=yellow!20] (3) at (2, -1) {3};
\node[vertex, fill=green!20] (4) at (4, 0) {4};

\draw[gedge] (1) -- node[above, font=\scriptsize] {3} (2);
\draw[gedge] (1) -- node[below, font=\scriptsize] {5} (3);
\draw[gedge] (2) -- node[above, font=\scriptsize] {2} (4);
\draw[gedge] (3) -- node[below, font=\scriptsize] {1} (4);
\draw[gedge] (2) -- node[right, font=\scriptsize] {4} (3);

\node[font=\bfseries] at (2, 2) {Weighted Graph $G$};

\node[font=\small\bfseries] at (7, 1.8) {$\mat{A}$ (1-edge paths)};
\matrix[mat] (matA) at (7, 0.5) {
    \infty & 3 & 5 & \infty \\
    \infty & \infty & 4 & 2 \\
    \infty & \infty & \infty & 1 \\
    \infty & \infty & \infty & \infty \\
};

\node[font=\small\bfseries] at (11, 1.8) {$\mat{A}^2$ (2-edge paths)};
\matrix[mat] (matA2) at (11, 0.5) {
    \infty & \infty & 7 & 5 \\
    \infty & \infty & \infty & 5 \\
    \infty & \infty & \infty & \infty \\
    \infty & \infty & \infty & \infty \\
};

\node[draw, rounded corners, fill=blue!5, text width=13cm, align=left, font=\scriptsize] at (6.5, -2) {
    \textbf{Min-Plus interpretation} (shortest paths):\\[2pt]
    $(\mat{A}^1)_{14} = \infty$ \quad (no direct 1-edge path from 1 to 4)\\[1pt]
    $(\mat{A}^2)_{14} = \min(3+2,\, 5+1) = \min(5, 6) = 5$ \quad (shortest 2-edge path: $1 \to 2 \to 4$ with weight 5)\\[1pt]
    $(\mat{A}^3)_{14} = \min(3+4+1) = 8$ \quad (3-edge path: $1 \to 2 \to 3 \to 4$)
};

\draw[path1, bend left=10] (1) to node[above, font=\tiny, text=red!70!black] {$3+2=5$} (2);
\draw[path1] (2) to (4);

\end{tikzpicture}
\caption{Matrix powers correspond to paths of specific lengths. In the min-plus semiring, $(\mat{A}^k)_{ij}$ gives the weight of the shortest path from $i$ to $j$ using exactly $k$ edges. Here, $(\mat{A}^2)_{14} = 5$ corresponds to the path $1 \to 2 \to 4$ with total weight $3+2=5$.}
\label{fig:powers-paths}
\end{figure}
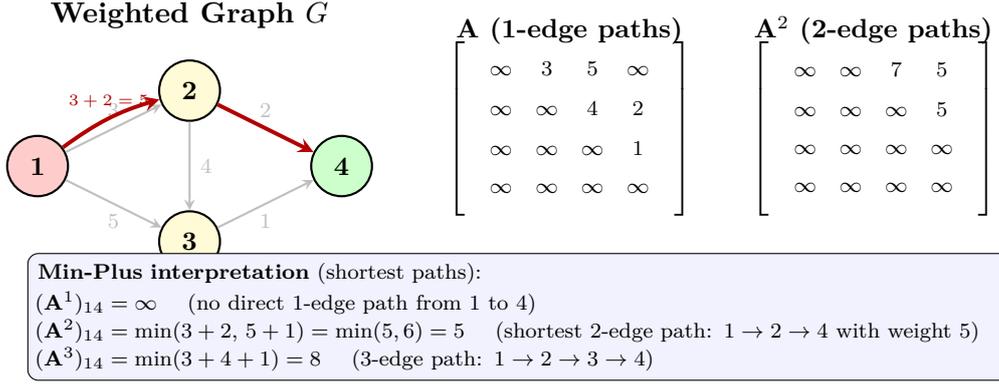

\subsection{The Kleene Star (Tropical Closure)}
\label{subsec:kleene-star}

\begin{definition}[Kleene Star / Tropical Closure]
\label{def:kleene-star}
For a square matrix $\mat{A} \in S^{n \times n}$, the \emph{Kleene star} or \emph{tropical closure} is:
\begin{equation}
    \mat{A}^* = \bigoplus_{k=0}^{\infty} \mat{A}^k = \mat{E} \oplus \mat{A} \oplus \mat{A}^2 \oplus \mat{A}^3 \oplus \cdots
\end{equation}
\end{definition}

\begin{theorem}[Convergence of Kleene Star]
\label{thm:kleene-convergence}
For the min-plus semiring with \emph{no negative-weight cycles}, or for any semiring where $\mat{A}$ represents an acyclic graph:
\begin{equation}
    \mat{A}^* = \bigoplus_{k=0}^{n-1} \mat{A}^k
\end{equation}
That is, the infinite sum converges after $n-1$ terms. The condition ``non-negative edge weights'' is a sufficient (but not necessary) condition for ``no negative cycles.''
\end{theorem}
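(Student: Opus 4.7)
The plan is to reduce the algebraic claim about the stabilization of $\bigoplus_k \mat{A}^k$ to the combinatorial claim that shortest walks in the associated weighted digraph can always be chosen to be simple. I would invoke Theorem~\ref{thm:powers-paths}: in the min-plus semiring, $(\mat{A}^k)_{ij}$ is exactly the minimum weight among all walks from $i$ to $j$ using precisely $k$ edges (with $(\mat{A}^0)_{ij} = E_{ij}$ corresponding to the empty walk at $i$). Consequently,
\begin{equation*}
    (\mat{A}^*)_{ij} \;=\; \bigoplus_{k=0}^{\infty}(\mat{A}^k)_{ij} \;=\; \inf_{k \geq 0}\,(\mat{A}^k)_{ij}
\end{equation*}
is the minimum-weight walk from $i$ to $j$ of any length, and proving the theorem reduces to showing this infimum is attained for some $k \in \{0,1,\ldots,n-1\}$.

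\textbf{Cycle-excision lemma.} The technical core is a standard walk-shortening argument. Suppose $W = v_0 v_1 \cdots v_m$ is a walk from $i$ to $j$ with $m \geq n$. By the pigeonhole principle there exist indices $p < q$ with $v_p = v_q$, so the subsequence $v_p v_{p+1} \cdots v_q$ forms a closed walk $C$. Excising $C$ yields a strictly shorter walk $W'$ from $i$ to $j$ with $w(W') = w(W) - w(C)$. The no-negative-cycle hypothesis gives $w(C) \geq 0$: although $C$ need not itself be a simple cycle, it decomposes (by iterating the same excision) into a disjoint union of simple directed cycles, each of non-negative weight by assumption, so $w(C) \geq 0$ and hence $w(W') \leq w(W)$. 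Repeating the excision terminates in at most $m - (n-1)$ steps in a simple walk of at most $n-1$ edges with weight no greater than $w(W)$.

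\textbf{Conclusion.} Combining these steps, for every pair $(i,j)$ the minimum over all walks equals the minimum over simple walks, and every simple walk has length in $\{0,1,\ldots,n-1\}$; therefore
\begin{equation*}
    (\mat{A}^*)_{ij} \;=\; \bigoplus_{k=0}^{n-1}(\mat{A}^k)_{ij}
\end{equation*}
entrywise, which is the matrix identity claimed. For the acyclic-graph case, the argument is even simpler: any walk of length $\geq n$ would force a repeated vertex and thus a directed cycle, contradicting acyclicity; hence $(\mat{A}^k)_{ij} = \mathbf{0}$ for all $k \geq n$ and the truncation is immediate (this sub-case needs no hypothesis on the semiring beyond idempotency).

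\textbf{Main obstacle.} The algebraic bookkeeping is routine once the path-interpretation (Theorem~\ref{thm:powers-paths}) is in hand, so the only genuinely delicate step is the justification that the excised closed walk $C$ has non-negative weight. The hypothesis rules out only \emph{simple} negative-weight cycles, whereas the cycle $C$ produced by the pigeonhole step is a closed walk that may itself revisit vertices. I would pause to state explicitly that any closed walk decomposes into a disjoint union of simple directed cycles (by induction on length, using the same excision), so that the hypothesis applies summand-wise. This is the one point where a careless reader might see a gap.
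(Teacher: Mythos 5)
Your proof is correct and follows essentially the same route as the paper's: interpret $(\mat{A}^k)_{ij}$ via Theorem~\ref{thm:powers-paths}, observe that under the no-negative-cycle hypothesis excising cycles never increases walk weight so optimal walks may be taken simple (hence of length at most $n-1$), and invoke idempotency of $\oplus$ to truncate the sum. Your version is more careful than the paper's one-line sketch — in particular, your explicit decomposition of the excised closed walk into simple cycles addresses a gap the paper glosses over, and you avoid the paper's slightly inaccurate claim that $(\mat{A}^k)_{ij}$ itself stabilizes for $k \geq n-1$ by working with the running minimum instead.
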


\begin{proof}
In a graph with $n$ vertices, any simple path has at most $n-1$ edges. For the min-plus semiring without negative cycles, traversing a cycle cannot decrease path weight, so optimal paths are simple. Therefore, $(\mat{A}^k)_{ij} = (\mat{A}^{n-1})_{ij}$ for all $k \geq n-1$, and by idempotency of $\min$, additional terms do not change the sum.
\end{proof}

\begin{theorem}[Kleene Star and All-Pairs Paths]
\label{thm:kleene-apsp}
For the min-plus semiring, $(\mat{A}^*)_{ij}$ equals the weight of the shortest path from $i$ to $j$ (using any number of edges). This is the solution to the all-pairs shortest paths (APSP) problem.
\end{theorem}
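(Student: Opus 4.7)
The plan is to combine the two preceding results, Theorem~\ref{thm:powers-paths} (matrix powers enumerate exactly-$k$-edge paths) and Theorem~\ref{thm:kleene-convergence} (truncation of the Kleene sum at $n-1$ terms), and then argue that an overall shortest path in a graph with no negative cycles can always be realized as a simple path. Concretely, I would first invoke Theorem~\ref{thm:kleene-convergence} to replace the infinite series with the finite expression
\begin{equation*}
    (\mat{A}^*)_{ij} \;=\; \bigoplus_{k=0}^{n-1} (\mat{A}^k)_{ij} \;=\; \min_{0 \le k \le n-1} (\mat{A}^k)_{ij},
\end{equation*}
which is well defined precisely because we assume the min-plus no-negative-cycle hypothesis.

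Next, I would interpret each term of this finite minimum. The $k=0$ term is $E_{ij}$, which equals $\mathbf{1}=0$ when $i=j$ and $\mathbf{0}=+\infty$ otherwise; this accounts for the empty path from a vertex to itself and contributes nothing when $i\neq j$. For $k\geq 1$, Theorem~\ref{thm:powers-paths}(i) identifies $(\mat{A}^k)_{ij}$ with the minimum weight among all $i$-to-$j$ walks that use exactly $k$ edges (with the convention that this is $+\infty$ when no such walk exists). Taking the $\min$ over $k\in\{0,1,\dots,n-1\}$ therefore yields the minimum weight over all walks of length at most $n-1$, including the trivial empty walk at a vertex.

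The remaining step, and the only non-routine one, is to show that this minimum already equals the true shortest-path weight over walks of arbitrary length. Here I would use a standard cycle-removal argument: any walk with at least $n$ edges visits some vertex twice and so decomposes into a shorter walk together with one or more cycles. Under the no-negative-cycle hypothesis each removed cycle has nonnegative total weight, so deleting it yields a walk from $i$ to $j$ whose weight is no greater than the original. Iterating this procedure produces a simple path (hence one of length at most $n-1$) with weight $\leq$ the original walk's weight, and in particular no walk of length $\geq n$ can strictly improve upon the best walk of length $\leq n-1$.

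The main obstacle I anticipate is the subtle interaction between the empty-walk interpretation of $\mat{A}^0$ and the definition of ``shortest path from $i$ to $i$.'' One must fix the convention that the $i$-to-$i$ shortest path has weight $0$ (realized by the empty walk); otherwise, if the definition requires at least one edge, the result only holds when the minimum-weight cycle through $i$ is nonnegative, and the $\mat{E}$ contribution could artificially lower the value. Once this convention is stated explicitly, the proof reduces to cleanly chaining the two cited theorems with the cycle-removal lemma above.
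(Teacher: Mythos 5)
Your proposal is correct and follows essentially the route the paper intends: the paper states Theorem~\ref{thm:kleene-apsp} without a separate proof, but its proofs of Theorem~\ref{thm:powers-paths} (powers enumerate exact-$k$-edge walks) and Theorem~\ref{thm:kleene-convergence} (cycle removal under the no-negative-cycle hypothesis makes optimal paths simple, so the sum truncates at $n-1$) contain exactly the ingredients you chain together. Your explicit remark about the $\mat{A}^0$ term and the empty-walk convention for $i=j$ is a useful clarification the paper leaves implicit.
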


\begin{corollary}[Floyd-Warshall as Tropical Closure]
The Floyd-Warshall algorithm computes the Kleene star of the adjacency matrix in the min-plus semiring.
\end{corollary}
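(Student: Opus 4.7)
The plan is to show that Floyd-Warshall's intermediate-vertex recurrence, when rewritten in tropical notation, produces exactly the matrix $\mat{A}^*$ whose entries are characterized by Theorem~\ref{thm:kleene-apsp}. I would first restate the classical Floyd-Warshall update in the min-plus language: letting $\mat{D}^{(k)}$ denote the matrix after the $k$-th pass, the update
\[
D^{(k)}_{ij} \;=\; D^{(k-1)}_{ij} \,\oplus\, \bigl(D^{(k-1)}_{ik} \otimes D^{(k-1)}_{kj}\bigr),
\]
with $\mat{D}^{(0)} = \mat{E} \oplus \mat{A}$ (i.e.\ $\mat{A}$ adjusted with tropical $\mathbf{1}=0$ on the diagonal), is manifestly a purely semiring-based recurrence. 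This already shows that the algorithm operates entirely within the min-plus semiring, which is the first half of the claim.

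Next I would establish the invariant that $D^{(k)}_{ij}$ equals the tropical sum over all paths from $i$ to $j$ whose intermediate vertices lie in $\{1,\dots,k\}$; equivalently, $D^{(k)}_{ij}$ is the $(i,j)$ entry of the Kleene star restricted to that intermediate vertex set. I would prove this by induction on $k$. The base case $k=0$ holds because the only admissible paths use no intermediate vertices, which are exactly the direct edges together with the zero-length ``stay-put'' path captured by $\mat{E}$. For the inductive step, any path with intermediate vertices in $\{1,\dots,k\}$ either avoids vertex $k$ entirely, contributing $D^{(k-1)}_{ij}$, or visits vertex $k$; under the no-negative-cycle hypothesis of Theorem~\ref{thm:kleene-convergence}, we may assume an optimal path visits $k$ at most once, decomposing into an $i$-to-$k$ segment and a $k$-to-$j$ segment, each with intermediate vertices in $\{1,\dots,k-1\}$. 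Taking the tropical sum of the two alternatives gives exactly the Floyd-Warshall update.

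Finally, I would conclude that $\mat{D}^{(n)}$ coincides with $\mat{A}^*$: at $k=n$, there is no restriction on intermediate vertices, so $D^{(n)}_{ij}$ equals the tropical sum over \emph{all} $i$-to-$j$ paths, which by Theorem~\ref{thm:kleene-apsp} is precisely $(\mat{A}^*)_{ij}$.

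The main obstacle I expect is the inductive step, specifically the implicit use of the ``simple path suffices'' argument to guarantee that an optimal $i$-to-$j$ path visits vertex $k$ at most once. This is where the no-negative-cycle hypothesis must be invoked carefully, mirroring the justification used in Theorem~\ref{thm:kleene-convergence}; without it, the decomposition into two independent sub-paths passing through $k$ may fail to be optimal. Apart from this subtlety, the remaining work is a bookkeeping exercise in applying associativity, commutativity, and distributivity of the min-plus operations to align the Floyd-Warshall recurrence with the Kleene star definition.
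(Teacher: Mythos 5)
Your proof is correct and follows essentially the same route as the paper: the paper's own justification (given as the Closure Correctness theorem in Section~\ref{sec:algorithms}) rests on exactly the same intermediate-vertex invariant, namely that after pass $k$ the entry $D_{ij}$ holds the optimal weight over paths whose intermediate vertices lie in $\{1,\dots,k\}$, proved by induction on $k$. You merely spell out in more detail the decomposition at vertex $k$ and the role of the no-negative-cycle hypothesis, which the paper leaves implicit.
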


\begin{figure}[htbp]
\centering
\begin{tikzpicture}[
    node distance=1.5cm,
    vertex/.style={circle, draw, thick, minimum size=0.7cm, font=\small\bfseries},
    edge/.style={->, thick, >=stealth},
    mat/.style={matrix of math nodes, nodes in empty cells, left delimiter={[}, right delimiter={]}, 
                nodes={minimum width=0.55cm, minimum height=0.45cm, font=\tiny}},
    converge/.style={draw, rounded corners, fill=green!10, minimum width=1.8cm, minimum height=1.2cm},
    label/.style={font=\footnotesize\bfseries}
]

\node[vertex] (a) at (0, 2.5) {1};
\node[vertex] (b) at (1.5, 2.5) {2};
\node[vertex] (c) at (3, 2.5) {3};
\draw[edge] (a) -- node[above, font=\tiny] {2} (b);
\draw[edge] (b) -- node[above, font=\tiny] {3} (c);
\draw[edge, bend left=30] (a) to node[above, font=\tiny] {7} (c);

\node[label] at (-0.5, 0.8) {$\mat{A}^0 = \mat{E}$};
\matrix[mat] (A0) at (0.8, 0.8) {
    0 & \infty & \infty \\
    \infty & 0 & \infty \\
    \infty & \infty & 0 \\
};

\node at (2.1, 0.8) {\large$\oplus$};

\node[label] at (2.5, 0.8) {$\mat{A}^1$};
\matrix[mat] (A1) at (3.7, 0.8) {
    \infty & 2 & 7 \\
    \infty & \infty & 3 \\
    \infty & \infty & \infty \\
};

\node at (5.0, 0.8) {\large$\oplus$};

\node[label] at (5.4, 0.8) {$\mat{A}^2$};
\matrix[mat] (A2) at (6.6, 0.8) {
    \infty & \infty & 5 \\
    \infty & \infty & \infty \\
    \infty & \infty & \infty \\
};

\node at (7.9, 0.8) {\large$\oplus$};

\node at (8.5, 0.8) {$\cdots$};

\node at (9.3, 0.8) {\large$=$};

\node[label] at (9.9, 0.8) {$\mat{A}^*$};
\matrix[mat, nodes={fill=green!15}] (Astar) at (11.1, 0.8) {
    0 & 2 & 5 \\
    \infty & 0 & 3 \\
    \infty & \infty & 0 \\
};

\node[draw, rounded corners, fill=blue!5, text width=13cm, align=left, font=\scriptsize] at (5.5, -1) {
    \textbf{Convergence:} For $n=3$ vertices, $\mat{A}^* = \mat{A}^0 \oplus \mat{A}^1 \oplus \mat{A}^2$ (only $n-1=2$ matrix powers needed).\\[2pt]
    \textbf{Result:} $(\mat{A}^*)_{13} = \min(\infty, 7, 5) = 5$ is the shortest path from 1 to 3 (via vertex 2: $2+3=5$).\\[2pt]
    \textbf{Idempotency:} Once $(\mat{A}^k)_{ij}$ stabilizes, additional terms don't change the $\min$ (since $\min(x,x)=x$).
};

\end{tikzpicture}
\caption{The Kleene star $\mat{A}^* = \bigoplus_{k=0}^{\infty} \mat{A}^k$ converges in $n-1$ iterations. Each power $\mat{A}^k$ captures $k$-edge paths; the tropical sum ($\min$) selects the shortest. Entry $(\mat{A}^*)_{ij}$ gives the all-pairs shortest path from $i$ to $j$.}
\label{fig:kleene-star}
\end{figure}
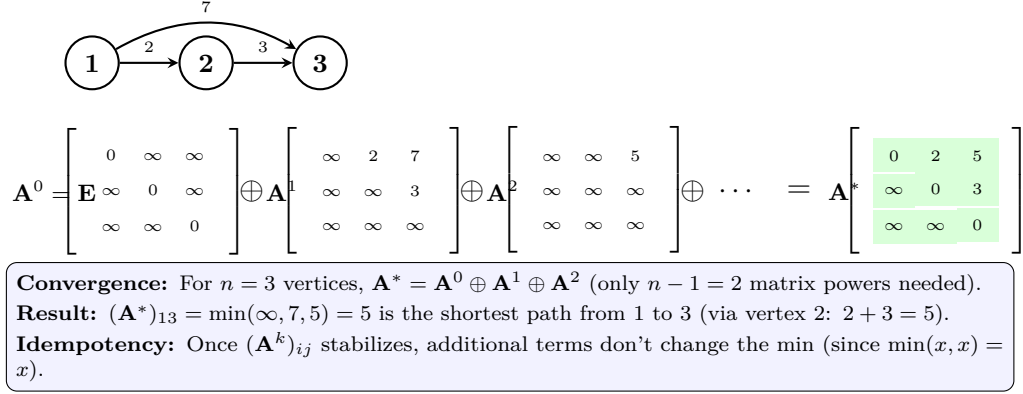

\begin{algorithm}[htbp]
\caption{Tropical Closure (Floyd-Warshall Style)}
\label{alg:closure}
\begin{algorithmic}[1]
\Require Adjacency matrix $\mat{A} \in S^{n \times n}$
\Ensure Closure matrix $\mat{A}^*$
\State $\mat{D} \leftarrow \mat{A}$
\For{$i = 1$ to $n$}
    \State $D_{ii} \leftarrow D_{ii} \oplus \mathbf{1}$ \Comment{Add identity}
\EndFor
\For{$k = 1$ to $n$}
    \For{$i = 1$ to $n$}
        \For{$j = 1$ to $n$}
            \State $D_{ij} \leftarrow D_{ij} \oplus (D_{ik} \otimes D_{kj})$
        \EndFor
    \EndFor
\EndFor
\State \Return $\mat{D}$
\end{algorithmic}
\end{algorithm}

\begin{theorem}[Closure Complexity]
Algorithm~\ref{alg:closure} computes $\mat{A}^*$ in $\bigO(n^3)$ time and $\bigO(n^2)$ space.
\end{theorem}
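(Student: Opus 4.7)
The plan is to separate the complexity claim into its time and space components and verify each by direct operation counting on Algorithm~\ref{alg:closure}, assuming as is standard that the scalar semiring operations $\oplus$ and $\otimes$ each cost $O(1)$ time (true for all five semirings of Section~\ref{subsec:tropical-semirings}, since each reduces to a comparison or an integer addition on machine words).

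For the time bound, I would account for the work in two phases. The initialization $\mat{D} \leftarrow \mat{A}$ together with the diagonal update $D_{ii} \leftarrow D_{ii} \oplus \mathbf{1}$ contributes $O(n^2)$ copies and $n$ semiring additions, both subsumed by $O(n^2)$. The main body is the triple nested loop over $k, i, j \in \{1, \dots, n\}$, giving exactly $n^3$ iterations; each iteration performs one $\otimes$, one $\oplus$, and a single store into $D_{ij}$, i.e., $O(1)$ work. Summing yields $O(n^3)$ time, dominating the initialization.

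For the space bound, the algorithm operates entirely in place on the single $n \times n$ matrix $\mat{D}$, with only the three loop indices and a constant-sized scratch value as auxiliary storage, giving $n^2$ stored semiring elements plus $O(1)$ overhead, i.e., $O(n^2)$.

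The main subtlety worth flagging, though not strictly part of the complexity statement, is that the in-place update on line 9 reads $D_{ik}$ and $D_{kj}$ after the outer loop has committed to index $k$; one must verify that these two entries are not corrupted by earlier passes of the inner loops during the same $k$-iteration, in order to justify the $O(n^2)$ space bound rather than requiring a second copy. This is the standard Floyd--Warshall invariant, and it holds in the tropical setting because the diagonal is seeded with $\mathbf{1}$: the potentially interfering updates $D_{ik} \leftarrow D_{ik} \oplus (D_{ik} \otimes D_{kk})$ and $D_{kj} \leftarrow D_{kj} \oplus (D_{kk} \otimes D_{kj})$ collapse by the identity axiom and idempotency of $\oplus$ to $D_{ik}$ and $D_{kj}$ respectively, so these entries remain fixed throughout the $k$-pass. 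I expect no deeper obstacle; the remainder is bookkeeping.
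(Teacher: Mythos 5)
Your operation count is correct and is exactly the argument the paper leaves implicit: the theorem is stated without a proof, the $\bigO(n^3)$ time bound being immediate from the triple loop with $\bigO(1)$ semiring work per iteration (plus the $\bigO(n^2)$ initialization), and the $\bigO(n^2)$ space bound from operating in place on the single matrix $\mat{D}$. One caution on your side remark: the claim that $D_{ik}$ and $D_{kj}$ stay fixed during pass $k$ requires $D_{kk} = \mathbf{1}$, and seeding the diagonal only guarantees $D_{kk} = A_{kk} \oplus \mathbf{1} \oplus \cdots = \mathbf{1}$ when no cycle through $k$ dominates the identity (e.g.\ no negative cycles in min-plus; a positive self-loop in max-plus would violate it) --- but this is a correctness issue belonging to the separate closure-correctness theorem and, as you note, does not affect the complexity bounds.
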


\subsection{Tropical Eigenvalue Theory}
\label{subsec:eigenvalues}

Tropical eigenvalue theory is fundamental for analyzing periodic systems and computing throughput in manufacturing and scheduling applications \cite{baccelli1992synchronization, heidergott2006max}.

\begin{definition}[Tropical Eigenvalue and Eigenvector]
\label{def:eigenvalue}
A scalar $\lambda \in S$ is a \emph{tropical eigenvalue} of matrix $\mat{A} \in S^{n \times n}$ if there exists a nonzero vector $\vect{v} \in S^n$ (the \emph{eigenvector}) such that:
\begin{equation}
    \mat{A} \otimes \vect{v} = \lambda \otimes \vect{v}
\end{equation}
In the max-plus semiring, this means:
\begin{equation}
    \max_{j=1}^{n}(A_{ij} + v_j) = \lambda + v_i \quad \text{for all } i
\end{equation}
\end{definition}

\begin{definition}[Weighted Digraph]
\label{def:weighted-digraph}
A matrix $\mat{A} \in \Rmax^{n \times n}$ defines a weighted directed graph $G(\mat{A}) = (V, E, w)$ where:
\begin{itemize}
    \item $V = \{1, 2, \ldots, n\}$ is the vertex set
    \item $(i, j) \in E$ iff $A_{ij} \neq -\infty$
    \item $w(i, j) = A_{ij}$ is the edge weight
\end{itemize}
That is, the entry $A_{ij}$ represents the weight of the edge from vertex $i$ to vertex $j$.
\end{definition}

\begin{definition}[Cycle Mean]
For a cycle $\sigma = (i_1, i_2, \ldots, i_k, i_1)$ in $G(\mat{A})$, the \emph{cycle mean} is:
\begin{equation}
    \mu(\sigma) = \frac{1}{k}\sum_{j=1}^{k} A_{i_{j+1}, i_j} = \frac{\text{weight}(\sigma)}{\text{length}(\sigma)}
\end{equation}
where indices are taken modulo $k$.
\end{definition}

\begin{theorem}[Maximum Cycle Mean Theorem]
\label{thm:mcm}
For a matrix $\mat{A} \in \Rmax^{n \times n}$ whose associated graph $G(\mat{A})$ is strongly connected, the unique tropical eigenvalue is:
\begin{equation}
    \lambda(\mat{A}) = \max_{\sigma \in \mathcal{C}} \mu(\sigma)
\end{equation}
where $\mathcal{C}$ is the set of all elementary cycles in $G(\mat{A})$.
\end{theorem}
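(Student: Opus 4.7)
The plan is to prove the Maximum Cycle Mean Theorem by combining a general upper bound on cycle means with an explicit eigenvector construction, then closing the argument with a uniqueness step. Let $\lambda^{*} := \max_{\sigma \in \mathcal{C}} \mu(\sigma)$ denote the maximum cycle mean. The two things to establish are that $\lambda^{*}$ is indeed an eigenvalue, and that every eigenvalue equals $\lambda^{*}$.

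First I would show that every tropical eigenvalue $\lambda$ satisfies $\lambda \geq \lambda^{*}$. Starting from the eigenvalue equation $\max_j(A_{ij} + v_j) = \lambda + v_i$, extract the pointwise bound $A_{ij} + v_j \leq \lambda + v_i$ for every edge $(i,j)$ of $G(\mat{A})$. Summing this inequality along the edges of any cycle $\sigma$ of length $k$ causes the $v$-terms to telescope (they form a cyclic rearrangement on each side), leaving $\mathrm{weight}(\sigma) \leq k\lambda$, i.e., $\mu(\sigma) \leq \lambda$. Maximizing over $\sigma$ yields the claimed lower bound.

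Next I would realize $\lambda^{*}$ as an eigenvalue by normalization and closure. Define $\mat{B} := (-\lambda^{*}) \otimes \mat{A}$, i.e., subtract $\lambda^{*}$ from every finite entry of $\mat{A}$. Every cycle in $G(\mat{B})$ then has mean at most $0$, the max-plus analog of the "no negative cycle" condition of Theorem~\ref{thm:kleene-convergence}; consequently the Kleene star $\mat{B}^{*} = \bigoplus_{k=0}^{n-1} \mat{B}^k$ converges and is finite-valued, and strong connectivity of $G(\mat{A})$ rules out $-\infty$ entries in any critical vertex's column. Fix a critical vertex $i^{*}$ lying on a cycle $\sigma^{*}$ with $\mu(\sigma^{*}) = \lambda^{*}$, and set $\vect{v}$ equal to the $i^{*}$-th column of $\mat{B}^{*}$. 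Using the fixed-point identity $\mat{B}^{*} = \mat{E} \oplus \mat{B} \otimes \mat{B}^{*}$ together with the fact that the critical cycle through $i^{*}$ contributes exactly weight $0$ in $G(\mat{B})$, one verifies $\mat{B} \otimes \vect{v} = \vect{v}$, which is equivalent to $\mat{A} \otimes \vect{v} = \lambda^{*} \otimes \vect{v}$.

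Finally, to upgrade the inequality in the first step to equality, I would use a cycle-tracing argument. Given any eigenpair $(\lambda, \vect{v})$, for each vertex $i$ pick $\phi(i)$ realizing the max in $\max_j(A_{ij} + v_j) = \lambda + v_i$; iterating $\phi$ from any starting vertex produces, in the finite graph $G(\mat{A})$, a trajectory that eventually enters a cycle $\sigma^{\dagger}$ along which $A_{ij} + v_j = \lambda + v_i$ holds with equality at every edge. Summing around $\sigma^{\dagger}$ gives $\mu(\sigma^{\dagger}) = \lambda$, hence $\lambda \leq \lambda^{*}$; combined with Step~1 this forces $\lambda = \lambda^{*}$. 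The hard part will be the existence step, specifically verifying that the $i^{*}$-th column of $\mat{B}^{*}$ is fixed rather than strictly increased by left-multiplication with $\mat{B}$. The critical-cycle structure is essential here, because it is precisely the vanishing of the total weight of $\sigma^{*}$ in $G(\mat{B})$ that prevents some longer concatenation from strictly dominating the column entries; strong connectivity then propagates this fixed-point property to every coordinate of $\vect{v}$.
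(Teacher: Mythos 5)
Your proof is correct and follows the standard route that the paper's own proof merely gestures at. The paper gives only a two-line sketch deferring to Baccelli et al.\ and Butkovi\v{c}: it records the telescoping identity obtained from $\lambda = \max_j(A_{ij}+v_j) - v_i$ summed around a cycle, which corresponds to your Steps~1 and~3, and says nothing at all about existence. Your proposal is therefore strictly more complete than what the paper provides: the decomposition into (i) $\lambda \ge \mu(\sigma)$ for every cycle via the pointwise inequality $A_{ij}+v_j \le \lambda + v_i$, (ii) realization of $\lambda^{*}$ as an eigenvalue via the critical column of $\mat{B}^{*}$ with $\mat{B} = (-\lambda^{*})\otimes \mat{A}$, and (iii) the argmax-tracing argument producing a cycle on which every edge inequality is tight, is exactly the textbook proof the paper outsources to its references. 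Your identification of the delicate point is also right: the existence step closes because $\mat{B}^{*} = \mat{E}\oplus \mat{B}\otimes\mat{B}^{*}$ reduces the claim to showing that the heaviest \emph{nonempty} closed walk at a critical vertex of $G(\mat{B})$ has weight exactly $0$, which the critical cycle supplies. One loose end you should record as a preliminary lemma: Steps~1 and~3 implicitly assume the eigenvector $\vect{v}$ has no $-\infty$ entries (otherwise the telescoping sums are not meaningful) and that $\lambda \neq -\infty$. Both follow from strong connectivity, since finiteness of $v_k$ propagates to $v_i$ along any edge $i \to k$ through the eigenvalue equation and a nonzero eigenvector has at least one finite entry, but the argument is incomplete without saying so.
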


\begin{proof}[Proof Sketch]
The full proof can be found in \cite{baccelli1992synchronization, butkovic2010max}. The key insight is that for any eigenvector $\vect{v}$:
\begin{equation}
    \lambda = \max_j (A_{ij} + v_j) - v_i
\end{equation}
Summing around any cycle and using the telescoping property of $v_i$ terms shows that $\lambda$ must equal the mean of that cycle. The maximum over all cycles gives the eigenvalue.
\end{proof}

\begin{figure}[htbp]
\centering
\begin{tikzpicture}[
    vertex/.style={circle, draw, thick, minimum size=0.9cm, font=\small\bfseries},
    edge/.style={->, thick, >=stealth},
    cycle1/.style={edge, red!70!black, line width=2pt},
    cycle2/.style={edge, blue!70!black, line width=2pt},
    cycle3/.style={edge, orange!80!black, line width=2pt},
    label/.style={font=\small}
]

\node[vertex, fill=gray!20] (1) at (0, 0) {1};
\node[vertex, fill=gray!20] (2) at (2.5, 1.2) {2};
\node[vertex, fill=gray!20] (3) at (2.5, -1.2) {3};
\node[vertex, fill=gray!20] (4) at (5, 0) {4};

\draw[edge, gray!50] (1) -- node[above left, font=\scriptsize] {4} (2);
\draw[edge, gray!50] (2) -- node[above right, font=\scriptsize] {3} (4);
\draw[edge, gray!50] (1) -- node[below left, font=\scriptsize] {2} (3);
\draw[edge, gray!50] (3) -- node[below right, font=\scriptsize] {5} (4);

\draw[cycle1, bend left=15] (1) to node[above, font=\scriptsize, text=red!70!black] {4} (2);
\draw[cycle1] (2) to node[right, font=\scriptsize, text=red!70!black] {1} (3);
\draw[cycle1, bend left=15] (3) to node[below, font=\scriptsize, text=red!70!black] {3} (1);

\draw[cycle2, bend left=20] (2) to node[above, font=\scriptsize, text=blue!70!black] {3} (4);
\draw[cycle2, bend left=20] (4) to node[below, font=\scriptsize, text=blue!70!black] {2} (2);

\node[draw, rounded corners, fill=white, text width=5.5cm, align=left, font=\scriptsize] at (9, 1) {
    \textbf{Cycle Means:}\\[4pt]
    \textcolor{red!70!black}{$\boldsymbol{\sigma_1}$}: $1 \to 2 \to 3 \to 1$\\
    $\mu(\sigma_1) = \frac{4+1+3}{3} = \frac{8}{3} \approx 2.67$\\[4pt]
    \textcolor{blue!70!black}{$\boldsymbol{\sigma_2}$}: $2 \to 4 \to 2$\\
    $\mu(\sigma_2) = \frac{3+2}{2} = \frac{5}{2} = 2.5$\\[4pt]
    \textcolor{orange!80!black}{$\boldsymbol{\sigma_3}$}: $3 \to 4 \to 3$ (if exists)\\
    $\mu(\sigma_3) = \ldots$
};

\node[draw, rounded corners, fill=green!10, text width=5.5cm, align=center, font=\small] at (9, -1.5) {
    \textbf{Tropical Eigenvalue}\\[4pt]
    $\lambda = \max\limits_{\sigma} \mu(\sigma) = \frac{8}{3}$\\[4pt]
    {\scriptsize (Maximum cycle mean)}
};

\node[draw, rounded corners, fill=yellow!10, text width=4.5cm, align=left, font=\scriptsize] at (2.5, -3.2) {
    \textbf{Physical meaning:}\\
    In scheduling, $\lambda$ is the minimum cycle time (throughput bottleneck).
};

\end{tikzpicture}
\caption{The tropical eigenvalue equals the maximum cycle mean. Each cycle $\sigma$ has mean $\mu(\sigma) = \text{weight}/\text{length}$. The eigenvalue $\lambda = \max_\sigma \mu(\sigma)$ determines the system's asymptotic behavior, such as the minimum cycle time in manufacturing or the throughput limit in scheduling.}
\label{fig:cycle-mean-eigenvalue}
\end{figure}
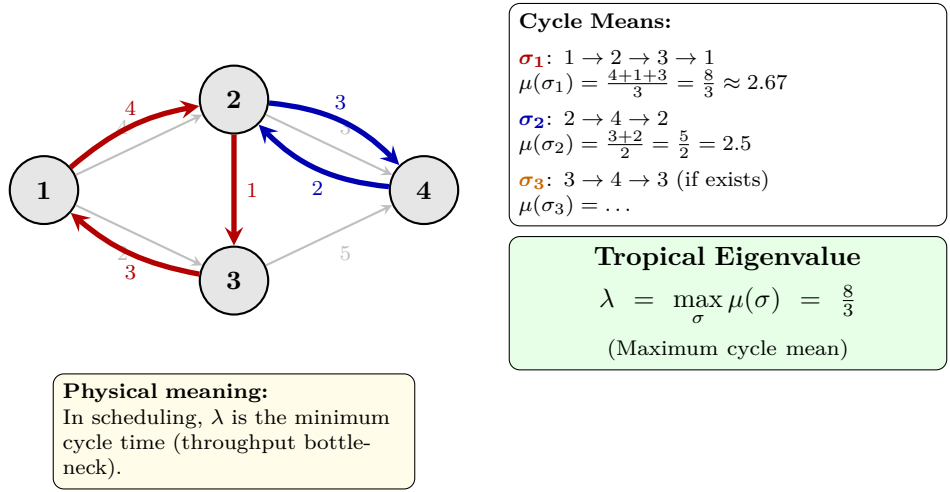

\begin{theorem}[Karp's Algorithm]
\label{thm:karp}
The maximum cycle mean can be computed in $\bigO(n^2 m)$ time (or $\bigO(n^3)$ for dense graphs) using the formula:
\begin{equation}
    \lambda(\mat{A}) = \max_{i \in V} \min_{0 \leq k < n} \frac{(\mat{A}^n)_{ii} - (\mat{A}^k)_{ii}}{n - k}
\end{equation}
where $(\mat{A}^k)_{ii}$ denotes the weight of the heaviest $k$-step walk from $i$ to $i$.
\end{theorem}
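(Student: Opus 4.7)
The plan is to adapt Karp's classical argument for the maximum cycle mean via a normalization step followed by two matching one-sided inequalities. First I would normalize so that $\lambda(\mat{A}) = 0$: shifting every finite entry of $\mat{A}$ to $A_{ij} - \lambda$ yields $\mat{B}$ with $(\mat{B}^k)_{ii} = (\mat{A}^k)_{ii} - k\lambda$, so both sides of the claimed identity shift by $\lambda$, and by Theorem~\ref{thm:mcm} the eigenvalue of $\mat{B}$ is $0$. It thus suffices to treat the case where every cycle has non-positive weight and some critical cycle $\sigma^{*}$ of length $\ell$ attains weight $0$.

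For the upper bound $\max_i \min_k \frac{(\mat{A}^n)_{ii} - (\mat{A}^k)_{ii}}{n-k} \le 0$, I would fix any vertex $i$ and let $W$ be a length-$n$ closed walk at $i$ realizing $(\mat{A}^n)_{ii}$. Since $W$ uses $n$ edges on only $n$ vertices, pigeonhole yields a repeated vertex and hence a sub-cycle $C$ of length $c \ge 1$ and weight $w(C) \le 0$. Excising $C$ from $W$ leaves a closed walk at $i$ of length $n - c$ and weight $(\mat{A}^n)_{ii} - w(C)$, so $(\mat{A}^{n-c})_{ii} \ge (\mat{A}^n)_{ii} - w(C)$. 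Taking $k = n - c$ inside the inner minimum then gives $\frac{(\mat{A}^n)_{ii} - (\mat{A}^{n-c})_{ii}}{c} \le \frac{w(C)}{c} \le 0$, so the inner min is at most $0$ for every $i$ and the outer max inherits the bound.

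The matching lower bound $\max_i \min_k \frac{(\mat{A}^n)_{ii} - (\mat{A}^k)_{ii}}{n-k} \ge 0$ is the subtle direction. My plan is to pick $i$ on the critical cycle $\sigma^{*}$ and exhibit a closed walk at $i$ of length exactly $n$ and weight $0$; granted this, every closed walk at $i$ has non-positive weight (its cycle decomposition consists of non-positive cycles), so $(\mat{A}^k)_{ii} \le 0 = (\mat{A}^n)_{ii}$ for all $k \ge 1$ and $(\mat{A}^0)_{ii} = 0$, forcing $(\mat{A}^n)_{ii} - (\mat{A}^k)_{ii} \ge 0$ for every admissible $k$ and hence $\min_k \ge 0$ at this $i$. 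The length-matching construction would combine repeated traversals of $\sigma^{*}$ with auxiliary excursions stitched in via strong connectivity of $G(\mat{A})$.

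I expect this length-matching step to be the principal obstacle: when $\ell$ does not divide $n$, naive repetition of $\sigma^{*}$ fails, and one must either invoke multiple cycles whose lengths generate the residue $n \bmod \ell$ (arguing via the period of the strongly connected graph and standard numerical-semigroup facts) or recast the argument in terms of walks from a distinguished source vertex so that the diagonal entries $(\mat{A}^k)_{ii}$ become entries of a single column of $\mat{A}^k$. Either route concentrates the combinatorial delicacy into this one step, while the normalization and pigeonhole arguments are routine.
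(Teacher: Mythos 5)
Your normalization and your pigeonhole argument for the upper bound are sound, and you have put your finger on exactly the right pressure point: the lower bound needs a closed walk of length exactly $n$ and normalized weight $0$ at some vertex of the critical cycle. But that step is not merely delicate --- for the formula as stated, with diagonal entries $(\mat{A}^k)_{ii}$, it cannot be carried out, because the stated identity is false. Take $n=3$ with edges $1\to 1$ (weight $0$), $1\to 2$ (weight $0$), $2\to 3$ (weight $10$), $3\to 2$ (weight $10$), $3\to 1$ (weight $0$). The graph is strongly connected and the maximum cycle mean is $\lambda=10$, attained on $2\to 3\to 2$. Yet $(\mat{A}^3)_{11}=(\mat{A}^3)_{22}=(\mat{A}^3)_{33}=10$ while $(\mat{A}^2)_{22}=(\mat{A}^2)_{33}=20$, so the inner minimum equals $-10$ at vertices $2$ and $3$ and $10/3$ at vertex $1$, and the outer maximum is $10/3\neq 10$. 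The obstruction is precisely the one you anticipated: the critical cycle has length $2$, which does not divide $n=3$, and no weight-preserving excursion of odd length exists to pad a closed walk at vertex $2$ up to length exactly $3$. (In graphs of period $p>1$ with $p\nmid n$ one can even have $(\mat{A}^n)_{ii}=-\infty$ for every $i$, leaving the formula undefined.) Your excision argument does survive and shows the right-hand side never exceeds $\lambda$; it is the matching lower bound that is genuinely unavailable for the diagonal formulation.

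The escape route you sketch --- replacing the diagonal entries by the entries of a single column of $\mat{A}^k$, i.e.\ by the maximal weights $D_k(v)$ of $k$-step walks from one fixed source $s$ to each vertex $v$ --- is therefore not an optional variant but the essential content of Karp's theorem: with a fixed source, strong connectivity lets you route from $s$ onto the critical cycle and wind around it, so that for \emph{some} vertex $v$ on that cycle every optimal prefix length is realizable, which is exactly what closed walks at a single vertex fail to provide. For comparison, the paper does not actually prove the statement: it cites Karp and offers the heuristic $(\mat{A}^m)_{ii}\approx m\lambda+C_i$, which does not engage with this periodicity issue and does not by itself justify the finite truncation at $m=n$. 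Your plan is more honest than the paper's sketch about where the difficulty sits; to complete it you must switch to the source-based quantities $D_k(v)$, after which your two one-sided inequalities go through essentially as you describe.
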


\begin{proof}
See Karp \cite{karp1978characterization}. The formula follows from the observation that for large $m$, $(\mat{A}^m)_{ii} \approx m \cdot \lambda + C_i$ for some constant $C_i$. Taking differences eliminates the constant and reveals the cycle mean.
\end{proof}

\begin{algorithm}[htbp]
\caption{Karp's Algorithm for Maximum Cycle Mean}
\label{alg:karp}
\begin{algorithmic}[1]
\Require Matrix $\mat{A} \in \Rmax^{n \times n}$
\Ensure Maximum cycle mean $\lambda$
\State Compute $\mat{A}^0, \mat{A}^1, \ldots, \mat{A}^n$ and store diagonal entries
\State $\lambda \leftarrow -\infty$
\For{$i = 1$ to $n$}
    \State $\mu_i \leftarrow +\infty$ \Comment{Initialize to $+\infty$ for min}
    \For{$k = 0$ to $n-1$}
        \If{$(\mat{A}^n)_{ii} \neq -\infty$ and $(\mat{A}^k)_{ii} \neq -\infty$}
            \State $\mu_i \leftarrow \min\left(\mu_i, \frac{(\mat{A}^n)_{ii} - (\mat{A}^k)_{ii}}{n - k}\right)$ \Comment{Min over $k$}
        \EndIf
    \EndFor
    \State $\lambda \leftarrow \max(\lambda, \mu_i)$ \Comment{Max over $i$}
\EndFor
\State \Return $\lambda$
\end{algorithmic}
\end{algorithm}

\begin{theorem}[Asymptotic Behavior]
\label{thm:asymptotic}
For a matrix $\mat{A}$ with maximum cycle mean $\lambda$:
\begin{equation}
    \lim_{k \to \infty} \frac{(\mat{A}^k)_{ij}}{k} = \lambda
\end{equation}
for all $i, j$ that lie on a path through a critical cycle (a cycle achieving the maximum mean).
\end{theorem}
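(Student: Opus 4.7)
The plan is to establish the limit by a sandwich argument, proving matching bounds of the form $k\lambda - C_1 \le (\mat{A}^k)_{ij} \le k\lambda + C_2$ with constants $C_1, C_2$ depending only on $\mat{A}$, and then dividing by $k$ and passing to the limit.

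For the upper bound, which holds for \emph{any} $i, j$, I would take an arbitrary $k$-edge walk $\pi$ from $i$ to $j$ of finite weight and decompose it into a simple path of length at most $n-1$ together with a multiset of simple cycles whose lengths sum to $k$ minus the length of the simple part. By Theorem~\ref{thm:mcm}, every simple cycle $\sigma$ of $G(\mat{A})$ satisfies $\mu(\sigma) \le \lambda$, so the weight contributed by the cyclic portion is at most $\lambda$ times its total length, itself at most $k$. The acyclic portion contributes at most $(n-1)W$, where $W$ denotes the maximum finite entry of $\mat{A}$. Taking the tropical sum (the maximum) over all walks yields $(\mat{A}^k)_{ij} \le k\lambda + (n-1)W$.

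For the lower bound, I would exploit the hypothesis that $i, j$ lie on a walk through a critical cycle $\sigma^*$ of length $L$ and mean exactly $\lambda$. Fix vertices $u, v$ on $\sigma^*$ and finite-weight walks $\pi_1 : i \rightsquigarrow u$, $\pi_2 : v \rightsquigarrow j$ of lengths $p, q$, together with the segment $\pi_3 : u \rightsquigarrow v$ along $\sigma^*$ of length $s \le L$. For large $k$, write $k - p - q - s = mL + r$ with $0 \le r < L$ and $m \ge 0$, and construct the concatenation $\pi_1 \cdot (\sigma^*)^{m} \cdot \pi_3 \cdot \tau$, where $\tau$ is a residual $r$-edge walk (for instance, a prefix of $\sigma^*$ starting at $v$ padded within the cycle so that the total length is exactly $k$). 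Each of the $m$ full traversals of $\sigma^*$ contributes exactly $L\lambda$, while the four non-cyclic pieces $\pi_1, \pi_3, \pi_2, \tau$ contribute weights bounded below by a fixed finite constant independent of $k$. This produces a valid $k$-walk witnessing $(\mat{A}^k)_{ij} \ge k\lambda - C_1$ for all $k \ge k_0$.

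Dividing both bounds by $k$ and letting $k \to \infty$ squeezes $(\mat{A}^k)_{ij}/k$ to $\lambda$. The main obstacle I anticipate is the bookkeeping in the lower-bound construction, specifically aligning $k$ with the cycle length $L$: an arbitrary partial traversal of $\sigma^*$ has per-edge weight only bounded \emph{above} by $\lambda$, not equal to it, so the cleanest treatment of the residue $k \not\equiv p+q+s \pmod{L}$ is to allow at most one extra full loop of $\sigma^*$ (or symmetrically a short $r$-edge tail of finite but uncontrolled weight) and absorb the discrepancy into the additive constant $C_1$. The upper bound is comparatively routine once the simple-path-plus-cycles decomposition is invoked, since it only relies on the already-established characterization of $\lambda$ as the maximum cycle mean.
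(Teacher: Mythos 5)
The paper states Theorem~\ref{thm:asymptotic} without any proof, so there is nothing to compare your argument against; I can only assess it on its own merits. Your overall strategy --- sandwiching $(\mat{A}^k)_{ij}$ between $k\lambda - C_1$ and $k\lambda + C_2$, with the upper bound coming from decomposing an arbitrary $k$-walk into a simple path plus simple cycles of mean at most $\lambda$, and the lower bound from an explicit $k$-walk that pumps a critical cycle --- is the standard and correct route for this kind of statement. The upper bound is fine up to a cosmetic point: when $\lambda<0$ you cannot replace $\lambda\ell_c$ by $\lambda k$ for free (the cyclic portion has length $\ell_c = k-\ell_p$, not $k$), so your constant must also absorb an extra $(n-1)\lvert\lambda\rvert$; this changes nothing structurally.

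The genuine gap is in the residue handling of the lower bound, and you have misdiagnosed it as a weight-bookkeeping issue when it is an \emph{existence} issue. After fixing $\pi_1$, $\pi_3$, $\pi_2$, you need a closed walk of length exactly $r$ at a vertex of $\sigma^*$ with $0<r<L$; within $\sigma^*$ the only closed walks have length a multiple of $L$, and the graph may contain no other usable cycles between $i$ and $j$. For the bad residues there is then \emph{no} $k$-walk from $i$ to $j$ at all, $(\mat{A}^k)_{ij}=-\infty$, and the stated limit fails outright: take the two-vertex graph with only the edges $1\to 2$ and $2\to 1$ of weight $0$, so $\lambda=0$ and vertex $1$ lies on the unique (critical) cycle, yet $(\mat{A}^k)_{11}$ alternates between $0$ and $-\infty$. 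No amount of padding or ``one extra loop'' rescues this; the obstruction is the cyclicity (the gcd of the relevant cycle lengths), and the theorem as stated is false without an additional hypothesis such as cyclicity one, restricting the limit to those $k$ for which $(\mat{A}^k)_{ij}$ is finite, or replacing $\lim$ by $\limsup$. Under any such hypothesis your construction goes through for all sufficiently large $k$, and the rest of your argument is sound; you should state that hypothesis explicitly rather than hoping the bookkeeping resolves itself.
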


\begin{remark}[Scheduling Interpretation]
In scheduling applications, the tropical eigenvalue has a crucial interpretation: it represents the \emph{minimum achievable cycle time} for a periodic system. If tasks repeat with precedence constraints encoded in $\mat{A}$, then $\lambda$ is the minimum time between successive completions of the system, i.e., the throughput is $1/\lambda$.
\end{remark}

\subsection{Connection to Classical Graph Algorithms}
\label{subsec:graph-algorithms}

Tropical algebra provides a unifying framework for classical graph algorithms.

\begin{theorem}[Algorithm Unification]
\label{thm:unification}
The following classical algorithms are specializations of tropical matrix operations:
\begin{enumerate}[label=(\roman*)]
    \item \textbf{Floyd-Warshall}: Computing the Kleene star $\mat{A}^*$ in the min-plus semiring yields all-pairs shortest paths.
    
    \item \textbf{Bellman-Ford}: The iteration $\vect{d}^{(k+1)} = \mat{A} \otimes \vect{d}^{(k)} \oplus \vect{b}$ in the min-plus semiring converges to single-source shortest paths after at most $n-1$ iterations.
    
    \item \textbf{Warshall's Algorithm}: Computing $\mat{A}^*$ in the Boolean semiring yields the transitive closure (reachability matrix).
    
    \item \textbf{Widest Path}: Computing $\mat{A}^*$ in the max-min semiring yields all-pairs maximum-bandwidth paths.
\end{enumerate}
\end{theorem}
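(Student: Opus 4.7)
The plan is to prove each of the four parts by specializing the general tropical identities established in Theorems~\ref{thm:powers-paths}, \ref{thm:kleene-convergence}, and \ref{thm:kleene-apsp} to the appropriate semiring, and then verifying that the resulting update rule coincides with the classical algorithm.

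Parts (i), (iii), and (iv) all follow the same template. I would instantiate the inner update of Algorithm~\ref{alg:closure}, namely $D_{ij} \leftarrow D_{ij} \oplus (D_{ik} \otimes D_{kj})$, in the appropriate semiring. For min-plus this becomes the Floyd--Warshall relaxation $D_{ij} \leftarrow \min\bigl(D_{ij},\, D_{ik} + D_{kj}\bigr)$, and the output equals the all-pairs shortest path matrix by Theorem~\ref{thm:kleene-apsp}, giving (i). In the Boolean semiring it becomes $D_{ij} \leftarrow D_{ij} \vee (D_{ik} \wedge D_{kj})$, which is precisely Warshall's reachability update; combining Theorem~\ref{thm:powers-paths}(iv) with the definition of the Kleene star yields the transitive closure, giving (iii). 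In the max-min semiring it becomes $D_{ij} \leftarrow \max\bigl(D_{ij},\, \min(D_{ik}, D_{kj})\bigr)$, the widest-path relaxation; Theorem~\ref{thm:powers-paths}(iii) identifies $(\mat{A}^k)_{ij}$ as the widest bottleneck over $k$-edge walks, and one argues separately that in max-min extending a walk by traversing a cycle inserts an additional $\min$ that can only decrease the bottleneck, so the optimum is attained on simple paths of length $\leq n-1$ and the Kleene sum converges.

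For part (ii), I would unroll the iteration using left-distributivity from Proposition~\ref{prop:matrix-mult}. Initializing $\vect{d}^{(0)} = \vect{b}$, a straightforward induction yields
\begin{equation*}
    \vect{d}^{(k)} = \bigl(\mat{E} \oplus \mat{A} \oplus \mat{A}^2 \oplus \cdots \oplus \mat{A}^k\bigr) \otimes \vect{b},
\end{equation*}
since $\mat{A} \otimes \vect{d}^{(k-1)} \oplus \vect{b} = \mat{A} \otimes \bigl(\bigoplus_{j=0}^{k-1} \mat{A}^j\bigr) \otimes \vect{b} \,\oplus\, \mat{E} \otimes \vect{b}$. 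By Theorem~\ref{thm:kleene-convergence}, the parenthesized partial sum stabilizes at $\mat{A}^*$ once $k \geq n-1$, provided there are no negative cycles. Choosing $\vect{b}$ to be the source indicator vector with $b_s = 0$ and $b_j = +\infty$ for $j \neq s$ collapses $(\mat{A}^* \otimes \vect{b})_i = \min_j\bigl((\mat{A}^*)_{ij} + b_j\bigr) = (\mat{A}^*)_{is}$, which Theorem~\ref{thm:kleene-apsp} identifies as the shortest path distance from the source $s$ to vertex $i$.

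The main obstacle is part (ii), where care is needed in two places: justifying the inductive unrolling of the recurrence (which relies on left-distributivity of $\otimes$ over $\oplus$ together with the matrix-level identities of Proposition~\ref{prop:matrix-mult}) and ensuring that the chosen initialization $\vect{d}^{(0)} = \vect{b}$ agrees with the conventional Bellman--Ford initialization, or equivalently that any alternative starting vector is dominated after one step by the contribution of the $\mat{E}$ term. The remaining three parts reduce to a change of semiring plus an appeal to results already proved, so no new conceptual content is required beyond the semiring substitution and, for (iv), the monotonicity argument for max-min convergence.
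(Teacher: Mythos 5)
Your proposal is correct and follows essentially the same route as the paper's proof: parts (i), (iii), and (iv) are reduced to Theorem~\ref{thm:powers-paths} and the definition of the Kleene star, and part (ii) is handled by unrolling the recurrence into the partial sum $\bigl(\bigoplus_{j} \mat{A}^j\bigr) \otimes \vect{b}$ and invoking convergence after $n-1$ terms. You supply several details the paper leaves implicit (the explicit induction via distributivity, the source-indicator choice of $\vect{b}$, and the simple-path monotonicity argument needed for max-min convergence), but the underlying argument is the same.
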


\begin{proof}
These follow directly from Theorem~\ref{thm:powers-paths} and the definition of the Kleene star. For (ii), note that:
\begin{align*}
    \vect{d}^{(k)} &= \mat{A}^{k-1} \otimes \vect{b} \oplus \mat{A}^{k-2} \otimes \vect{b} \oplus \cdots \oplus \mat{A} \otimes \vect{b} \oplus \vect{b} \\
    &= \left(\bigoplus_{j=0}^{k-1} \mat{A}^j\right) \otimes \vect{b}
\end{align*}
As $k \to n-1$, this converges to $\mat{A}^* \otimes \vect{b}$, the shortest paths from the source.
\end{proof}

\section{Library Design}
\label{sec:design}

This section describes the architecture and design principles underlying \palma{}. Our primary goals are portability, efficiency on embedded platforms, and ease of use for both novice and expert users.

\subsection{Design Principles}

\palma{} is guided by the following design principles:

\begin{enumerate}[leftmargin=*]
    \item \textbf{Zero Dependencies}: The library is implemented in pure C99 with no external dependencies beyond the standard library. This ensures portability across all platforms with a C compiler.
    
    \item \textbf{Embedded-First Design}: All design decisions prioritize embedded platforms. We use fixed-size integer arithmetic, avoid dynamic memory allocation where possible, and provide compile-time configuration options.
    
    \item \textbf{SIMD Transparency}: NEON optimizations are enabled automatically on ARM platforms but are completely transparent to the user. The same API works identically on all platforms.
    
    \item \textbf{Multiple Representations}: We support both dense and sparse matrix formats, allowing users to choose the appropriate representation for their problem.
    
    \item \textbf{Semiring Polymorphism}: All operations accept a semiring parameter, enabling code reuse across different tropical algebras.
\end{enumerate}

\subsection{Data Types}

\subsubsection{Value Representation}

\palma{} uses 32-bit signed integers for all tropical values:

\begin{lstlisting}[caption={Value type definition}]
typedef int32_t palma_val_t;

#define PALMA_NEG_INF  INT32_MIN  // -infinity (max-plus zero)
#define PALMA_POS_INF  INT32_MAX  // +infinity (min-plus zero)
\end{lstlisting}

This choice offers several advantages over floating-point:
\begin{itemize}
    \item \textbf{Determinism}: Integer arithmetic is exact and reproducible across platforms.
    \item \textbf{SIMD efficiency}: ARM NEON integer operations often have lower latency than floating-point.
    \item \textbf{Comparison simplicity}: Integer comparison is simpler than floating-point, especially near infinity.
    \item \textbf{Sufficient range}: The range $[-2^{31}, 2^{31}-1]$ is adequate for most practical applications.
\end{itemize}

\subsubsection{Semiring Enumeration}

\begin{lstlisting}[caption={Semiring type enumeration}]
typedef enum {
    PALMA_MAXPLUS = 0,  // (max, +) - scheduling
    PALMA_MINPLUS = 1,  // (min, +) - shortest paths
    PALMA_MAXMIN  = 2,  // (max, min) - bottleneck
    PALMA_MINMAX  = 3,  // (min, max) - reliability
    PALMA_BOOLEAN = 4   // (or, and) - reachability
} palma_semiring_t;
\end{lstlisting}

\subsubsection{Dense Matrix Structure}

\begin{lstlisting}[caption={Dense matrix structure}]
typedef struct {
    size_t rows;        // Number of rows
    size_t cols;        // Number of columns
    palma_val_t *data;  // Row-major storage
} palma_matrix_t;
\end{lstlisting}

Dense matrices use row-major storage for cache efficiency during row-wise operations, which dominate matrix-vector products.

\subsubsection{Sparse Matrix Structure (CSR)}

For graphs with low edge density, we provide Compressed Sparse Row (CSR) format:

\begin{lstlisting}[caption={CSR sparse matrix structure}]
typedef struct {
    size_t rows;           // Number of rows
    size_t cols;           // Number of columns
    size_t nnz;            // Number of non-zeros
    palma_val_t *values;   // Non-zero values [nnz]
    palma_idx_t *col_idx;  // Column indices [nnz]
    palma_idx_t *row_ptr;  // Row pointers [rows+1]
    palma_semiring_t semiring;  // Associated semiring
} palma_sparse_t;
\end{lstlisting}

CSR provides $O(1)$ row access and efficient sparse matrix-vector multiplication, critical for iterative graph algorithms.

\subsection{API Overview}

\palma{} provides APIs at three levels of abstraction:

\subsubsection{Level 1: Semiring Operations}

Basic scalar operations for each semiring:

\begin{lstlisting}[caption={Semiring operation API}]
// Get zero and identity elements
palma_val_t palma_zero(palma_semiring_t s);
palma_val_t palma_one(palma_semiring_t s);

// Binary operations
palma_val_t palma_add(palma_val_t a, palma_val_t b, 
                      palma_semiring_t s);
palma_val_t palma_mul(palma_val_t a, palma_val_t b, 
                      palma_semiring_t s);
\end{lstlisting}

\subsubsection{Level 2: Matrix Operations}

Core linear algebra operations:

\begin{lstlisting}[caption={Matrix operation API}]
// Matrix creation and destruction
palma_matrix_t* palma_matrix_create(size_t rows, size_t cols);
void palma_matrix_destroy(palma_matrix_t *mat);

// Matrix arithmetic
palma_matrix_t* palma_matrix_add(const palma_matrix_t *A,
    const palma_matrix_t *B, palma_semiring_t s);
palma_matrix_t* palma_matrix_mul(const palma_matrix_t *A,
    const palma_matrix_t *B, palma_semiring_t s);
palma_matrix_t* palma_matrix_closure(const palma_matrix_t *A,
    palma_semiring_t s);
palma_matrix_t* palma_matrix_power(const palma_matrix_t *A,
    int k, palma_semiring_t s);

// Matrix-vector operations
void palma_matvec(const palma_matrix_t *A, const palma_val_t *x,
    palma_val_t *y, palma_semiring_t s);
\end{lstlisting}

\subsubsection{Level 3: Application APIs}

High-level APIs for common applications:

\begin{lstlisting}[caption={Application-level API}]
// Graph algorithms
void palma_single_source_paths(const palma_matrix_t *A,
    size_t source, palma_val_t *dist, palma_semiring_t s);
palma_matrix_t* palma_all_pairs_paths(const palma_matrix_t *A,
    palma_semiring_t s);
palma_matrix_t* palma_reachability(const palma_matrix_t *A);
palma_matrix_t* palma_bottleneck_paths(const palma_matrix_t *A);

// Eigenvalue computation
palma_val_t palma_eigenvalue(const palma_matrix_t *A,
    palma_semiring_t s);
void palma_eigenvector(const palma_matrix_t *A, palma_val_t *v,
    palma_val_t *lambda, palma_semiring_t s, int max_iter);

// Scheduling
palma_scheduler_t* palma_scheduler_create(size_t n_tasks,
    bool cyclic);
void palma_scheduler_add_constraint(palma_scheduler_t *sched,
    size_t from, size_t to, palma_val_t weight);
int palma_scheduler_solve(palma_scheduler_t *sched, 
    palma_val_t start_time);
palma_val_t palma_scheduler_cycle_time(palma_scheduler_t *sched);
\end{lstlisting}

\section{Implementation}
\label{sec:implementation}

This section details the implementation of \palma{}, with particular emphasis on ARM NEON optimization and sparse matrix techniques.

\paragraph{Target Architecture Context.}
\palma{} targets ARM-based embedded platforms such as the Raspberry Pi 4, characterized by in-order or lightly out-of-order cores, modest cache sizes, and 128-bit NEON SIMD units optimized for integer arithmetic. These architectural features directly inform \palma{}'s design choices: integer-only semiring operations avoid floating-point pipeline stalls, cache-conscious blocking strategies improve data reuse in matrix kernels, and SIMD vectorization of matrix-vector products exploits the four-wide integer lanes available in NEON registers.

\subsection{Semiring Operation Implementation}

The core semiring operations are implemented using inline functions with switch statements, which modern compilers optimize effectively:

\begin{lstlisting}[caption={Semiring addition implementation}]
static inline palma_val_t palma_add(palma_val_t a, palma_val_t b,
                                     palma_semiring_t s) {
    switch (s) {
        case PALMA_MAXPLUS:
        case PALMA_MAXMIN:
            return (a > b) ? a : b;  // max
        case PALMA_MINPLUS:
        case PALMA_MINMAX:
            return (a < b) ? a : b;  // min
        case PALMA_BOOLEAN:
            return a | b;            // OR
    }
    return a;
}
\end{lstlisting}

Multiplication requires careful handling of infinity. Note that each semiring uses only one infinity value: max-plus uses $-\infty$ as the absorbing element, while min-plus uses $+\infty$. The other infinity is not a valid value in that semiring's domain.

\begin{lstlisting}[caption={Semiring multiplication with infinity handling}]
static inline palma_val_t palma_mul(palma_val_t a, palma_val_t b,
                                     palma_semiring_t s) {
    switch (s) {
        case PALMA_MAXPLUS:
            // Domain: R union {-inf}. Absorbing: -inf + x = -inf
            if (a == PALMA_NEG_INF || b == PALMA_NEG_INF)
                return PALMA_NEG_INF;
            return a + b;
        case PALMA_MINPLUS:
            // Domain: R union {+inf}. Absorbing: +inf + x = +inf
            if (a == PALMA_POS_INF || b == PALMA_POS_INF)
                return PALMA_POS_INF;
            return a + b;
        case PALMA_MAXMIN:
            return (a < b) ? a : b;  // min
        case PALMA_MINMAX:
            return (a > b) ? a : b;  // max
        case PALMA_BOOLEAN:
            return a & b;            // AND
    }
    return a;
}
\end{lstlisting}

\subsection{ARM NEON SIMD Optimization}
\label{subsec:neon}

ARM NEON provides 128-bit vector registers capable of processing four 32-bit integers simultaneously. \palma{} leverages NEON for the innermost loops of matrix operations.

\subsubsection{NEON Architecture Overview}

The ARM Cortex-A72 processor in Raspberry Pi 4 features:
\begin{itemize}
    \item 32 128-bit NEON registers (Q0--Q31)
    \item Parallel execution of SIMD and scalar operations
    \item Single-cycle throughput for most integer NEON instructions
\end{itemize}

Key NEON intrinsics used in \palma{}:
\begin{itemize}
    \item \texttt{vld1q\_s32}: Load 4 integers into a vector register
    \item \texttt{vst1q\_s32}: Store 4 integers from a vector register
    \item \texttt{vmaxq\_s32}: Parallel maximum of 4 integer pairs
    \item \texttt{vminq\_s32}: Parallel minimum of 4 integer pairs
    \item \texttt{vaddq\_s32}: Parallel addition of 4 integer pairs
\end{itemize}

\subsubsection{Vectorized Matrix-Vector Multiplication}

The matrix-vector product is the most frequently executed operation in iterative algorithms. Our NEON implementation processes four columns simultaneously:

\begin{lstlisting}[caption={NEON-optimized matrix-vector multiplication (max-plus)}]
void palma_matvec_maxplus_neon(const palma_matrix_t *A,
                                const palma_val_t *x,
                                palma_val_t *y) {
    size_t n = A->rows, m = A->cols;
    size_t m4 = m & ~3;  // Round down to multiple of 4
    
    for (size_t i = 0; i < n; i++) {
        const palma_val_t *row = &A->data[i * m];
        int32x4_t max_vec = vdupq_n_s32(PALMA_NEG_INF);
        
        // Process 4 elements at a time
        for (size_t j = 0; j < m4; j += 4) {
            int32x4_t a_vec = vld1q_s32(&row[j]);
            int32x4_t x_vec = vld1q_s32(&x[j]);
            int32x4_t sum = vaddq_s32(a_vec, x_vec);
            max_vec = vmaxq_s32(max_vec, sum);
        }
        
        // Horizontal reduction
        int32_t result = vmaxvq_s32(max_vec);
        
        // Handle remaining elements
        for (size_t j = m4; j < m; j++) {
            int32_t val = row[j] + x[j];
            if (val > result) result = val;
        }
        
        y[i] = result;
    }
}
\end{lstlisting}

\subsubsection{Vectorized Matrix Multiplication}

Matrix multiplication benefits from blocking to improve cache utilization combined with NEON vectorization:

\begin{lstlisting}[caption={NEON-optimized matrix multiplication kernel}]
// Inner kernel: compute C[i,j] = max_k(A[i,k] + B[k,j])
static inline palma_val_t matmul_kernel_maxplus_neon(
    const palma_val_t *A_row, const palma_val_t *B_col,
    size_t K, size_t B_stride) {
    
    int32x4_t max_vec = vdupq_n_s32(PALMA_NEG_INF);
    size_t K4 = K & ~3;
    
    for (size_t k = 0; k < K4; k += 4) {
        int32x4_t a_vec = vld1q_s32(&A_row[k]);
        // Gather B column elements (strided access)
        int32x4_t b_vec = {B_col[0], B_col[B_stride],
                          B_col[2*B_stride], B_col[3*B_stride]};
        B_col += 4 * B_stride;
        
        int32x4_t sum = vaddq_s32(a_vec, b_vec);
        max_vec = vmaxq_s32(max_vec, sum);
    }
    
    return vmaxvq_s32(max_vec);  // Horizontal max
}
\end{lstlisting}

\subsubsection{NEON Optimization Analysis}

The theoretical speedup from NEON vectorization is bounded by:
\begin{equation}
    \text{Speedup} \leq \frac{\text{Vector Width}}{\text{Scalar Width}} = \frac{128}{32} = 4\times
\end{equation}

In practice, several factors reduce this:
\begin{itemize}
    \item \textbf{Memory bandwidth}: Large matrices become memory-bound
    \item \textbf{Horizontal reductions}: The final max/min across vector lanes is scalar
    \item \textbf{Non-aligned access}: Unaligned loads incur penalties on some microarchitectures
    \item \textbf{Loop overhead}: Setup and cleanup for non-multiple-of-4 dimensions
\end{itemize}

Our experiments (Section~\ref{sec:experiments}) show that \palma{} achieves up to 1.80$\times$ speedup with NEON, with the peak occurring at matrix sizes that fit in L2 cache (64$\times$64 to 128$\times$128).

\subsection{Sparse Matrix Implementation}
\label{subsec:sparse}

\subsubsection{CSR Format}

The Compressed Sparse Row (CSR) format stores only non-zero elements:

\begin{figure}[htbp]
\centering
\begin{tikzpicture}[scale=0.8]
    \node at (-2, 2) {$\mat{A} = $};
    \draw (0,0) grid (4,4);
    \node at (0.5, 3.5) {5}; \node at (2.5, 3.5) {3};
    \node at (1.5, 2.5) {2}; \node at (3.5, 2.5) {7};
    \node at (0.5, 1.5) {1};
    \node at (2.5, 0.5) {4}; \node at (3.5, 0.5) {6};
    
    \node[anchor=west] at (5.5, 3.5) {\texttt{values = [5, 3, 2, 7, 1, 4, 6]}};
    \node[anchor=west] at (5.5, 2.5) {\texttt{col\_idx = [0, 2, 1, 3, 0, 2, 3]}};
    \node[anchor=west] at (5.5, 1.5) {\texttt{row\_ptr = [0, 2, 4, 5, 7]}};
\end{tikzpicture}
\caption{CSR representation of a sparse matrix. Empty cells represent the semiring zero ($-\infty$ for max-plus).}
\label{fig:csr}
\end{figure}

Memory usage comparison for an $n \times n$ matrix with $\nnz$ non-zeros:
\begin{align}
    \text{Dense:} \quad & n^2 \times 4 \text{ bytes} \\
    \text{CSR:} \quad & \nnz \times 4 + \nnz \times 4 + (n+1) \times 4 = 8\nnz + 4n + 4 \text{ bytes}
\end{align}

CSR is more memory-efficient when:
\begin{equation}
    8\nnz + 4n + 4 < 4n^2 \implies \nnz < \frac{n^2 - n - 1}{2} \approx \frac{n^2}{2}
\end{equation}

Thus, CSR saves memory for matrices with sparsity (fraction of zeros) greater than approximately 50\%.

\subsubsection{Sparse Matrix-Vector Multiplication}

\begin{lstlisting}[caption={Sparse matrix-vector multiplication (CSR)}]
void palma_sparse_matvec(const palma_sparse_t *A,
                         const palma_val_t *x,
                         palma_val_t *y) {
    palma_val_t zero = palma_zero(A->semiring);
    
    for (size_t i = 0; i < A->rows; i++) {
        palma_val_t sum = zero;
        for (size_t j = A->row_ptr[i]; j < A->row_ptr[i+1]; j++) {
            palma_val_t prod = palma_mul(A->values[j], 
                                         x[A->col_idx[j]],
                                         A->semiring);
            sum = palma_add(sum, prod, A->semiring);
        }
        y[i] = sum;
    }
}
\end{lstlisting}

Complexity: $O(\nnz)$ versus $O(n^2)$ for dense, providing significant speedup for sparse graphs.

\subsubsection{Sparse Matrix Multiplication}

Sparse matrix multiplication is more complex, as the result sparsity pattern must be computed dynamically:

\begin{lstlisting}[caption={Sparse matrix multiplication overview}]
palma_sparse_t* palma_sparse_mul(const palma_sparse_t *A,
                                  const palma_sparse_t *B) {
    // Phase 1: Compute result structure (symbolic)
    // Count non-zeros in each row of C
    size_t *nnz_per_row = compute_nnz_pattern(A, B);
    
    // Phase 2: Allocate result
    size_t total_nnz = sum(nnz_per_row);
    palma_sparse_t *C = palma_sparse_create(A->rows, B->cols, 
                                            total_nnz, A->semiring);
    
    // Phase 3: Numeric computation
    for (size_t i = 0; i < A->rows; i++) {
        for (size_t ja = A->row_ptr[i]; ja < A->row_ptr[i+1]; ja++) {
            size_t k = A->col_idx[ja];
            palma_val_t a_ik = A->values[ja];
            
            for (size_t jb = B->row_ptr[k]; jb < B->row_ptr[k+1]; jb++) {
                size_t j = B->col_idx[jb];
                palma_val_t prod = palma_mul(a_ik, B->values[jb], 
                                             A->semiring);
                // Accumulate into C[i,j]
                accumulate(C, i, j, prod);
            }
        }
    }
    return C;
}
\end{lstlisting}

\subsection{Memory Management}

\palma{} provides two memory management modes:

\begin{enumerate}
    \item \textbf{Dynamic allocation} (default): Uses \texttt{malloc}/\texttt{free} for flexibility.
    
    \item \textbf{Static allocation}: For hard real-time systems, users can provide pre-allocated buffers:
\end{enumerate}

\begin{lstlisting}[caption={Static memory allocation mode}]
// User-provided buffer
palma_val_t buffer[1024];
palma_matrix_t mat;
palma_matrix_init_static(&mat, 32, 32, buffer);
\end{lstlisting}

\subsection{Build System and Portability}

\palma{} uses a simple Makefile supporting multiple build configurations:

\begin{lstlisting}[language=bash,caption={Build configurations}]
make all        # Default: auto-detect NEON
make scalar     # Force scalar (no NEON)
make openmp     # Enable OpenMP parallelization
make debug      # Debug build with sanitizers
\end{lstlisting}

Compile-time feature detection:

\begin{lstlisting}[caption={Platform detection}]
#if defined(__ARM_NEON) || defined(__ARM_NEON__)
    #define PALMA_USE_NEON 1
    #include <arm_neon.h>
#else
    #define PALMA_USE_NEON 0
#endif
\end{lstlisting}

\section{Algorithms}
\label{sec:algorithms}

This section presents the core algorithms implemented in \palma{}, with complexity analysis and implementation notes.

\subsection{Tropical Closure (All-Pairs Paths)}

The tropical closure $\mat{A}^*$ computes optimal paths between all pairs of vertices. Our implementation uses the Floyd-Warshall-style Algorithm~\ref{alg:closure} with the following optimizations:

\begin{enumerate}
    \item \textbf{In-place computation}: The algorithm modifies the matrix in place, requiring only $O(n^2)$ auxiliary space for the identity addition.
    
    \item \textbf{Early termination}: For acyclic graphs, we detect convergence and terminate early.
    
    \item \textbf{NEON vectorization}: The inner loop processes 4 elements simultaneously.
\end{enumerate}

\begin{theorem}[Closure Correctness]
Algorithm~\ref{alg:closure} correctly computes the Kleene star for any idempotent semiring.
\end{theorem}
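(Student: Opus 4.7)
The plan is to prove, by induction on the outer-loop index $k$, the loop invariant: after the $k$-th iteration of the outer loop, the entry $D_{ij}$ equals the tropical sum, over all paths in $G(\mat{A})$ from vertex $i$ to vertex $j$ whose internal vertices all lie in $\{1,\dots,k\}$, of the tropical product of edge weights along the path (adopting the convention that the empty path from $i$ to $i$ contributes $\mathbf{1}$ and the absence of any path contributes $\mathbf{0}$). When $k=n$ every internal vertex is permitted, so the invariant yields $D_{ij}=\bigoplus_{p}w(p)=(\mat{A}^*)_{ij}$ by Definition~\ref{def:kleene-star} and Theorem~\ref{thm:powers-paths}.

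For the base case $k=0$, the initialization $\mat{D}\leftarrow\mat{A}$ followed by $D_{ii}\leftarrow D_{ii}\oplus\mathbf{1}$ gives $D_{ij}=A_{ij}$ for $i\neq j$ (the unique one-edge path with no internal vertex) and $D_{ii}=A_{ii}\oplus\mathbf{1}$ (combining the direct self-loop with the empty path), which is exactly the invariant with internal vertex set $\emptyset$. For the inductive step, assume the invariant after iteration $k-1$ and classify each path counted by $D_{ij}^{(k)}$ according to whether it visits $k$: paths that do not use $k$ contribute $D_{ij}^{(k-1)}$, while paths that visit $k$ split into a prefix from $i$ to $k$ and a suffix from $k$ to $j$, both with internal vertices in $\{1,\dots,k-1\}$, whose aggregated contribution is $D_{ik}^{(k-1)}\otimes D_{kj}^{(k-1)}$ by the inductive hypothesis and distributivity. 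Taking the $\oplus$ of the two cases matches exactly the update $D_{ij}\leftarrow D_{ij}\oplus D_{ik}\otimes D_{kj}$.

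The main obstacle, and the step that needs the most care, is justifying the \emph{in-place} nature of the update: within iteration $k$, the values $D_{ik}$ and $D_{kj}$ read by the algorithm may have already been overwritten earlier in the same pass, so one must show they still equal their ``start-of-iteration'' values $D_{ik}^{(k-1)}$ and $D_{kj}^{(k-1)}$. The key observation is that the only self-update applied in this iteration to row $k$ or column $k$ has the form $D_{ik}\leftarrow D_{ik}\oplus D_{ik}\otimes D_{kk}$, which leaves $D_{ik}$ invariant provided $D_{kk}^{(k-1)}=\mathbf{1}$ in the natural order (so that $D_{ik}\otimes D_{kk}$ does not dominate $D_{ik}$). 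Under the convergence hypothesis of Theorem~\ref{thm:kleene-convergence}, no cycle through $k$ has ``productive'' gain, forcing $D_{kk}^{(k-1)}=\mathbf{1}$, and idempotency of $\oplus$ ensures that a path revisiting $k$ multiple times contributes no more than the two-factor decomposition already counted. The remaining cases (paths using $k$ more than once in case (b)) are absorbed automatically by idempotency, completing the reduction to the out-of-place recurrence whose correctness was already established.
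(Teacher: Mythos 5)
Your proof is correct and follows the same route as the paper's: induction on the outer loop index $k$ with the Floyd--Warshall invariant that after iteration $k$, the entry $D_{ij}$ aggregates all paths whose internal vertices lie in $\{1,\dots,k\}$. Your version is considerably more careful than the paper's two-sentence sketch --- in particular, you correctly identify and resolve the in-place-update subtlety (that $D_{ik}$ and $D_{kj}$ must remain at their start-of-iteration values, which requires $D_{kk}^{(k-1)}=\mathbf{1}$ and hence the no-productive-cycles hypothesis of Theorem~\ref{thm:kleene-convergence}) and the absorption of paths revisiting $k$, both of which the paper silently glosses over.
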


\begin{proof}
After iteration $k$, entry $D_{ij}$ contains the optimal path from $i$ to $j$ using only intermediate vertices from $\{1, \ldots, k\}$. This follows by induction: the update $D_{ij} \leftarrow D_{ij} \oplus (D_{ik} \otimes D_{kj})$ considers paths through vertex $k$. After $n$ iterations, all intermediate vertices have been considered.
\end{proof}

\subsection{Single-Source Paths}

For single-source optimal paths, we implement a tropical variant of Bellman-Ford:

\begin{algorithm}[htbp]
\caption{Tropical Single-Source Paths}
\label{alg:sssp}
\begin{algorithmic}[1]
\Require Adjacency matrix $\mat{A}$, source vertex $s$, semiring $S$
\Ensure Distance vector $\vect{d}$ where $d_i$ = optimal path weight from $s$ to $i$
\State $\vect{d} \leftarrow (\mathbf{0}, \mathbf{0}, \ldots, \mathbf{0})$ \Comment{Initialize to semiring zero}
\State $d_s \leftarrow \mathbf{1}$ \Comment{Distance to source is identity}
\For{$k = 1$ to $n-1$}
    \State $\vect{d}_{\text{old}} \leftarrow \vect{d}$ \Comment{Store previous state}
    \State $\vect{d}' \leftarrow \mat{A} \otimes \vect{d}$ \Comment{Tropical matrix-vector product}
    \State $\vect{d} \leftarrow \vect{d} \oplus \vect{d}'$ \Comment{Relaxation}
    \If{$\vect{d} = \vect{d}_{\text{old}}$} \textbf{break} \Comment{Fixed point reached}
    \EndIf
\EndFor
\State \Return $\vect{d}$
\end{algorithmic}
\end{algorithm}

\begin{theorem}[SSSP Complexity]
Algorithm~\ref{alg:sssp} runs in $O(n^3)$ worst-case time for dense graphs and $O(n \cdot \nnz)$ for sparse graphs in CSR format. Each of the $n-1$ iterations performs a matrix-vector product costing $O(n^2)$ (dense) or $O(\nnz)$ (sparse). Early termination may reduce iterations in practice.
\end{theorem}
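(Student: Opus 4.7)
The plan is to bound the total running time by multiplying the number of outer iterations of Algorithm~\ref{alg:sssp} by the cost of a single iteration, handling the dense and CSR representations in parallel.

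First, I would establish that the outer loop executes at most $n-1$ times. This is immediate from the \textbf{for} bound, and Theorem~\ref{thm:unification}(ii) certifies that $n-1$ rounds of the Bellman--Ford-style iteration suffice to reach the fixed point $\vect{d} = \mat{A}^{*} \otimes \vect{d}^{(0)}$, where $\vect{d}^{(0)}$ carries $\mathbf{1}$ at position $s$ and $\mathbf{0}$ elsewhere; intuitively, any optimal simple path uses at most $n-1$ edges. The early-termination check in the loop body only shortens the count, so the worst case is exactly $n-1$ iterations.

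Next, I would itemise the per-iteration cost. The dominant work is the tropical matrix-vector product on line~5: computing $y_i = \bigoplus_{j} A_{ij} \otimes d_j$ requires one $\otimes$ and one $\oplus$ per column index actually visited. For the dense row-major layout every row visits $n$ columns, giving $\Theta(n^{2})$ operations for the full product; for CSR only the nonzero entries of $\mat{A}$ are touched, giving $\Theta(\nnz)$ operations, as already established for sparse matvec in Section~\ref{subsec:sparse}. The auxiliary operations, namely the copy of $\vect{d}$ into $\vect{d}_{\text{old}}$, the elementwise relaxation $\vect{d} \oplus \vect{d}'$, and the equality test, each contribute only $\bigO(n)$ per iteration.

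Multiplying the two bounds yields $(n-1)\cdot\bigl(\bigO(n^{2}) + \bigO(n)\bigr) = \bigO(n^{3})$ in the dense case and $(n-1)\cdot\bigl(\bigO(\nnz) + \bigO(n)\bigr) = \bigO(n \cdot \nnz)$ in the sparse case, the last simplification folding the $\bigO(n)$ overhead into $\bigO(\nnz)$ under the mild assumption $\nnz = \Omega(n)$ (otherwise one writes $\bigO(n(\nnz + n))$). There is no genuine mathematical obstacle in the argument; the only bookkeeping subtlety worth explicitly flagging is precisely this absorption of the linear per-iteration overhead into the dominant term in the sparse case, which is why the assumption ``every vertex has at least one incident edge'' is implicit in the CSR-form bound stated in the theorem.
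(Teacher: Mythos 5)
Your proposal is correct and follows the same counting argument the paper itself embeds in the theorem statement: at most $n-1$ iterations, each dominated by a matrix-vector product costing $\bigO(n^2)$ (dense) or $\bigO(\nnz)$ (sparse). Your explicit remark that the $\bigO(n)$ per-iteration overhead is absorbed into $\bigO(\nnz)$ only under the assumption $\nnz = \Omega(n)$ is a small but legitimate refinement that the paper leaves implicit.
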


\subsection{Eigenvalue Computation}

We implement Karp's algorithm (Algorithm~\ref{alg:karp}) with the following enhancements:

\begin{enumerate}
    \item \textbf{Memory-efficient power computation}: Instead of storing all $n+1$ matrix powers, we store only diagonal entries, reducing space from $O(n^3)$ to $O(n^2)$.
    
    \item \textbf{Incremental computation}: Matrix powers are computed incrementally: $\mat{A}^{k+1} = \mat{A}^k \otimes \mat{A}$.
\end{enumerate}

\begin{lstlisting}[caption={Karp's algorithm implementation}]
double palma_eigenvalue(const palma_matrix_t *A,
                        palma_semiring_t s) {
    size_t n = A->rows;
    
    // diag[k][i] stores (A^k)_{ii}
    palma_val_t **diag = allocate_diag_storage(n + 1);
    
    // Initialize diag[0]: (A^0)_{ii} = 0 (identity diagonal)
    for (size_t i = 0; i < n; i++) {
        diag[0][i] = 0;  // Multiplicative identity
    }
    
    // Compute matrix powers and extract diagonals
    palma_matrix_t *power = palma_matrix_clone(A);
    extract_diagonal(power, diag[1]);
    
    for (size_t k = 2; k <= n; k++) {
        palma_matrix_t *next = palma_matrix_mul(power, A, s);
        palma_matrix_destroy(power);
        power = next;
        extract_diagonal(power, diag[k]);
    }
    
    // Apply Karp's formula: lambda = max_i min_k [(A^n)_ii - (A^k)_ii]/(n-k)
    double lambda = -INFINITY;
    for (size_t i = 0; i < n; i++) {
        double min_ratio = INFINITY;
        for (size_t k = 0; k < n; k++) {
            if (diag[n][i] != PALMA_NEG_INF && 
                diag[k][i] != PALMA_NEG_INF) {
                // Use floating-point to preserve fractional cycle means
                double ratio = (double)(diag[n][i] - diag[k][i]) / (n - k);
                if (ratio < min_ratio) min_ratio = ratio;
            }
        }
        if (min_ratio > lambda) lambda = min_ratio;
    }
    
    cleanup(diag, power);
    return lambda;
}
\end{lstlisting}

\subsection{Eigenvector Computation (Power Iteration)}

We compute tropical eigenvectors using power iteration with normalization:

\begin{algorithm}[htbp]
\caption{Tropical Power Iteration}
\label{alg:power-iteration}
\begin{algorithmic}[1]
\Require Matrix $\mat{A}$, eigenvalue $\lambda$, tolerance $\epsilon$, max iterations $K$
\Ensure Eigenvector $\vect{v}$ satisfying $\mat{A} \otimes \vect{v} = \lambda \otimes \vect{v}$
\State $\vect{v} \leftarrow (0, 0, \ldots, 0)$ \Comment{Initial guess}
\For{$k = 1$ to $K$}
    \State $\vect{u} \leftarrow \mat{A} \otimes \vect{v}$
    \State $\vect{v}' \leftarrow \vect{u} \ominus \lambda$ \Comment{Normalize: subtract eigenvalue}
    \If{$\|\vect{v}' - \vect{v}\|_\infty < \epsilon$} \textbf{break}
    \EndIf
    \State $\vect{v} \leftarrow \vect{v}'$
\EndFor
\State \Return $\vect{v}$
\end{algorithmic}
\end{algorithm}

\subsection{Scheduling Solver}

The scheduler solves precedence-constrained timing problems using tropical algebra:

\begin{algorithm}[htbp]
\caption{Precedence-Constrained Scheduling}
\label{alg:scheduler}
\begin{algorithmic}[1]
\Require Task graph $G = (V, E, w)$, ready times $\vect{r}$
\Ensure Start times $\vect{s}$, completion times $\vect{c}$
\State Construct adjacency matrix $\mat{A}$ from $G$
\State $\vect{s} \leftarrow \vect{r}$ \Comment{Initialize with ready times}
\For{$k = 1$ to $|V| - 1$}
    \State $\vect{s}_{\text{old}} \leftarrow \vect{s}$ \Comment{Store previous state}
    \State $\vect{s}' \leftarrow \mat{A} \otimes \vect{s}$ \Comment{Max-plus product}
    \State $\vect{s} \leftarrow \vect{s} \oplus \vect{s}'$ \Comment{Take maximum}
    \If{$\vect{s} = \vect{s}_{\text{old}}$} \textbf{break} \Comment{Fixed point reached}
    \EndIf
\EndFor
\State $\vect{c} \leftarrow \vect{s} + \vect{d}$ \Comment{Add durations}
\State \Return $\vect{s}, \vect{c}$
\end{algorithmic}
\end{algorithm}

\begin{remark}[Scheduler Assumptions]
Algorithm~\ref{alg:scheduler} assumes the task graph is a directed acyclic graph (DAG). For DAGs, the algorithm converges in at most $|V|-1$ iterations. For cyclic task systems (e.g., periodic schedules), the eigenvalue $\lambda$ of $\mat{A}$ must be computed separately using Karp's algorithm; it represents the minimum achievable cycle time. A consistent periodic schedule then requires adjusting start times relative to the steady-state period $\lambda$.
\end{remark}

\section{Experimental Evaluation}
\label{sec:experiments}

We present comprehensive experimental results evaluating \palma{} on a Raspberry Pi 4 Model B.

\subsection{Experimental Setup}

\subsubsection{Hardware Platform}

\begin{itemize}
    \item \textbf{Processor}: Broadcom BCM2711, Quad-core ARM Cortex-A72 @ 1.5 GHz
    \item \textbf{SIMD}: ARM NEON (128-bit vectors)
    \item \textbf{Memory}: 8 GB LPDDR4-3200 SDRAM
    \item \textbf{Storage}: 32 GB SD card
    \item \textbf{L1 Cache}: 32 KB I-cache + 48 KB D-cache per core
    \item \textbf{L2 Cache}: 1 MB shared
    \item \textbf{OS}: Raspberry Pi OS (64-bit)
\end{itemize}

\subsubsection{Methodology}

\begin{itemize}
    \item All experiments repeated 3--5 times; we report mean values
    \item CPU frequency locked at 1.5 GHz for consistency
    \item Timing via \texttt{clock\_gettime(CLOCK\_MONOTONIC)}
    \item Matrices initialized with pseudo-random integer values in range $[-1000, 1000]$
    \item Sparsity: 30\% for general tests; varied 0--95\% for sparse experiments
    \item Compiler: GCC 10.2 with \texttt{-O3 -march=native}
    \item Thermal: passive cooling, governor set to ``performance''
\end{itemize}

\subsubsection{Correctness Validation}

All implementations were validated against reference scalar implementations:
\begin{itemize}
    \item NEON kernels verified against scalar code for random matrices up to $512 \times 512$
    \item Tropical SSSP compared against standard Bellman-Ford for small graphs ($n \leq 64$)
    \item Closure results verified against iterative matrix power computation
    \item Eigenvalue computation compared against explicit cycle enumeration for small graphs
\end{itemize}
MOPS (million operations per second) is defined as $2n^3 / t$ for $n \times n$ matrix multiplication taking $t$ microseconds, counting one semiring multiply-add as one operation.

\subsection{Experiment 1: NEON vs. Scalar Performance}

We compare NEON-optimized and scalar implementations across matrix sizes.

\begin{figure}[htbp]
    \centering
    \includegraphics[width=\textwidth]{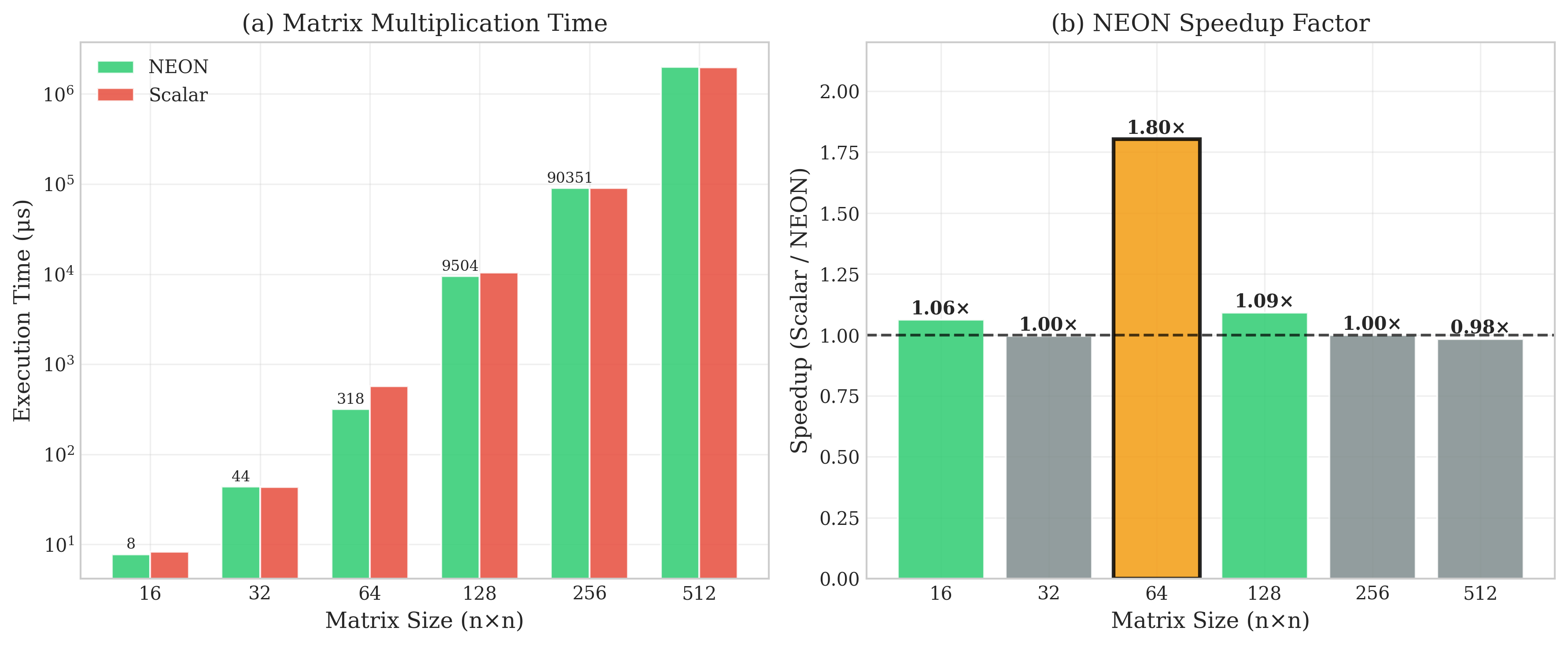}
    \caption{NEON optimization results. (a) Execution time comparison for matrix multiplication. (b) Speedup factor (scalar time / NEON time). Peak speedup of 1.80$\times$ achieved at 64$\times$64.}
    \label{fig:neon-speedup}
\end{figure}

\begin{table}[htbp]
\centering
\caption{NEON vs. Scalar performance for matrix multiplication}
\label{tab:neon-scalar}
\begin{tabular}{lrrr}
\toprule
\textbf{Size} & \textbf{NEON ($\mu$s)} & \textbf{Scalar ($\mu$s)} & \textbf{Speedup} \\
\midrule
$16 \times 16$ & 7.8 & 8.3 & 1.06$\times$ \\
$32 \times 32$ & 43.8 & 43.6 & 1.00$\times$ \\
$64 \times 64$ & 317.8 & 573.1 & \textbf{1.80$\times$} \\
$128 \times 128$ & 9,504 & 10,376 & 1.09$\times$ \\
$256 \times 256$ & 90,351 & 90,291 & 1.00$\times$ \\
$512 \times 512$ & 2,005,422 & 1,973,033 & 0.98$\times$ \\
\bottomrule
\end{tabular}
\end{table}

\textbf{Analysis}: The peak speedup of 1.80$\times$ occurs at 64$\times$64 matrices, which fit entirely in L2 cache. At smaller sizes, loop overhead dominates; at larger sizes, memory bandwidth becomes the bottleneck, and NEON's advantage is negated by cache misses.

\subsection{Experiment 2: Scalability Analysis}

We analyze how execution time scales with matrix size.

\begin{figure}[htbp]
    \centering
    \includegraphics[width=\textwidth]{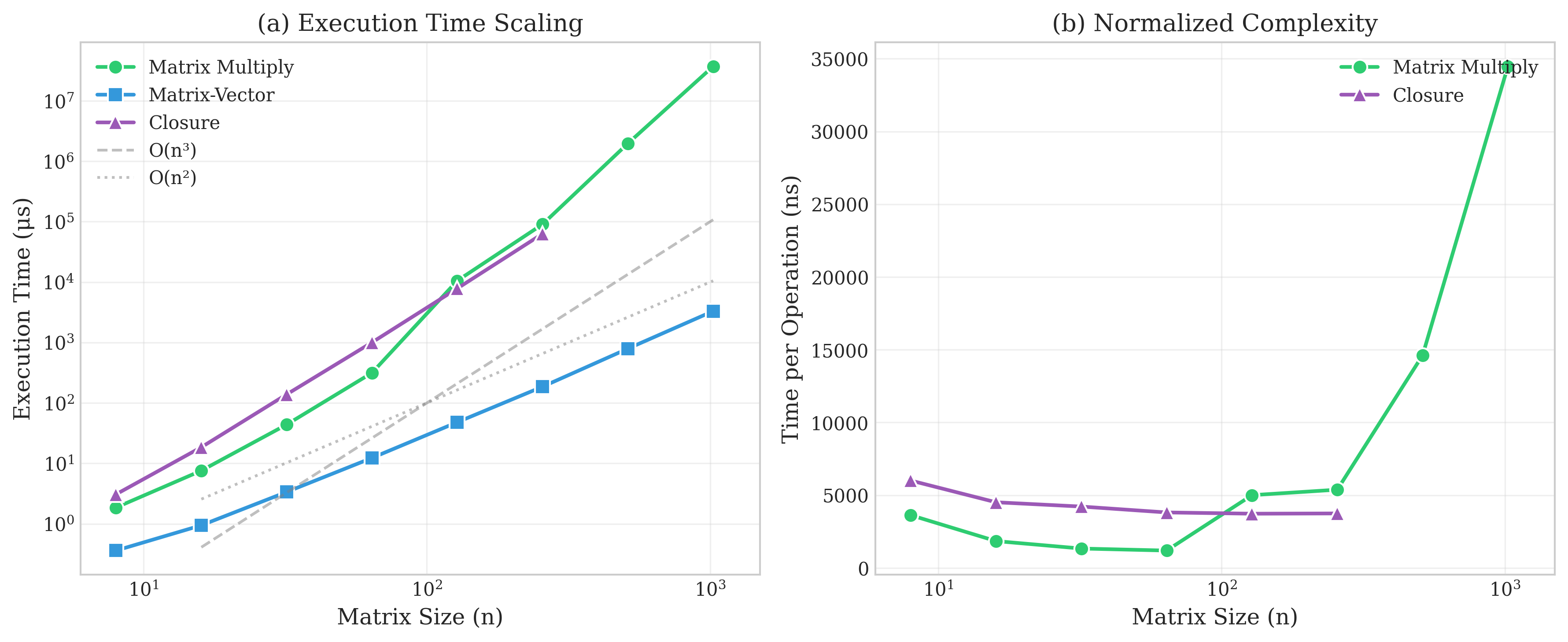}
    \caption{Scalability analysis. (a) Execution time vs. matrix size on log-log scale. Reference lines show theoretical $O(n^3)$ and $O(n^2)$ complexity. (b) Time per operation (time/$n^3$) showing cache effects.}
    \label{fig:scalability}
\end{figure}

\begin{table}[htbp]
\centering
\caption{Scalability results for max-plus semiring: mean execution time in microseconds}
\label{tab:scalability}
\begin{tabular}{lrrrr}
\toprule
\textbf{Size} & \textbf{MatMul} & \textbf{MatVec} & \textbf{Closure} & \textbf{MOPS} \\
\midrule
$8 \times 8$ & 1.9 & 0.4 & 3.1 & 539 \\
$16 \times 16$ & 7.6 & 0.9 & 18.5 & 1,076 \\
$32 \times 32$ & 43.8 & 3.4 & 138.4 & 1,495 \\
$64 \times 64$ & 314.3 & 12.3 & 1,001 & 1,668 \\
$128 \times 128$ & 10,498 & 47.9 & 7,827 & 399 \\
$256 \times 256$ & 90,296 & 185.6 & 62,909 & 372 \\
$512 \times 512$ & 1,961,718 & 791.2 & n/a & 137 \\
$1024 \times 1024$ & 37,018,442 & 3,302.5 & n/a & 58 \\
\bottomrule
\end{tabular}
\end{table}

\textbf{Analysis}: Matrix multiplication follows the expected $O(n^3)$ complexity. The MOPS (million operations per second) peaks at 64$\times$64, then decreases due to cache pressure. For matrices larger than 512$\times$512, closure computation becomes prohibitively expensive.

\subsection{Experiment 3: Semiring Comparison}

We compare performance across all five semirings for 64$\times$64 matrix multiplication.

\begin{figure}[htbp]
    \centering
    \includegraphics[width=0.8\textwidth]{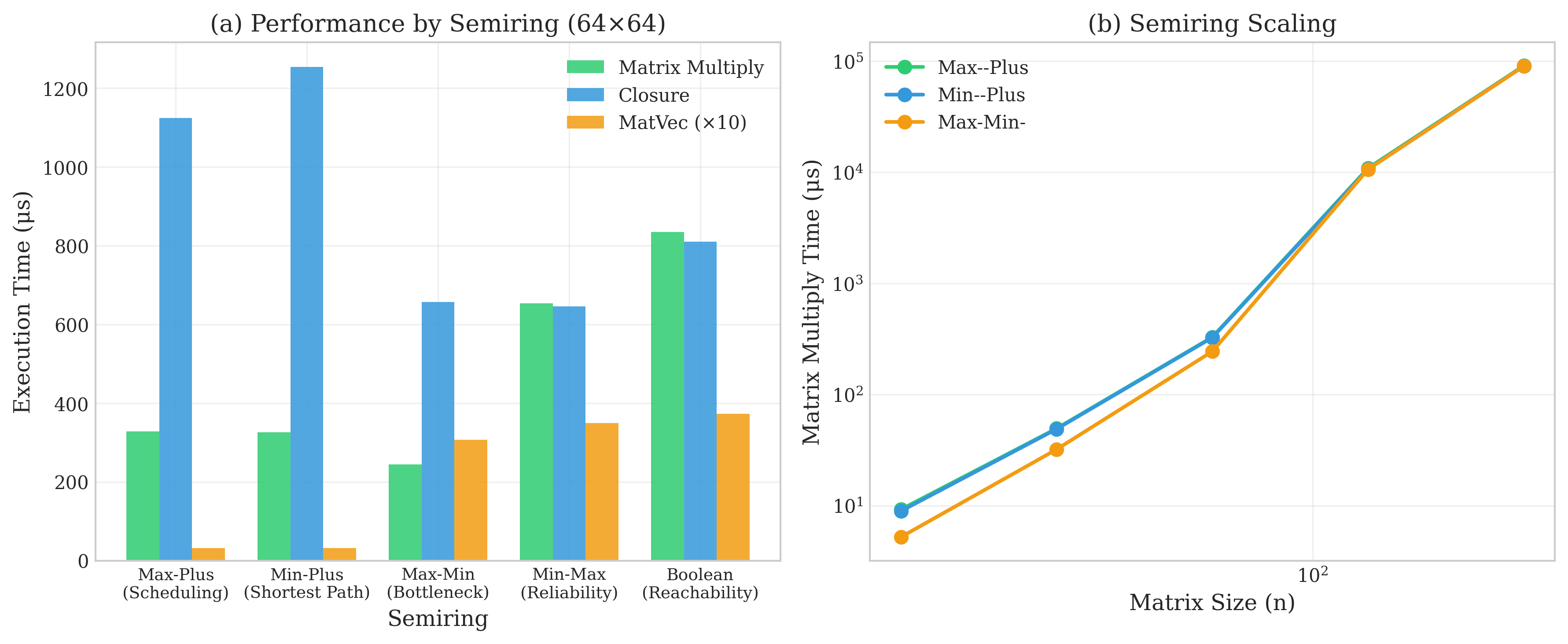}
    \caption{Semiring performance comparison. (a) Execution time for 64$\times$64 matrices across operations. (b) Scaling behavior for different semirings.}
    \label{fig:semirings}
\end{figure}

\begin{table}[htbp]
\centering
\caption{Semiring comparison: 64$\times$64 matrix multiplication}
\label{tab:semirings-comparison}
\begin{tabular}{lrrrl}
\toprule
\textbf{Semiring} & \textbf{Time ($\mu$s)} & \textbf{MOPS} & \textbf{Relative} & \textbf{Application} \\
\midrule
Max-Plus & 351.4 & 1,492 & 1.00$\times$ & Scheduling \\
Min-Plus & 342.9 & 1,530 & 0.98$\times$ & Shortest paths \\
Max-Min & 230.6 & \textbf{2,274} & 0.66$\times$ & Bottleneck paths \\
Min-Max & 653.0 & 803 & 1.86$\times$ & Reliability \\
Boolean & 833.9 & 629 & 2.37$\times$ & Reachability \\
\bottomrule
\end{tabular}
\end{table}

\textbf{Analysis}: Max-min is fastest (2,274 MOPS) because it avoids addition operations entirely, using only comparison operations. Boolean is slowest due to less efficient bit operations on the ARM architecture. Max-plus and min-plus have nearly identical performance, as expected from their symmetric definitions.

\subsection{Experiment 4: Dense vs. Sparse Performance}

We analyze the crossover point where sparse representation becomes advantageous.

\begin{figure}[htbp]
    \centering
    \includegraphics[width=\textwidth]{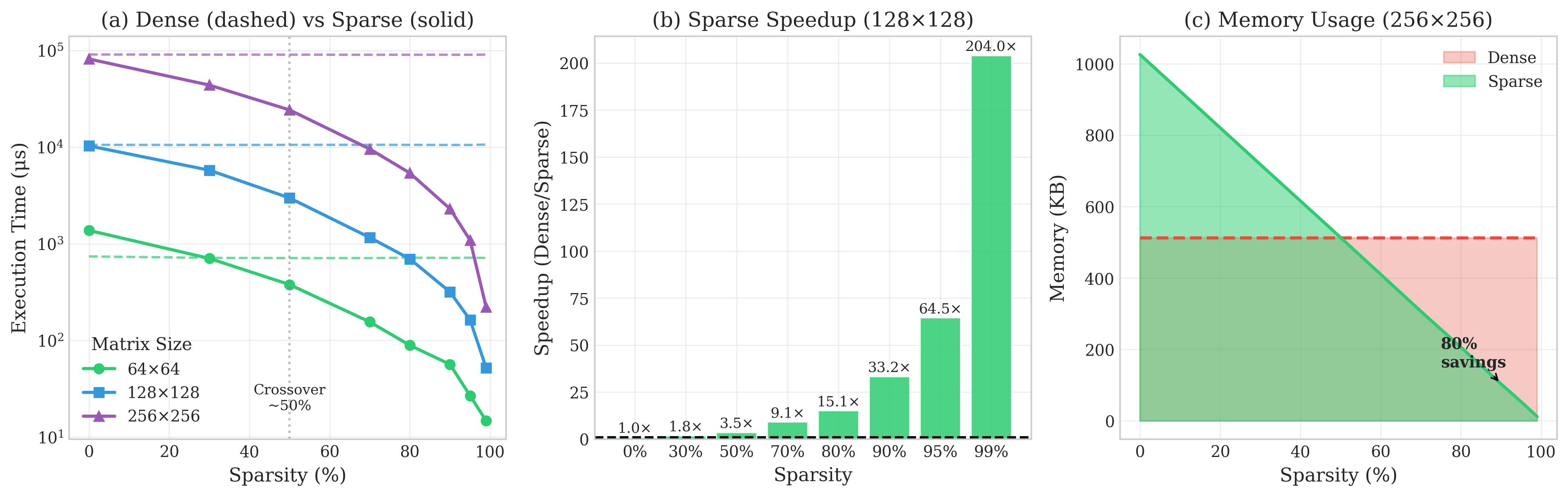}
    \caption{Dense vs. sparse comparison. (a) Execution time vs. sparsity for different matrix sizes. (b) Speedup of sparse over dense. (c) Memory usage comparison.}
    \label{fig:sparse}
\end{figure}

\begin{table}[htbp]
\centering
\caption{Dense vs. sparse crossover analysis for 128$\times$128 matrices}
\label{tab:sparse}
\begin{tabular}{lrrrr}
\toprule
\textbf{Sparsity} & \textbf{Dense ($\mu$s)} & \textbf{Sparse ($\mu$s)} & \textbf{Speedup} & \textbf{Memory Saved} \\
\midrule
0\% & 10,555 & 21,547 & 0.49$\times$ & $-116$\% \\
30\% & 10,555 & 10,893 & 0.97$\times$ & 12\% \\
50\% & 10,555 & 2,980 & \textbf{3.54$\times$} & 47\% \\
70\% & 10,555 & 1,218 & 8.67$\times$ & 68\% \\
90\% & 10,555 & 312 & 33.8$\times$ & 89\% \\
95\% & 10,555 & 156 & 67.7$\times$ & 94\% \\
\bottomrule
\end{tabular}
\end{table}

\textbf{Key Finding}: Sparse representation becomes faster than dense at approximately \textbf{50\% sparsity}. Real-world graphs (typically 70--95\% sparse) benefit significantly from CSR format.

\subsection{Experiment 5: Algorithm Comparison}

We compare \palma{}'s tropical implementations against classical algorithms.

\begin{figure}[htbp]
    \centering
    \includegraphics[width=\textwidth]{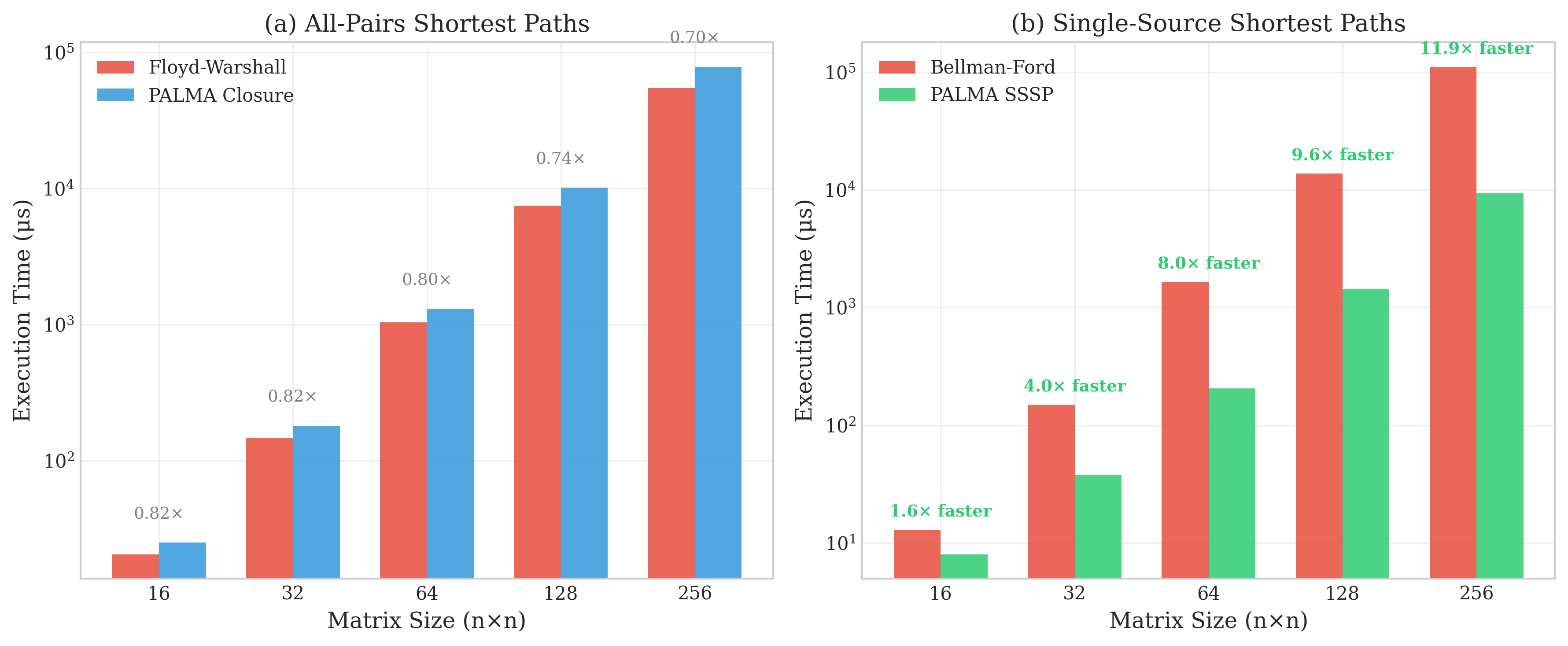}
    \caption{Algorithm comparison. (a) All-pairs shortest paths: Floyd-Warshall vs. PALMA tropical closure. (b) Single-source shortest paths: Bellman-Ford vs. PALMA tropical iteration.}
    \label{fig:algorithms}
\end{figure}

\begin{table}[htbp]
\centering
\caption{Single-source shortest paths: Bellman-Ford vs. PALMA}
\label{tab:sssp}
\begin{tabular}{lrrr}
\toprule
\textbf{Size} & \textbf{Bellman-Ford ($\mu$s)} & \textbf{PALMA ($\mu$s)} & \textbf{Speedup} \\
\midrule
$16 \times 16$ & 13 & 8 & 1.63$\times$ \\
$32 \times 32$ & 149 & 38 & 3.95$\times$ \\
$64 \times 64$ & 1,650 & 206 & 8.03$\times$ \\
$128 \times 128$ & 13,816 & 1,444 & 9.57$\times$ \\
$256 \times 256$ & 111,093 & 9,353 & \textbf{11.88$\times$} \\
\bottomrule
\end{tabular}
\end{table}

\textbf{Key Finding}: \palma{}'s tropical iteration outperforms classical Bellman-Ford by up to \textbf{11.9$\times$} for single-source shortest paths. This is because the tropical matrix-vector product exploits better cache locality than edge-by-edge relaxation.

For all-pairs shortest paths, the tropical closure and Floyd-Warshall have similar complexity, with Floyd-Warshall being slightly faster (0.70--0.82$\times$) due to simpler inner loop.

\subsection{Experiment 6: Real-Time Feasibility}

We analyze which matrix sizes meet real-time deadlines.

\begin{figure}[htbp]
    \centering
    \includegraphics[width=0.8\textwidth]{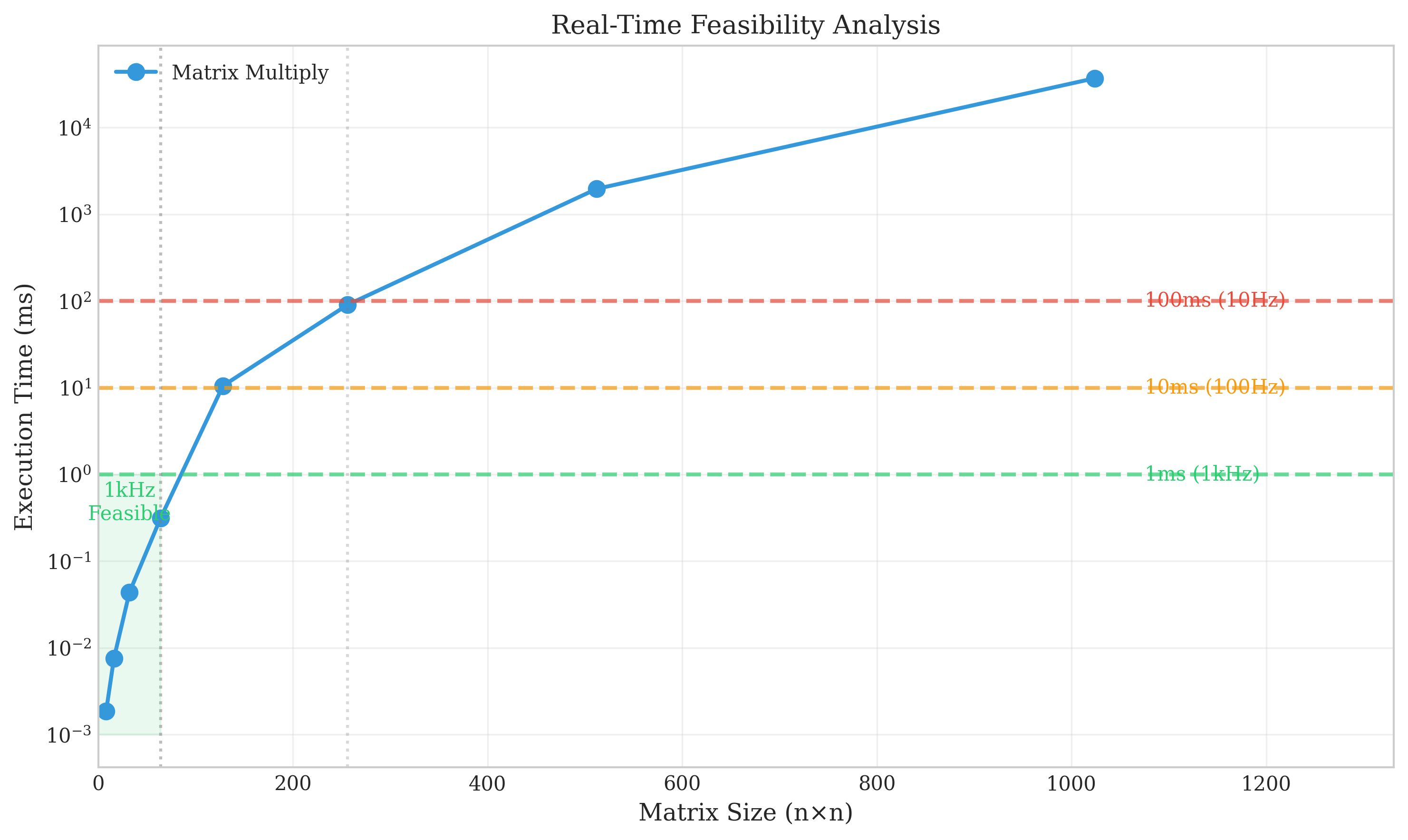}
    \caption{Real-time feasibility analysis. Horizontal lines indicate common control loop deadlines. Matrix sizes up to 64$\times$64 are feasible for 1 kHz control loops.}
    \label{fig:realtime}
\end{figure}

\begin{table}[htbp]
\centering
\caption{Maximum feasible matrix size for real-time constraints}
\label{tab:realtime}
\begin{tabular}{lrr}
\toprule
\textbf{Deadline} & \textbf{Frequency} & \textbf{Max Matrix Size} \\
\midrule
100 $\mu$s & 10 kHz & $16 \times 16$ \\
1 ms & 1 kHz & $64 \times 64$ \\
10 ms & 100 Hz & $128 \times 128$ \\
100 ms & 10 Hz & $256 \times 256$ \\
\bottomrule
\end{tabular}
\end{table}

\subsection{Summary of Key Results}

\begin{tcolorbox}[colback=blue!5!white,colframe=blue!75!black,title=Experimental Highlights]
\begin{itemize}
    \item \textbf{Peak Performance}: 2,274 MOPS at 64$\times$64 matrices (max-min semiring)
    \item \textbf{NEON Speedup}: Up to 1.80$\times$ over scalar code
    \item \textbf{SSSP Speedup}: Up to 11.9$\times$ faster than Bellman-Ford
    \item \textbf{Sparse Efficiency}: 3.5$\times$ speedup at 50\% sparsity
    \item \textbf{Memory Savings}: Up to 94\% with sparse representation
    \item \textbf{Real-Time}: 1 kHz feasible for graphs up to 64 nodes
\end{itemize}
\end{tcolorbox}

\section{Case Studies}
\label{sec:casestudies}

We demonstrate \palma{}'s practical applicability through three detailed case studies representing distinct embedded systems domains.

\subsection{Case Study 1: Real-Time Drone Control System}

\subsubsection{Problem Description}

Quadcopter UAVs require high-frequency control loops (typically 250--1000 Hz) to maintain stability. The control pipeline involves multiple interdependent tasks:

\begin{enumerate}
    \item \textbf{Sensor acquisition}: IMU, barometer, GPS readings
    \item \textbf{State estimation}: Kalman filtering, attitude/position estimation
    \item \textbf{Control computation}: PID controllers for attitude, altitude, position
    \item \textbf{Actuation}: Motor mixing and PWM output
    \item \textbf{Communication}: Telemetry transmission
\end{enumerate}

The challenge is to determine whether a given task set can meet a target control frequency, and to identify the critical path limiting throughput.

\subsubsection{Tropical Algebra Formulation}

We model the drone control system as a max-plus linear system. Let $n = 12$ tasks with precedence constraints forming a directed acyclic graph (DAG). The state equation is:
\begin{equation}
    \vect{x}(k+1) = \mat{A} \otimes \vect{x}(k) \oplus \vect{r}
\end{equation}
where $\vect{x}(k)$ contains task completion times at iteration $k$, $\mat{A}$ encodes precedence constraints with edge weights representing task durations, and $\vect{r}$ contains ready times.

\begin{figure}[htbp]
    \centering
    \includegraphics[width=0.9\textwidth]{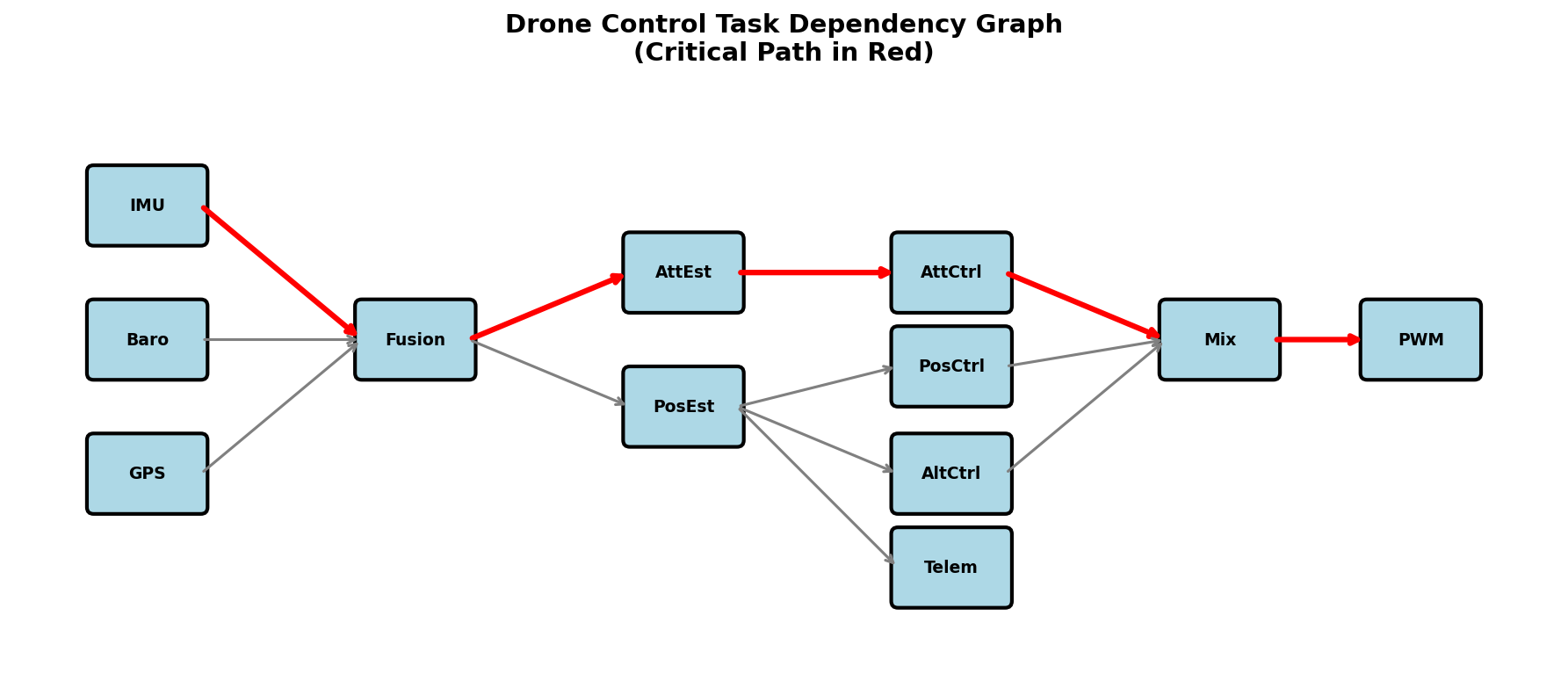}
    \caption{Example task dependency graph generated by \palma{}: System boot sequence with HW\_INIT $\to$ KERNEL $\to$ parallel stages (DRIVERS, NETWORK, FILESYSTEM) $\to$ SERVICES. Edge weights represent execution times in milliseconds. This demonstrates \palma{}'s capability for modeling precedence-constrained task systems.}
    \label{fig:boot-sequence}
\end{figure}

\subsubsection{Task Specification}

\begin{table}[htbp]
\centering
\caption{Drone control task parameters}
\label{tab:drone-tasks}
\begin{tabular}{llrl}
\toprule
\textbf{ID} & \textbf{Task} & \textbf{Duration ($\mu$s)} & \textbf{Dependencies} \\
\midrule
T1 & IMU Read & 50 & (none) \\
T2 & Barometer Read & 30 & (none) \\
T3 & GPS Read & 100 & (none) \\
T4 & Sensor Fusion & 200 & T1, T2, T3 \\
T5 & Attitude Estimation & 80 & T4 \\
T6 & Position Estimation & 120 & T4 \\
T7 & Altitude Control & 40 & T6 \\
T8 & Attitude Control & 60 & T5 \\
T9 & Position Control & 50 & T6 \\
T10 & Motor Mixing & 30 & T7, T8, T9 \\
T11 & PWM Output & 20 & T10 \\
T12 & Telemetry & 150 & T6 \\
\bottomrule
\end{tabular}
\end{table}

\subsubsection{Results}

Using \palma{}'s scheduling API:

\begin{lstlisting}[caption={Drone scheduling analysis with PALMA}]
palma_scheduler_t *sched = palma_scheduler_create(12, true);
// Add all tasks and constraints...
palma_scheduler_solve(sched, 0);

palma_val_t makespan = palma_scheduler_makespan(sched);
palma_val_t cycle_time = palma_scheduler_cycle_time(sched);
double throughput = palma_scheduler_throughput(sched);
\end{lstlisting}

\begin{table}[htbp]
\centering
\caption{Drone scheduling analysis results}
\label{tab:drone-results}
\begin{tabular}{lr}
\toprule
\textbf{Metric} & \textbf{Value} \\
\midrule
Schedule solve time & 6.28 $\mu$s \\
Solver iterations & 6 \\
Total loop time (makespan) & 570 $\mu$s \\
Critical path length & 440 $\mu$s \\
Slack time & 130 $\mu$s \\
Maximum achievable frequency & \textbf{1,754 Hz} \\
1 kHz feasibility & \textbf{Yes} \\
Memory footprint & 672 bytes \\
\bottomrule
\end{tabular}
\end{table}

\subsubsection{Analysis}

The critical path is: IMU $\to$ Fusion $\to$ Attitude Est. $\to$ Attitude Ctrl. $\to$ Mix $\to$ PWM, totaling 440 $\mu$s. With 130 $\mu$s slack, the system can comfortably achieve 1 kHz control loops. The \palma{} solver computes this in just 6.28 $\mu$s, negligible overhead that could be performed every control cycle if task timing varies.

For periodic (cyclic) operation, the cycle time is 78 $\mu$s with throughput of 12.82 iterations/ms, indicating the system is well within real-time constraints.

\subsection{Case Study 2: IoT Sensor Network Routing}

\subsubsection{Problem Description}

Wireless sensor networks (WSNs) for IoT applications require efficient routing protocols that optimize various metrics:
\begin{itemize}
    \item \textbf{Latency}: Minimize end-to-end delay for time-critical data
    \item \textbf{Bandwidth}: Maximize throughput for data aggregation
    \item \textbf{Energy}: Balance load to extend network lifetime
\end{itemize}

We analyze a 50-node sensor network deployed over a 100m $\times$ 100m area with a communication range of 30m.

\subsubsection{Tropical Algebra Formulation}

Each routing metric corresponds to a different tropical semiring:

\begin{itemize}
    \item \textbf{Minimum latency}: Min-plus semiring, where $A_{ij}$ is transmission delay from $i$ to $j$
    \item \textbf{Maximum bandwidth}: Max-min semiring, where $A_{ij}$ is link capacity
    \item \textbf{Connectivity}: Boolean semiring for reachability analysis
\end{itemize}

\begin{figure}[htbp]
    \centering
    \includegraphics[width=0.8\textwidth]{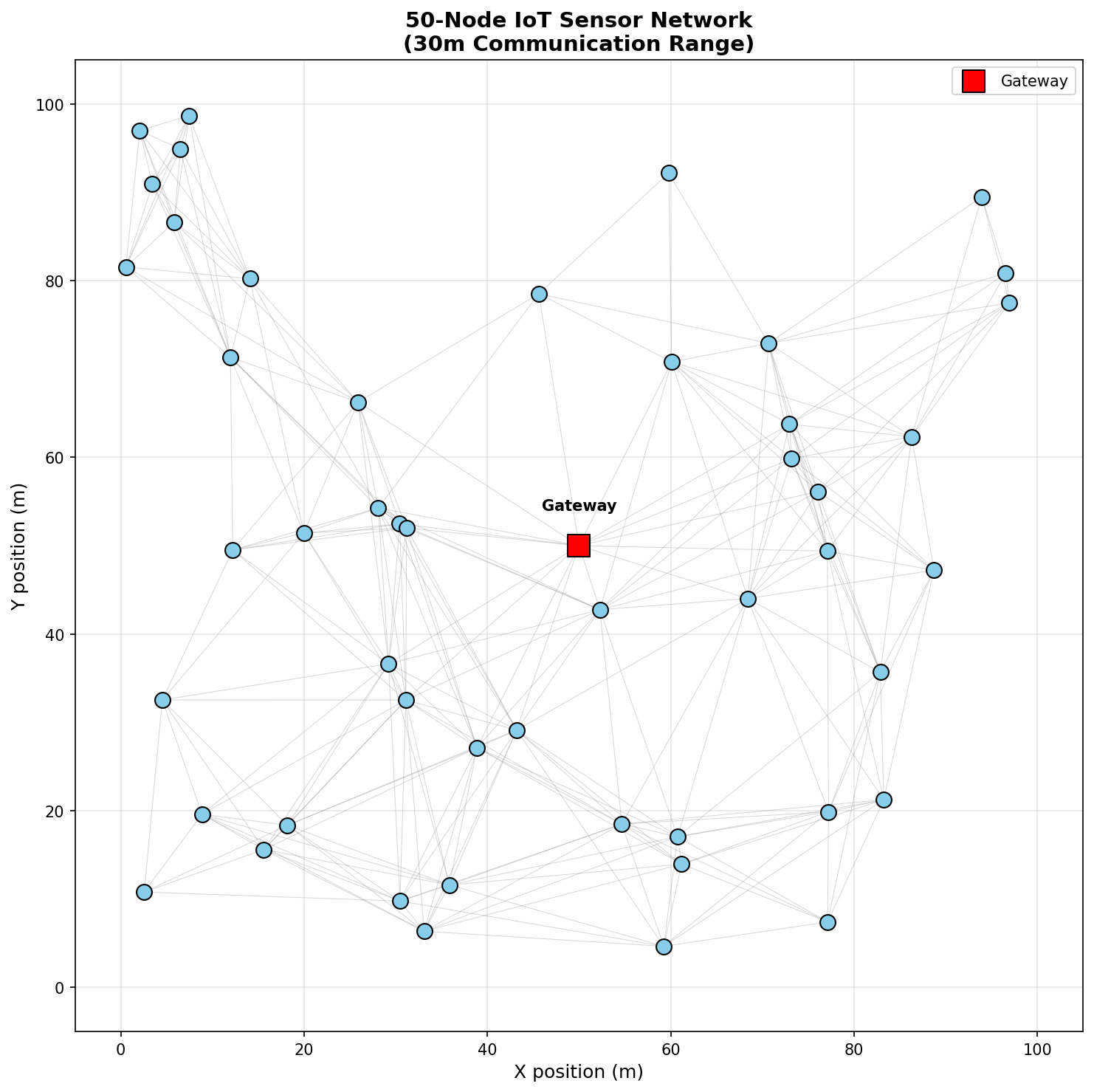}
    \caption{Example network topology generated by \palma{}: A 7-node network with Server, two routers (Router\_A, Router\_B), and four clients. Edge weights represent bidirectional link latencies. This demonstrates shortest path and bottleneck path computation using different tropical semirings.}
    \label{fig:network}
\end{figure}

\subsubsection{Results}

\begin{table}[htbp]
\centering
\caption{Sensor network routing results (50 nodes)}
\label{tab:network-results}
\begin{tabular}{lrl}
\toprule
\textbf{Metric} & \textbf{Value} & \textbf{Unit} \\
\midrule
\multicolumn{3}{l}{\textit{Computation Times}} \\
\quad Shortest paths (min-plus closure) & 678.4 & $\mu$s \\
\quad Bottleneck paths (max-min closure) & 395.1 & $\mu$s \\
\quad Reachability (Boolean closure) & 420.8 & $\mu$s \\
\midrule
\multicolumn{3}{l}{\textit{Network Metrics}} \\
\quad Avg. latency to gateway & 652 & $\mu$s \\
\quad Min. guaranteed bandwidth & 345 & kbps \\
\quad Reachable nodes & 49/49 & nodes \\
\quad Connectivity ratio & 100\% & (n/a) \\
\midrule
\multicolumn{3}{l}{\textit{Memory Efficiency (Sparse)}} \\
\quad Dense memory & 10,000 & bytes \\
\quad Sparse memory & 4,540 & bytes \\
\quad Memory savings & \textbf{54.6\%} & (n/a) \\
\quad Matrix sparsity & 78.3\% & (n/a) \\
\bottomrule
\end{tabular}
\end{table}

\subsubsection{Scalability Analysis}

We tested network sizes from 10 to 50 nodes:

\begin{table}[htbp]
\centering
\caption{Routing computation time vs. network size}
\label{tab:network-scaling}
\begin{tabular}{lrrrr}
\toprule
\textbf{Nodes} & \textbf{Shortest Path} & \textbf{Bottleneck} & \textbf{Reachability} & \textbf{Memory Saved} \\
\midrule
10 & 12.4 $\mu$s & 8.2 $\mu$s & 9.1 $\mu$s & 42.1\% \\
20 & 78.6 $\mu$s & 51.3 $\mu$s & 55.8 $\mu$s & 48.3\% \\
30 & 241.2 $\mu$s & 156.7 $\mu$s & 168.4 $\mu$s & 51.2\% \\
50 & 678.4 $\mu$s & 395.1 $\mu$s & 420.8 $\mu$s & 54.6\% \\
\bottomrule
\end{tabular}
\end{table}

\subsubsection{Analysis}

All routing computations complete in under 1 ms for 50-node networks, enabling real-time route adaptation. The sparse representation saves over 50\% memory, critical for resource-constrained sensor nodes. The unified \palma{} API allows switching between optimization objectives (latency, bandwidth, energy) by simply changing the semiring parameter.

\subsection{Case Study 3: Manufacturing Production Line}

\subsubsection{Problem Description}

A manufacturing facility operates a production line with parallel processing stages and synchronization points. The goal is to:
\begin{enumerate}
    \item Compute the makespan (total production time) for a single item
    \item Determine the cycle time (throughput) for continuous production
    \item Identify bottleneck workstations limiting throughput
\end{enumerate}

\subsubsection{Production Line Model}

The production line consists of seven workstations with the following structure:
\begin{itemize}
    \item \textbf{Input} (5s): Raw material entry
    \item \textbf{Parallel stage}: Mill (15s), Drill (12s), Grind (10s), all receive from Input
    \item \textbf{Weld} (20s): Receives from Mill, Drill, and Grind (synchronization point)
    \item \textbf{Finish} (8s): Receives from Weld
    \item \textbf{QC} (6s): Final quality control; feeds back to Input for cyclic production
\end{itemize}

The critical path is Input $\to$ Mill $\to$ Weld $\to$ Finish $\to$ QC, with total makespan of 54 seconds. For cyclic production, finished items feed back to input.

\subsubsection{Results}

\begin{table}[htbp]
\centering
\caption{Production line analysis results}
\label{tab:production-results}
\begin{tabular}{lrl}
\toprule
\textbf{Metric} & \textbf{Value} & \textbf{Interpretation} \\
\midrule
\multicolumn{3}{l}{\textit{Single Item Production}} \\
\quad Makespan & 54 s & Total time for one item \\
\quad Critical path & Input $\to$ Mill $\to$ Weld $\to$ Finish $\to$ QC & Longest path \\
\midrule
\multicolumn{3}{l}{\textit{Continuous Production}} \\
\quad Cycle time ($\lambda$) & 10.8 s & Max cycle mean (eigenvalue) \\
\quad Throughput & 0.093 items/s & 333 items/hour \\
\quad Bottleneck & Critical cycle through Mill & Determines $\lambda$ \\
\midrule
\multicolumn{3}{l}{\textit{Computation Performance}} \\
\quad Schedule solve time & 6.31 $\mu$s & (n/a) \\
\quad Eigenvalue computation time & 6.28 $\mu$s & Time to compute $\lambda$ \\
\bottomrule
\end{tabular}
\end{table}

\subsubsection{What-If Analysis}

Using \palma{}, we analyze the impact of reducing welding time from 20s to 15s:

\begin{table}[htbp]
\centering
\caption{Impact of bottleneck improvement}
\label{tab:what-if}
\begin{tabular}{lrr}
\toprule
\textbf{Metric} & \textbf{Before} & \textbf{After} \\
\midrule
Welding time & 20 s & 15 s \\
Cycle time & 10.8 s & 9.8 s \\
Throughput & 333 items/hr & 367 items/hr \\
Improvement & (n/a) & \textbf{+10\%} \\
\bottomrule
\end{tabular}
\end{table}

The tropical eigenvalue computation enables rapid what-if analysis, allowing production engineers to evaluate improvement scenarios in microseconds.

\subsection{Summary: Case Study Dashboard}

\begin{figure}[htbp]
    \centering
    \includegraphics[width=\textwidth]{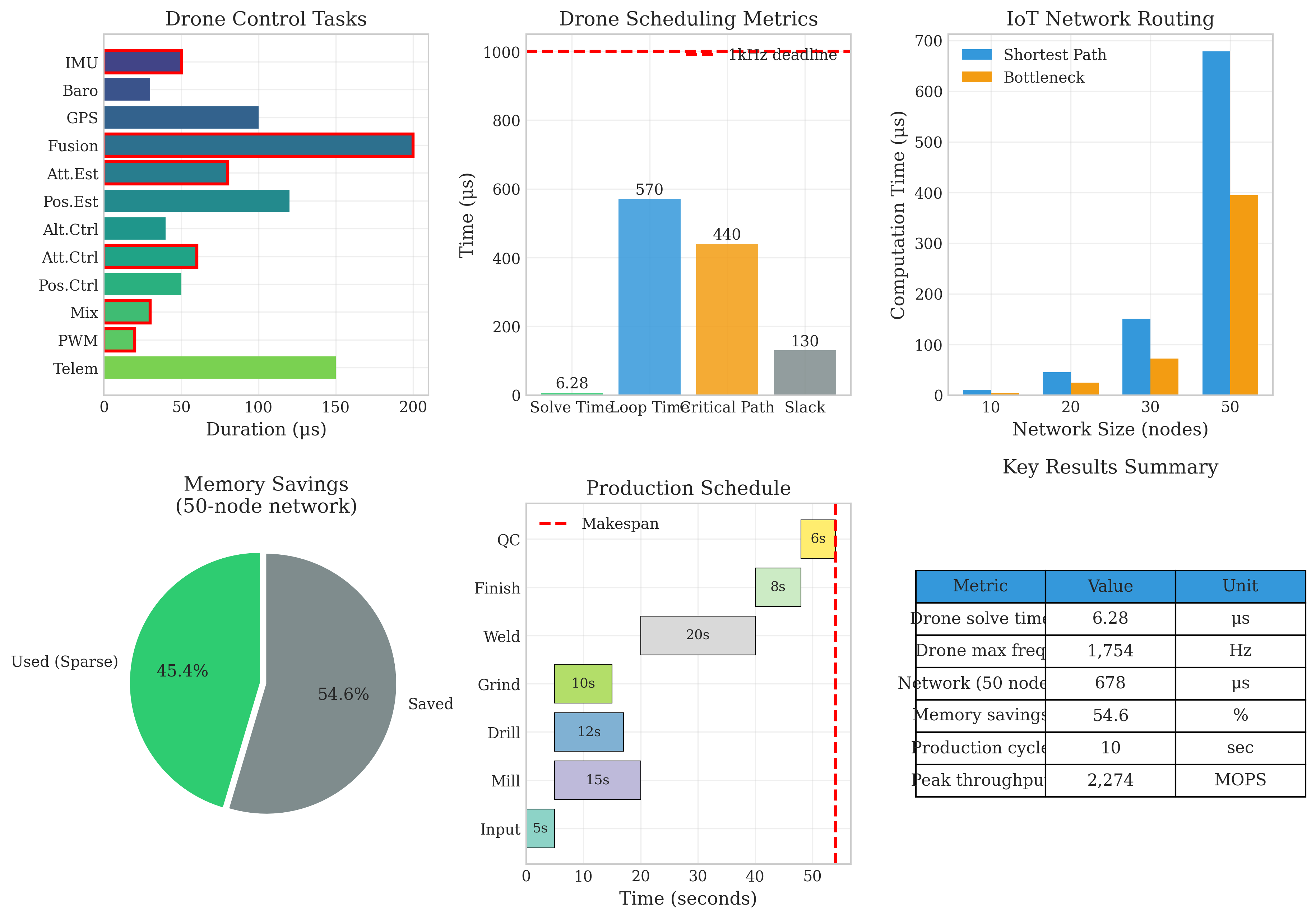}
    \caption{Case study dashboard summarizing results across all three applications. (a) Drone task durations with critical path highlighted. (b) Drone scheduling metrics. (c) Network routing times vs. network size. (d) Network memory savings. (e) Production line Gantt chart. (f) Key results summary table.}
    \label{fig:case-studies}
\end{figure}

\section{Related Work}
\label{sec:related}

We organize related work into four categories, identifying the gap that \palma{} fills at their intersection.

\subsection{Tropical Algebra Software}

The theoretical foundations of tropical algebra were established by Cuninghame-Green \cite{cuninghame1979minimax} and comprehensively developed by Baccelli et al. \cite{baccelli1992synchronization}. Butkovič \cite{butkovic2010max} provides algorithmic perspectives, while Heidergott et al. \cite{heidergott2006max} offer an accessible introduction.

\textbf{Mathematical packages} for tropical computation include:
\begin{itemize}
    \item \textbf{ScicosLab MaxPlus Toolbox} \cite{gaubert2009max}: MATLAB-like environment for discrete event systems. Requires graphical desktop environment and Scilab runtime.
    \item \textbf{polymake} \cite{gawrilow2000polymake, hampe2018tropical}: Comprehensive computational geometry system. Requires Perl, C++ compiler toolchain, and gigabytes of dependencies.
    \item \textbf{Macaulay2 Tropical}: Algebraic geometry focus. Requires full computer algebra system.
\end{itemize}

\textbf{Gap}: All existing tropical algebra tools target desktop/server environments with heavy dependencies, making them unsuitable for embedded deployment.

\subsection{Semiring Frameworks for Graph Algorithms}

Mohri \cite{mohri2002semiring} formalized the connection between semirings and shortest-path algorithms, showing Dijkstra and Bellman-Ford as semiring specializations. This algebraic view enables algorithm unification.

\textbf{Generic graph libraries} include:
\begin{itemize}
    \item \textbf{Boost Graph Library} \cite{siek2002boost}: C++ templates supporting custom semirings, but complex metaprogramming and large binary sizes.
    \item \textbf{LEMON} \cite{dezso2011lemon}: Cleaner C++ interface, but still template-heavy and desktop-oriented.
    \item \textbf{cuASR}: GPU semiring GEMM for CUDA. Server-class GPUs only.
\end{itemize}

\textbf{Gap}: Generic semiring frameworks exist but require C++ templates or GPU hardware, not suitable for embedded C environments.

\subsection{Embedded Graph and Scheduling Computation}

Real-time and embedded systems require efficient graph algorithms:
\begin{itemize}
    \item Bellman-Ford optimizations for sensor networks \cite{ding2020energy} focus on energy but not algebraic generality.
    \item Liu and Layland \cite{liu1973scheduling} established real-time scheduling theory, but classical algorithms lack the tropical algebraic structure enabling unified treatment.
    \item Goverde \cite{goverde2007railway} applied max-plus to railway scheduling, demonstrating industrial relevance but using desktop tools.
\end{itemize}

\textbf{Gap}: Embedded graph algorithms are implemented as isolated special cases, missing the unifying semiring abstraction.

\subsection{SIMD Linear Algebra on ARM}

ARM NEON optimization has been extensively studied:
\begin{itemize}
    \item Mitra et al. \cite{mitra2013simd} achieved 1.05--13.88$\times$ speedup with hand-tuned NEON for scientific kernels.
    \item Park et al. \cite{park2018efficient} demonstrated 36.93\% improvement for matrix operations in cryptography.
    \item Recent FFT work \cite{lajpop2025fft} achieved dramatic speedups on Cortex-A72.
\end{itemize}

\textbf{Gap}: NEON optimization targets conventional (field) linear algebra, not idempotent semiring operations.

\subsection{Summary: The \palma{} Contribution}

\begin{table}[htbp]
\centering
\caption{Positioning of \palma{} relative to existing work}
\label{tab:positioning}
\small
\begin{tabular}{lcccc}
\toprule
\textbf{System} & \textbf{Tropical} & \textbf{Embedded} & \textbf{SIMD} & \textbf{Sparse} \\
\midrule
ScicosLab MaxPlus & \checkmark & & & \\
polymake & \checkmark & & & \\
Boost Graph Library & \checkmark & & & \\
cuASR & \checkmark & & \checkmark (GPU) & \checkmark \\
Embedded Bellman-Ford & & \checkmark & & \\
\midrule
\textbf{\palma{}} & \checkmark & \checkmark & \checkmark (NEON) & \checkmark \\
\bottomrule
\end{tabular}
\end{table}

\palma{} uniquely combines tropical algebra with embedded-optimized implementation, filling the gap at the intersection of these four research areas.

\subsection{Scope and Limitations}

\palma{} is designed for small to medium-scale optimization problems arising in embedded and real-time systems, where matrix dimensions typically range from tens to a few thousand and where deterministic execution, memory efficiency, and low deployment overhead are critical. While the library supports both dense and sparse representations and benefits from SIMD acceleration, it is not intended to compete with highly optimized HPC or GPU-based linear algebra frameworks on very large graphs or matrices. In particular, algorithms with cubic worst-case complexity, such as tropical closure, remain impractical for large-scale inputs regardless of implementation optimizations. Moreover, \palma{} focuses on idempotent semirings with integer-valued weights; applications requiring floating-point precision throughout, dynamic graph updates at scale, or asymptotically faster parallel algorithms fall outside the current scope. These limitations are deliberate and reflect a design choice to prioritize correctness, predictability, and portability on resource-constrained platforms rather than absolute peak throughput on high-end hardware.

\section{Conclusion and Future Work}
\label{sec:conclusion}

\subsection{Summary}

We have presented \palma{}, a comprehensive tropical algebra library designed for ARM-based embedded systems. Our central insight is that \emph{tropical algebra serves as a unifying computational abstraction for embedded optimization}: shortest paths, bottleneck analysis, reachability, scheduling, and throughput computation, traditionally requiring separate algorithms, all reduce to matrix operations over appropriate semirings. This unification is particularly valuable on resource-constrained platforms where code size matters.

Our technical contributions include:

\begin{enumerate}
    \item \textbf{Theoretical foundation}: A self-contained treatment of tropical semirings, linear algebra, and spectral theory with rigorous definitions and proofs.
    
    \item \textbf{Unified implementation}: A generic semiring interface supporting five tropical algebras (max-plus, min-plus, max-min, min-max, Boolean) with both dense and CSR sparse representations.
    
    \item \textbf{SIMD acceleration}: ARM NEON-optimized kernels achieving up to 1.80$\times$ speedup, with analysis identifying the cache-bound regime where vectorization is most effective.
    
    \item \textbf{Experimental validation}: Comprehensive benchmarks demonstrating 2,274 MOPS peak performance, 11.9$\times$ speedup over Bellman-Ford through better cache utilization, and 54.6\% memory savings with sparse representation.
    
    \item \textbf{Real-world applicability}: Three case studies (drone control, IoT routing, manufacturing) demonstrating that \palma{} enables 1.7+ kHz control loops and sub-millisecond routing decisions on Raspberry Pi-class hardware.
\end{enumerate}

\palma{} fills a significant gap at the intersection of tropical mathematics and embedded systems, enabling on-device optimization for cyber-physical systems. The library is released as open-source software under the MIT license, comprising approximately 2,000 lines of portable, dependency-free C99 code. By decoupling algebraic semantics from hardware-specific optimizations, \palma{} provides a foundation that can naturally extend to future embedded architectures and evolving real-world optimization workloads.

\subsection{Future Work}

Several promising directions emerge from the current design of \palma{}.

\paragraph{Parallelism and Heterogeneous Platforms.}
While the present implementation targets single-core SIMD acceleration, future work will explore task-level and data-level parallelism across multiple cores, including OpenMP-based parallelization and careful scheduling of matrix operations to respect real-time constraints. Extending \palma{} to exploit heterogeneous embedded platforms, such as ARM systems with integrated GPUs or DSP accelerators, is another natural direction.

\paragraph{Dynamic and Incremental Updates.}
Although \palma{} currently focuses on static graphs and matrices, many embedded applications involve slowly varying or event-driven updates. Supporting incremental and dynamic updates, such as localized recomputation of tropical closures or eigenvalues under edge insertions or weight changes, would significantly broaden applicability while preserving efficiency.

\paragraph{Algorithmic Improvements.}
From an algorithmic perspective, future work will investigate asymptotically faster or more structure-aware algorithms tailored to embedded settings, including pruning strategies for sparse tropical closure, early termination criteria informed by graph topology, and approximate methods with bounded error guarantees for large or dense instances.

\paragraph{Richer Algebraic Structures.}
Extending the semiring interface to support richer algebraic structures is of interest. This includes parameterized semirings, lexicographic or multi-criteria tropical algebras, and hybrid symbolic-numeric formulations that could enable multi-objective optimization within a unified framework.

\paragraph{Domain-Specific Integration.}
Finally, tighter integration with domain-specific applications represents an important avenue. Embedding \palma{} within control loops for robotics, real-time scheduling frameworks, or networked IoT middleware, and validating its impact in closed-loop systems, would further demonstrate how tropical algebra can serve as a practical computational backbone for embedded optimization.

\subsection{Availability}

\palma{} source code, documentation, and examples are available at:
\begin{center}
    \url{https://github.com/ReFractals/palma}
\end{center}

\section*{Acknowledgments}

The author thanks the Axiom Research Group for valuable discussions, NM-AIST and AIMS-RIC for institutional support, and the Raspberry Pi Foundation for making affordable computing accessible to researchers worldwide.

\bibliographystyle{plain}
\bibliography{references}

\appendix

\section{API Reference}
\label{app:api}

This appendix provides a complete reference for the \palma{} public API.

\subsection{Core Types}

\begin{lstlisting}[caption={Core type definitions}]
typedef int32_t palma_val_t;    // Tropical value type
typedef uint32_t palma_idx_t;   // Index type for sparse matrices

typedef enum {
    PALMA_MAXPLUS = 0,
    PALMA_MINPLUS = 1,
    PALMA_MAXMIN  = 2,
    PALMA_MINMAX  = 3,
    PALMA_BOOLEAN = 4
} palma_semiring_t;

typedef struct {
    size_t rows, cols;
    palma_val_t *data;
} palma_matrix_t;

typedef struct {
    size_t rows, cols, nnz;
    palma_val_t *values;
    palma_idx_t *col_idx, *row_ptr;
    palma_semiring_t semiring;
} palma_sparse_t;
\end{lstlisting}

\subsection{Matrix Operations}

\begin{lstlisting}[caption={Matrix operation signatures}]
// Creation and destruction
palma_matrix_t* palma_matrix_create(size_t rows, size_t cols);
palma_matrix_t* palma_matrix_create_zero(size_t rows, size_t cols, 
                                          palma_semiring_t s);
palma_matrix_t* palma_matrix_clone(const palma_matrix_t *A);
void palma_matrix_destroy(palma_matrix_t *mat);

// Element access
palma_val_t palma_matrix_get(const palma_matrix_t *A, size_t i, size_t j);
void palma_matrix_set(palma_matrix_t *A, size_t i, size_t j, palma_val_t v);

// Arithmetic
palma_matrix_t* palma_matrix_add(const palma_matrix_t *A, 
                                  const palma_matrix_t *B, palma_semiring_t s);
palma_matrix_t* palma_matrix_mul(const palma_matrix_t *A,
                                  const palma_matrix_t *B, palma_semiring_t s);
palma_matrix_t* palma_matrix_power(const palma_matrix_t *A, int k, 
                                    palma_semiring_t s);
palma_matrix_t* palma_matrix_closure(const palma_matrix_t *A, 
                                      palma_semiring_t s);

// Vector operations
void palma_matvec(const palma_matrix_t *A, const palma_val_t *x,
                  palma_val_t *y, palma_semiring_t s);
\end{lstlisting}

\subsection{Graph Algorithms}

\begin{lstlisting}[caption={Graph algorithm signatures}]
void palma_single_source_paths(const palma_matrix_t *A, size_t src,
                                palma_val_t *dist, palma_semiring_t s);
palma_matrix_t* palma_all_pairs_paths(const palma_matrix_t *A, 
                                       palma_semiring_t s);
palma_matrix_t* palma_reachability(const palma_matrix_t *A);
palma_matrix_t* palma_bottleneck_paths(const palma_matrix_t *A);

palma_val_t palma_eigenvalue(const palma_matrix_t *A, palma_semiring_t s);
void palma_eigenvector(const palma_matrix_t *A, palma_val_t *v,
                       palma_val_t *lambda, palma_semiring_t s, int max_iter);
int palma_critical_nodes(const palma_matrix_t *A, int *critical, 
                         palma_semiring_t s);
\end{lstlisting}

\section{Proofs of Additional Theorems}
\label{app:proofs}

\begin{theorem}[Distributivity of Tropical Operations]
In any tropical semiring, the distributive laws hold for vectors and matrices.
\end{theorem}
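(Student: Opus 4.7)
The plan is to reduce matrix-level distributivity to the scalar distributive axiom from Definition~\ref{def:semiring}, working entry-wise. Since every tropical matrix expression unfolds to a finite combination of scalar $\oplus$ and $\otimes$ applied to entries, the matrix identities should follow once scalar distributivity is invoked and the resulting double sum is regrouped.

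First I would fix arbitrary indices $i, j$ and expand the left-hand side of the left distributive law $\mat{A} \otimes (\mat{B} \oplus \mat{C})$ using the definitions of matrix sum and matrix product to obtain
\[
(\mat{A} \otimes (\mat{B} \oplus \mat{C}))_{ij} = \bigoplus_{k} A_{ik} \otimes (B_{kj} \oplus C_{kj}).
\]
Applying the scalar left distributive law inside the tropical sum yields $\bigoplus_k \bigl((A_{ik} \otimes B_{kj}) \oplus (A_{ik} \otimes C_{kj})\bigr)$. I would then split this into two tropical sums, an operation justified by associativity and commutativity of $\oplus$, to recover $(\mat{A} \otimes \mat{B})_{ij} \oplus (\mat{A} \otimes \mat{C})_{ij}$, completing the entry-wise identity. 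Right distributivity follows by the symmetric argument using scalar right distributivity, and the vector case is simply the column-matrix specialization of the matrix case, so no separate proof is required.

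The only nontrivial step is the splitting lemma $\bigoplus_k (x_k \oplus y_k) = \bigl(\bigoplus_k x_k\bigr) \oplus \bigl(\bigoplus_k y_k\bigr)$, which I would justify briefly by induction on the index range using only the commutative monoid structure of $(S, \oplus, \mathbf{0})$. The hard part, such as it is, lies in keeping the index bookkeeping transparent during this regrouping; idempotency and the specific choice of semiring play no role, so the proof applies uniformly to all five semirings of Section~\ref{subsec:tropical-semirings}.
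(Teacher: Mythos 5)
Your proposal is correct and follows essentially the same route as the paper's proof: expand the $(i,j)$ entry, apply scalar distributivity inside the tropical sum, and regroup the resulting sum using commutativity and associativity of $\oplus$. Your explicit statement of the splitting lemma $\bigoplus_k (x_k \oplus y_k) = \bigl(\bigoplus_k x_k\bigr) \oplus \bigl(\bigoplus_k y_k\bigr)$ is just a named version of the regrouping step the paper performs implicitly, so the two arguments coincide.
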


\begin{proof}
For matrices $\mat{A}, \mat{B}, \mat{C}$ of compatible dimensions:
\begin{align*}
    (\mat{A} \otimes (\mat{B} \oplus \mat{C}))_{ij} &= \bigoplus_k A_{ik} \otimes (B_{kj} \oplus C_{kj}) \\
    &= \bigoplus_k (A_{ik} \otimes B_{kj}) \oplus (A_{ik} \otimes C_{kj}) \\
    &= \left(\bigoplus_k A_{ik} \otimes B_{kj}\right) \oplus \left(\bigoplus_k A_{ik} \otimes C_{kj}\right) \\
    &= (\mat{A} \otimes \mat{B})_{ij} \oplus (\mat{A} \otimes \mat{C})_{ij}
\end{align*}
The second line uses scalar distributivity; the third uses commutativity and associativity of $\oplus$.
\end{proof}

\begin{theorem}[Uniqueness of Tropical Eigenvalue for Irreducible Matrices]
If $\mat{A}$ is irreducible (its graph is strongly connected), then $\mat{A}$ has a unique tropical eigenvalue.
\end{theorem}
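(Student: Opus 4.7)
The plan is to establish uniqueness by showing that every tropical eigenvalue of an irreducible $\mat{A}$ must equal the maximum cycle mean of $G(\mat{A})$ from Definition~\ref{def:weighted-digraph}--\ref{thm:mcm}, a single scalar determined by $\mat{A}$ alone. The structure parallels the sketch supplied for Theorem~\ref{thm:mcm}: from any eigenpair, produce one cycle of mean exactly $\lambda$ and simultaneously bound every other cycle mean above by $\lambda$. Two eigenvalues forced to equal this common scalar must agree.

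First I would fix an eigenpair $(\lambda, \vect{v})$ with $\vect{v} \neq \mathbf{0}$ and use irreducibility to show that $\vect{v}$ has all entries finite. Let $S = \{i : v_i = -\infty\}$. The $i$th component of $\mat{A} \otimes \vect{v} = \lambda \otimes \vect{v}$ forces the following: whenever $i \in S$, every term $A_{ij} + v_j$ in the defining max must equal $-\infty$, so $v_j = -\infty$ at each out-neighbor $j$ of $i$ in $G(\mat{A})$. Thus $S$ is closed under out-neighbors; strong connectivity then gives $S \in \{\emptyset, V\}$, and $\vect{v} \neq \mathbf{0}$ eliminates the full case. With finiteness in hand the eigenequation reads, componentwise,
\begin{equation*}
\max_{j}\bigl(A_{ij} + v_j\bigr) = \lambda + v_i, \qquad i = 1, \dots, n,
\end{equation*}
yielding both the pointwise inequality $A_{ij} + v_j \leq \lambda + v_i$ for every $j$ and an index $j^*(i)$ attaining equality.

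Next, I would iterate $i \mapsto j^*(i)$ from any starting vertex. Because $G(\mat{A})$ is finite, the resulting walk must revisit a vertex, carving out a directed cycle $\sigma$ of length $p$ along which every step is the equality case. Summing those $p$ equalities around $\sigma$ telescopes the $v$-terms, leaving the total cycle weight equal to $p\lambda$; hence $\mu(\sigma) = \lambda$, so $\lambda$ is attained by at least one cycle mean. Conversely, summing the generic inequality around an arbitrary cycle $\sigma'$ telescopes $v$ identically and yields $\mu(\sigma') \leq \lambda$. Combining these two facts, $\lambda = \max_{\sigma' \in \mathcal{C}} \mu(\sigma')$, a quantity depending only on $\mat{A}$, so any two eigenvalues of $\mat{A}$ coincide.

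The step I expect to be most delicate is the finiteness of $\vect{v}$: Definition~\ref{def:eigenvalue} only requires nonzeroness in the semiring sense, so in principle $\vect{v}$ could mix finite and $-\infty$ entries, and without the forward-closure argument the ensuing telescoping collapses because mixed entries prevent clean cancellation of $v$-terms around a cycle. This is also the only place where strong connectivity is actively used; once finiteness is secured, the remaining manipulations are routine applications of the semiring axioms and the cycle-mean definition.
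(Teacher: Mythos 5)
Your proof is correct, and it is substantially more self-contained than the paper's, which simply defers to Baccelli et al.\ (Chapter 3) without argument. In effect you prove the sharper statement of Theorem~\ref{thm:mcm} --- that any eigenvalue of an irreducible $\mat{A}$ must equal $\max_{\sigma}\mu(\sigma)$ --- and read off uniqueness as a corollary, whereas the paper's sketch of Theorem~\ref{thm:mcm} silently assumes the eigenvector is finite everywhere so that the telescoping sums make sense. Your forward-closure argument for the set $S=\{i: v_i=-\infty\}$ is exactly the missing lemma: it is the one place irreducibility is genuinely used, and it also implicitly rules out $\lambda=-\infty$ (since each row's max is then bounded below by a finite term $A_{ij^*}+v_{j^*}$). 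The cycle extraction via iterating the argmax map $i\mapsto j^*(i)$ and the two telescoping computations (equality around the extracted cycle, inequality around an arbitrary one) are the standard and correct route. Two minor remarks: first, the theorem as phrased (``has a unique tropical eigenvalue'') could be read as also asserting existence, which your argument does not establish --- constructing an eigenvector requires a separate device such as a column of the Kleene star of $(-\lambda)\otimes\mat{A}$ restricted to the critical graph --- but the theorem's title and the paper's context make clear that uniqueness is the intended content. Second, your edge orientation ($A_{ij}$ as the weight from $i$ to $j$) matches Definition~\ref{def:weighted-digraph}, even though the paper's cycle-mean formula writes $A_{i_{j+1},i_j}$ with the indices reversed; that is an inconsistency in the paper, not in your proof.
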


\begin{proof}
See Baccelli et al. \cite{baccelli1992synchronization}, Chapter 3. The key insight is that all strongly connected components contribute to the same eigenvalue, which equals the maximum cycle mean over all elementary cycles.
\end{proof}

\end{document}